\theoremstyle{plain}
\newtheorem{theorem}{Theorem}[section]
\newtheorem{lemma}[theorem]{Lemma}
\newtheorem{proposition}[theorem]{Proposition}
\theoremstyle{definition}
\newtheorem{definition}[theorem]{Definition}
\newtheorem{example}[theorem]{Example}
\newcommand{\set}[1]{\{#1\}}
\newcommand{\st}{\ :\ }
\newcommand{\torus}{\mathbb{T}}
\newcommand{\nat}{\mathbb{N}}
\newcommand{\intg}{\mathbb{Z}}
\newcommand{\rel}{\mathbb{R}}
\newcommand{\rat}{\mathbb{Q}}
\newcommand{\com}{\mathbb{C}}
\newcommand{\ii}{\mathbf{i}}
\newcommand{\sgn}{\mathrm{sgn}}
\newcommand{\lrs}{{\sc lrs}\xspace}
\newcommand{\mso}{{\sc mso}\xspace}
\newcommand{\ltl}{{\sc ltl}\xspace}
\newcommand{\lu}{\mathbf{u}}
\newcommand{\lv}{\mathbf{v}}
\newcommand{\sq}[1]{\langle #1 \rangle_{n\in\nat}}
\newcommand{\ie}{{\em i.e.}\xspace}
\newcommand{\eg}{{\em e.g.}\xspace}
\newcommand{\etc}{{\em etc.}\xspace}
\newcommand\defeq{\mathrel{\overset{\makebox[0pt]{\mbox{\normalfont\tiny\sffamily def}}}{=}}}
\newcommand{\aut}{\mathcal{A}}
\newcommand{\pow}{\mathcal{P}}
\newcommand{\inter}{\mathrm{inter}}
\newcommand{\red}[1]{{\color{red} #1}}
\newcommand{\mon}{\mathfrak{M}}
\newcommand{\mx}{\bar{x}}
\newcommand{\my}{\bar{y}}
\declaretheorem[name=Remark,numberwithin=section,style=remark]{remark}
\author[1]{Shaull Almagor}
\author[2]{Toghrul Karimov}
\author[3]{Edon Kelmendi}
\author[2,3]{J\"oel~Ouaknine}
\author[3]{James Worrell}
\affil[1]{Technion, Israel}
\affil[2]{Max Planck Institute for Software Systems, Germany}
\affil[3]{Oxford University, UK}
\title{Deciding $\omega$-Regular Properties on Linear Recurrence Sequences}
\begin{document}
\maketitle

\begin{abstract}
  We consider the problem of deciding $\omega$-regular properties on
  infinite traces produced by linear loops.  Here we think of a given
  loop as producing a single infinite trace that encodes information
  about the signs of program variables at each time step.  Formally,
  our main result is a procedure that inputs a prefix-independent
  $\omega$-regular property and a sequence of numbers satisfying a
  linear recurrence, and determines whether the sign description of
  the sequence (obtained by replacing each positive entry with
  ``$+$'', each negative entry with ``$-$'', and each zero entry with
  ``$0$'') satisfies the given property.  Our procedure requires that
  the recurrence be simple, \ie, that the update matrix of the
  underlying loop be diagonalisable.  This assumption is instrumental
  in proving our key technical lemma: namely that the sign description
  of a simple linear recurrence sequence is almost periodic in the
  sense of Muchnik, Sem\"enov, and Ushakov.  To complement this lemma, we
  give an example of a linear recurrence sequence whose sign
  description fails to be almost periodic. Generalising from sign
  descriptions, we also consider the verification of properties involving
  semi-algebraic predicates on program variables.
%We consider the sign descriptions of linear recurrence sequences. This
%is the infinite word that results by replacing positive elements with
%the letter $+$, negative elements with the letter $-$, and leaving $0$
%as is. We prove that it is decidable whether the sign description of a
%simple linear recurrence sequence has a given prefix-independent
%$\omega$-regular property. This is achieved by showing that such
%sequences are almost-periodic. To complement this result, we prove
%that sign descriptions of general linear recurrence sequences need not
%be almost-periodic.
\end{abstract}

%% 2012 ACM Computing Classification System (CSS) concepts
%% Generate at 'http://dl.acm.org/ccs/ccs.cfm'.
% \begin{CCSXML}
% <ccs2012>
%    <concept>
%        <concept_id>10003752.10010124.10010138.10010142</concept_id>
%        <concept_desc>Theory of computation~Program verification</concept_desc>
%        <concept_significance>500</concept_significance>
%        </concept>
%    <concept>
%        <concept_id>10003752.10003790.10011192</concept_id>
%        <concept_desc>Theory of computation~Verification by model checking</concept_desc>
%        <concept_significance>500</concept_significance>
%        </concept>
%    <concept>
%        <concept_id>10003752.10003790.10002990</concept_id>
%        <concept_desc>Theory of computation~Logic and verification</concept_desc>
%        <concept_significance>500</concept_significance>
%        </concept>
%  </ccs2012>
% \end{CCSXML}

% \ccsdesc[500]{Theory of computation~Program verification}
% \ccsdesc[500]{Theory of computation~Verification by model checking}
% \ccsdesc[500]{Theory of computation~Logic and verification}
% %% End of generated code

%% Keywords
%% comma separated list
%\keywords{linear loops, linear recurrence sequences, omega-regular properties, almost periodic words}  %% \keywords are mandatory in final camera-ready submission

%% \maketitle
%% Note: \maketitle command must come after title commands, author
%% commands, abstract environment, Computing Classification System
%% environment and commands, and keywords command.
\maketitle

\section{Introduction}
The decidability of monadic second-order logic (\mso) over the structure $\langle \mathbb{N},< \rangle$ is a pillar of the theory of automated verification~\cite{Buchi62}.  Shortly after B\"{u}chi established this result, \citet{elgot66_decid_undec_exten_secon_order_theor_succes} began to investigate unary predicates $\boldsymbol{P} \subseteq \mathbb{N}$ for which the \mso theory of the structure $\langle \mathbb{N},<,\boldsymbol{P} \rangle$ remains decidable.  For example, decidability is known in case $\boldsymbol{P}$ denotes, respectively, the set of factorial numbers $\set{n!:n\in\nat}$, the set $\set{n^k:n\in\nat}$ for a fixed but arbitrary natural $k$, and the set of $k$-powers $\{k^n:n\in\nat\}$ for every fixed $k\in\mathbb{N}$.  On the other hand, there are natural examples of predicates for which decidability is open and apparently difficult, e.g., for $\boldsymbol{P}$ the set of primes, see~\cite{Bateman1993}.\footnote{Note that the twin primes conjecture from number theory (that there are infinitely many pairs of primes that differ by two) can be formulated in \mso with the primality predicate.}  In general, the decision problem for \mso over $\langle\mathbb{N},<,\boldsymbol{P}\rangle$ reduces to the problem of checking membership of a fixed $\omega$-word (namely the characteristic word of the predicate $\boldsymbol{P}$) in an $\omega$-regular language $\mathcal{L}$ that represents the formula whose truth is to be determined.

\citet{semenov84_logic_theor_one_place_funct} gave a characterisation of those predicates $\boldsymbol{P}$ for which the \mso theory of $\langle \mathbb{N},<,\boldsymbol{P} \rangle$ is decidable.  This line of work was continued in~\cite{carton02_monad_theor_morph_infin_words_gener,rabinovich07_decid_monad_logic_order_over}.
%leading to a large new classes of predicates with decidable \mso
%theory, including predicates whose characteristic words are morphic
%sequences.  
An important sufficient condition for decidability is that the
characteristic sequence of $\boldsymbol{P}$ be effectively \emph{almost periodic}:
a notion originating in symbolic
dynamics~\cite{morse38_symbol_dynam,muchnik03_almos_period_sequen}.
Roughly speaking, a sequence is almost periodic if for any pattern
that occurs infinitely often, the gaps between successive occurrences
are bounded.  A classical example of an almost periodic sequence is
the Thue-Morse sequence, \emph{i.e.}, the sequence whose $n$-th entry is the
parity of the number $1$'s in the binary expansion of $n$.  Another
example is the characteristic sequence of the predicate
$\boldsymbol{P}=\{ n \in \mathbb{N}:\sin(n\theta)>0\}$ for a fixed
real number $\theta$.  The notion of almost periodicity will be
instrumental for the results of this paper.

From the point of view of program analysis, it is natural to consider
extensions of \mso with unary predicates that encode properties of program
variables at each time step.  For example, consider a linear loop 
\[ \mathbf{while}\; \mathrm{true}\; \mathbf{do} \; 
\begin{pmatrix} x\\y\\z \end{pmatrix} \leftarrow
\begin{pmatrix} -2&4&0\\
                 4&0&0\\
                 1&0&0
\end{pmatrix}
\begin{pmatrix} x\\y\\z \end{pmatrix} \, . \]
Given initial values of the program variables, suppose we want to
determine whether the variable $x$ is ultimately increasing.  Noting
that variable $z$ stores the previous value of $x$, we equivalently
want to determine ultimate positivity of the sequence $\langle u_n
\rangle_{n\in\mathbb{N}}$ defined by $u_n:=x_n-z_n$, where $x_n$ and
$z_n$ are the respective values of variables $x$ and $z$ after $n$
executions of the loop body.  This property can be written in \mso as
$\exists m \, \forall n \cdot (n\ge m \Rightarrow \boldsymbol{P}(n))$
where $\boldsymbol{P} = \set{ n \in \mathbb{N} : u_n > 0}$.
Clearly we can use second-order quantification in \mso to express more
complex properties, e.g., that variable $x$ only increases on even
steps of the execution.

In the above example, the sequence $\lu=\langle u_n
\rangle_{n\in\mathbb{N}}$ satisfies the recurrence
$u_{n+2}=-2u_{n+1}+16u_n$.  In general, for any linear loop and any
polynomial function on the program variables, the sequence of values
assumed by the function along an infinite execution of the loop is a
linear recurrence sequence (\lrs).  This observation motivates the
central object of study in this paper: the decidability of the \mso
theory of the structure $\mathcal{S}_\lu:=\langle
\mathbb{N},\leq,\boldsymbol{P},\boldsymbol{Z},\boldsymbol{N}\rangle$
associated with an \lrs $\lu$, where
$\boldsymbol{P}:=\{n\in\mathbb{N}:u_n>0\}$,
$\boldsymbol{Z}:=\{n\in\mathbb{N}:u_n=0\}$, and
$\boldsymbol{N}:=\{n\in\mathbb{N}:u_n<0\}$.  This structure can be
represented by the \emph{sign description} ${\langle
  \mathrm{sgn}(u_n)\rangle_{n \in \mathbb{N}}} \in {\{+,0,-\}^\omega}$
of $\lu$, which is defined in the obvious way.

Computational problems concerning sign descriptions of \lrs are
notoriously difficult.  For example, decidability of the Skolem
Problem \emph{``does a given \lrs have a zero term?''} has been open
for many decades.  Decidability of the Positivity Problem: \emph{``are
  all terms of a given \lrs positive?''} is likewise a longstanding
open
problem~\cite{ouaknine13_posit_probl_low_order_linear_recur_sequen,SalomaaS78}.
In view of these difficulties, we restrict attention to the class of
\emph{simple} \lrs, \emph{i.e.}, those such that the characteristic
polynomial of the defining recurrence has simple roots.  In terms of
our motivating example of linear loops, the associated \lrs are simple
whenever the update matrix of the loop is diagonalisable.  Our main
technical lemma shows that the sign description of every simple \lrs
is (effectively) almost periodic.  We moreover give an example showing
that almost periodicity fails without the assumption of simplicity.

Using the fact that simple \lrs have effectively almost periodic sign
descriptions, we establish our first main result:
\begin{restatable}{thm}{secondmainth}
  \label{th:secondmain}
For every fixed simple \lrs $\lu$ of rational numbers, it is decidable
whether the sign description of $\lu$ lies in a given $\omega$-regular
language $\mathcal L$.
\end{restatable}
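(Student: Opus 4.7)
The plan is to combine the paper's key technical lemma—that the sign description of every simple \lrs is effectively almost periodic—with the classical observation (going back to Sem\"enov's work, and refined by Muchnik, Sem\"enov and Ushakov) that effectively almost periodic sequences have decidable $\omega$-regular theories. Set $\sigma \defeq \langle \sgn(u_n) \rangle_{n \in \nat}$. By the key lemma we may compute, for every $\ell \in \nat$, the finite set $F_\ell \subseteq \{+,0,-\}^\ell$ of length-$\ell$ factors occurring infinitely often in $\sigma$, a threshold $N_\ell$ beyond which no other length-$\ell$ factor of $\sigma$ appears, and a uniform-recurrence bound $L_\ell$ such that every length-$L_\ell$ window of $\sigma$ starting after position $N_\ell$ contains each element of $F_\ell$.

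Next, convert the given $\omega$-regular language $\mathcal L$ into an equivalent deterministic Muller automaton $\mathcal A = (Q, \{+,0,-\}, \delta, q_0, \mathcal F)$, e.g.\ via Safra's construction. Membership of $\sigma$ in $\mathcal L$ is then equivalent to $\inter(\rho) \in \mathcal F$, where $\rho$ is the unique run of $\mathcal A$ on $\sigma$ and $\inter(\rho) \subseteq Q$ denotes the set of states visited infinitely often by $\rho$. To compute $\inter(\rho)$, consider the augmented sequence $\sigma' \defeq \langle (\sigma_n, q_n) \rangle_{n \in \nat}$ over $\{+,0,-\} \times Q$, where $q_n$ is the state of $\mathcal A$ reached after reading $\sigma_0 \cdots \sigma_{n-1}$. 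A standard argument—tracking, via pigeonhole over the finite state set $Q$ together with the uniform-recurrence bounds $L_\ell$, the finitely many ``types'' (\ie possible starting states $q_n$) with which each $w \in F_\ell$ occurs infinitely often—shows that $\sigma'$ is again effectively almost periodic, and that its almost-periodicity data is computable from that of $\sigma$. From this one directly reads off the set of symbols $(a,q) \in \{+,0,-\} \times Q$ appearing infinitely often in $\sigma'$, and hence $\inter(\rho)$.

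The principal obstacle is, unsurprisingly, the key technical lemma itself: showing that sign descriptions of simple \lrs are effectively almost periodic requires a delicate analysis of the dynamics of diagonalisable linear maps and is handled separately in the paper. Given that lemma, the reduction above follows a well-trodden path for decision procedures over almost periodic words; the only care required is in confirming that effectiveness propagates from $\sigma$ to the augmented sequence $\sigma'$, which is a routine induction over factor lengths.
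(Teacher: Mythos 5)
Your proposal is correct and follows the same skeleton as the paper's own proof: for a fixed $\lu$ the ineffective constant from \Cref{lem:same sign} is a fixed number that can be hard-coded into the procedure, whence the sign description $\sigma$ is effectively almost periodic (this is exactly how the paper argues, combining the fixed bound $\tilde n_0$ with \Cref{prop:properties} to compute the bound $p$ of \Cref{def:ap} for every pattern), and decidability then follows because effectively almost periodic words have decidable $\omega$-regular (indeed \mso) theories. The one divergence is in how that last fact is handled: the paper simply invokes Sem\"enov's theorem (\Cref{thm:semenov}) as a black box, whereas you unfold a proof of the special case you need --- determinise $\mathcal L$ into a Muller automaton, form the augmented word $\sigma'$ over $\{+,0,-\}\times Q$ recording the run, argue that $\sigma'$ is again effectively almost periodic, and read off the infinity set of the run. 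That route does work, but its pivotal claim --- that pairing an effectively almost periodic word with the run of a deterministic automaton on it preserves effective almost periodicity --- is not the ``routine induction over factor lengths'' you describe: it is precisely the substance of the Muchnik--Sem\"enov--Ushakov closure theorem for finite-automaton mappings, and hence essentially of Sem\"enov's theorem itself; a from-scratch proof requires a genuine combinatorial argument (compare the monoid and increasing-product machinery the paper develops in \Cref{sec:omegareg} for the uniform, prefix-independent variant, which is what such an argument looks like when spelled out). So you should either cite that closure theorem explicitly --- which makes your proof a mild rephrasing of the paper's --- or budget for a real proof of it; as written, your second step compresses the hardest part of the reduction into a parenthetical.
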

This result yields a large new class of structures with a decidable \mso
theory, namely each structure $\mathcal{S}_\lu$ for $\lu$ a simple
\lrs.  

We emphasize that Theorem~\ref{th:secondmain} states the existence of
a decision procedure for every fixed \lrs $\lu$.  For our application
of model checking linear loops, it is more natural to consider the \lrs
$\lu$ as part of the input to the decision procedure, since the
definition of $\lu$ depends on the loop.  This leads to our second
main result:
\begin{restatable}{thm}{mainth}
  \label{th:main}
  Given a prefix-independent $\omega$-regular language $\mathcal L$
  and a simple \lrs $\lu$, it is decidable whether the sign
  description of $\lu$ belongs to $\mathcal L$.
\end{restatable}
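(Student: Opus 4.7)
My plan is to reduce the problem to analysing the long-run behaviour of a deterministic $\omega$-automaton reading the sign description $\sigma = \langle \sgn(u_n) \rangle_{n \in \nat}$, exploiting the fact that prefix-independence collapses acceptance to a condition on the set of states visited infinitely often.

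The first step is to translate $\mathcal{L}$ into an equivalent deterministic Muller automaton $\mathcal{A} = (Q, \Sigma, \delta, q_0, \mathcal{F})$ over the alphabet $\Sigma = \{+, 0, -\}$, with $\mathcal{F} \subseteq 2^Q$ the Muller acceptance family. Since $\mathcal{L}$ is prefix-independent, the language accepted by $\mathcal{A}$ from every state reachable from $q_0$ is again $\mathcal{L}$; consequently, whether $\sigma \in \mathcal{L}$ depends only on the set $S \subseteq Q$ of states visited infinitely often along the unique run of $\mathcal{A}$ on $\sigma$, and acceptance amounts to checking $S \in \mathcal{F}$.

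The second step is to invoke the paper's main technical lemma in its uniform form: given the rational data defining $\lu$, the sign description $\sigma$ is effectively almost periodic, meaning that for each $n \in \nat$ one can compute the finite set $R_n \subseteq \Sigma^n$ of length-$n$ factors that occur infinitely often in $\sigma$, together with a gap bound $g(n)$ and a threshold $N(n)$ past which every length-$g(n)$ window of $\sigma$ contains every factor of $R_n$ and no non-recurring factor of length $n$ appears at all. Because $\mathcal{A}$ is deterministic and has finitely many states, feeding it an effectively almost periodic input yields an effectively almost periodic state sequence over $Q$; hence $S$ is computable by simulating $\mathcal{A}$ on a sufficiently long prefix of $\sigma$ (of length expressible in $N(|Q|)$, $g(|Q|)$ and $|Q|$) and extracting the states that recur within the stable tail. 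The procedure then returns whether $S \in \mathcal{F}$.

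I expect the main obstacle to be establishing the effective almost periodicity of $\sigma$ in its fully \emph{uniform} form, \ie, extracting the functions $n \mapsto R_n$, $g$, and $N$ as computable functions of the rational data of $\lu$ rather than merely for each fixed $\lu$. This is precisely where prefix-independence earns its keep: for a general $\omega$-regular $\mathcal{L}$ one would additionally need a uniform bound on the length of the ``transient'' prefix of $\sigma$ before its almost periodic tail takes over, in order to correctly match that prefix against $\mathcal{L}$, and such a bound is not obviously extractable from $\lu$ alone; prefix-independence of $\mathcal{L}$ renders that transient harmless.
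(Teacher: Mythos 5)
Your first step (deterministic M\"uller automaton, prefix-independence reducing acceptance to a check on the infinitely-visited state set) matches the paper's \Cref{lem:allany}. The gap is in your second step: you invoke effective almost periodicity ``in its uniform form'', including a \emph{computable} threshold $N(n)$ past which the transient is over and every window of length $g(n)$ contains exactly the recurring factors. No such computable threshold is available, and this is not a technicality: the threshold corresponds to the constant $n_0$ of \Cref{lem:same sign}, whose ineffectiveness (stemming from the ineffective $S$-unit/Roth-type bounds behind \Cref{th:evertse}) is exactly why the paper restricts to prefix-independent properties in the first place (\Cref{rem:n_0}). What the paper actually establishes as computable (\Cref{prop:properties}) is weaker: one can decide whether a given factor occurs infinitely often (via the semi-algebraic sets $U(w')$ and Tarski's algorithm) and compute gap bounds for such factors; the threshold $c$ of \Cref{lem:occurinf} is only shown to \emph{exist}. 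Your algorithm, however, needs the \emph{value} of $N(|Q|)$ to know how long a prefix to simulate; without it, no finite simulation certifies anything about the tail, since you can never rule out that you are still inside the transient. Indeed, if your uniform threshold were computable one could decide arbitrary $\omega$-regular properties of sign descriptions, hence the Skolem and Positivity Problems for simple \lrs, both open --- so the step cannot be repaired as stated. Your closing paragraph senses this tension but does not resolve it: prefix-independence makes the transient irrelevant to \emph{acceptance}, yet your procedure still consumes the transient's length as an input.

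What is missing is a mechanism for extracting the infinitely-visited state set using only the two computable primitives above, never the threshold. This is the actual content of the paper's \Cref{sec:omegareg}: it works in a finite monoid of state-set-labelled graphs attached to $\aut$, uses the procedure $\inter$ (which lists the finitely many words occurring between consecutive occurrences of an infinitely recurring word $w$, available because gap bounds are computable) and iterates an ``increasing product'' step until reaching a fixpoint word $u$; minimality of a label $S$ in $h(u)$ then guarantees that $S$ is precisely the set of states seen infinitely often along the run on some suffix of $\sigma$ from some state, and \Cref{lem:allany} converts membership $S\in F$ into acceptance of $\sigma$. Note that correctness of that argument uses the mere existence of the threshold of \Cref{lem:occurinf}, while the algorithm itself never evaluates it --- this is the decoupling your proposal lacks.
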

This result allows us to model check prefix-independent \mso
properties that refer to the signs of variables in linear loops.
Recall here that a prefix-independent $\omega$-regular language is one
such that any two words with a common (infinite) suffix are either
both in the language or both not in the language.  Equivalently, such
a language is a finite union of languages of the form $\Sigma^*
\mathcal{L}^\omega$ for regular $\mathcal{L}\subseteq \Sigma^*$.
Intuitively, prefix-independent languages specify asymptotic
properties of $\omega$-words, such as that a certain pattern occurs
infinitely often or that some property eventually holds forever.  The
restriction to prefix-independent properties in~\Cref{th:main} is
connected to a non-uniformity in the proof that the sign description
of a simple \lrs is effectively almost periodic
(cf.~\Cref{thm:simple almost periodic} and~\Cref{rem:n_0}), which in
turn is due to our use of ineffective number-theoretic bounds.  Note
that the ability to handle arbitrary $\omega$-regular languages in
Theorem~\ref{th:main} would immediately entail decidability of both
Skolem's Problem and the Positivity Problem for simple \lrs.

The sign description is a coarse abstraction of a given
sequence. However the same techniques that provide us with
\Cref{th:main} can be applied to substantially more powerful
abstractions, as we illustrate below once more in the context of analysing the
behaviour of a linear loop. Let us assume that such a loop operates over
variables $x_1, \ldots, x_m$; writing $\mathbf{v}$ for the column
vector of variables $[x_1, \ldots, x_m]^T$, we represent our loop by
the expression
\[ \mathbf{while}\; \mathrm{true}\; \mathbf{do} \; 
\mathbf{v} \leftarrow M \mathbf{v} \, , \]
where $M$ is a square matrix of dimension $m \times m$. (Note that we
have written the guard as `$\mathrm{true}$' since we are only interested
in the properties of \emph{infinite} executions of loops.)
Let us now assume that we are given $k$ semi-algebraic predicates over
variables $x_1,\ldots,x_m$, \emph{i.e.}, Boolean combinations of polynomial
inequalities on these variables, and let us denote these predicates as
$\boldsymbol{P}_1, \ldots, \boldsymbol{P}_k$; we shall naturally
identify such predicates with the semi-algebraic subsets of
$\mathbb{R}^m$ that they represent. Now given any $m$ initial
values $a_1, \ldots, a_m$ for the variables $x_1, \ldots, x_m$,
executing our loop gives rise to an infinite sequence of points (or \emph{orbit}) 
$\langle \mathbf{v_0}, \mathbf{v_1}, \ldots \rangle$ in
$\mathbb{R}^m$: we have $\mathbf{v_0} =  [a_1, \ldots, a_m]^T$, and
for all $i \geq 0$, $\mathbf{v_{i+1}} = M \mathbf{v_i}$. These are, of
course, the infinite sequence of values that the loop variables take
as the loop forever unwinds over time. Note that predicate $\boldsymbol{P}_i$
is satisfied at time $n$ iff $\mathbf{v_n} \in \boldsymbol{P}_i$.

Finally, let us assume that we are given a prefix-independent \mso
property over predicates $\boldsymbol{P}_1, \ldots, \boldsymbol{P}_k$,
describing some specification that the infinite unwinding of the loop,
given the initial assignment of values to the variables, may or may
not satisfy. Theorem~\ref{th:extension2} asserts that, provided
that the matrix $M$ is diagonalisable, the model-checking problem of
whether the orbit of the loop satisfies the given \mso property is
decidable. As for the sign description of \lrs, the proof technique relies on
almost periodicity and Theorem~\ref{th:extension2} is proved in a
manner similar to~\Cref{th:main}; see~\Cref{subsec:sa predicates}.

%Suppose that
% we are given $m$ linear recurrence sequences and $k$ semi-algebraic
% sets in $\rel^m$. The $m$ sequences can be seen as a dynamical system
% that evolves in $\rel^m$. At each time step this system can be in any
% of the $k$ semi-algebraic sets. This abstraction can be represented by
% an infinite word, whose alphabet is the power set of $\set{1,\ldots,
%   k}$. Almost periodicity of this richer symbolic semantics can be
% proved in a similar manner to~\Cref{th:main}, see~\Cref{subsec:sa
%   predicates}.

\paragraph{Related Work}
There have been a number of previous works that introduce symbolic
semantics for linear systems, including linear loops and Markov chains,
and give model checking procedures for this semantics.  But the
current paper is the first that establishes and benefits from almost
periodicity of a symbolic semantics.

The paper~\cite{ltlpaper} examines a version of the decision problem
considered in this paper, but with \ltl formulas rather than \mso
formulas, and restricting to recurrences of order at most~3
(corresponding to linear loops with at most 3 variables).  In addition
to the restriction on order, a major difference with the present paper
is that~\cite{ltlpaper} does not use the notion of almost periodic
sequences.  Intuitively the model checking problem can be handled more
directly there by exploiting the simplicty of \ltl.

The paper~\cite{BeauquierRS06} considers \mso over $\langle
\mathbb{N},<\rangle$ augmented with a probability quantifier.  The
semantics of the probability quantifier is defined relative to
trajectories of a finite-state Markov chain.  The setting is close to
the present paper: Markov chains are a special case of linear loops,
and the probability quantifier corresponds to having predicates that
report the sign of an \lrs at each index.  However the results
of~\cite{BeauquierRS06} only apply in situations in which truth values
of formulas are ultimately periodic.  By working with the notion of
almost periodicity we avoid the need for such semantic restrictions.
Interestingly,~\cite[Section 8]{BeauquierRS06} notes the close
relationship to the model checking problem for their logic and the
Skolem Problem for linear recurrences.

Another similar work is~\cite{AgrawalAGT15}, which considers the
problem of model checking \ltl formulas on a symbolic dynamics of a
Markov chain that is induced by a finite polyhedral partition of the
space of probability distributions on the states.  Again, the key
issue is ultimate periodicity: the authors of~\cite{AgrawalAGT15} note
that their symbolic dynamics is not ultimately periodic in general,
and therefore switch their attention of a notion of approximate model
checking.
\newpage

Decision problems on the positivity of \lrs have been studied
in~\cite{OuaknineW14a,OuaknineW14b,ouaknine13_posit_probl_low_order_linear_recur_sequen}.
Our second main result, Theorem~\ref{th:main}, generalises the fact
that it is decidable whether a simple \lrs is ultimately
positive~\cite{OuaknineW14a}.  In terms of the structure of the sign
description of an \lrs,~\cite{BG07} show that the positivity set of
an \lrs (the set of indices where the \lrs is positive) has a density
and characterises the numbers that can appear as such a density.  A
classical result of Skolem, Mahler, Lech states that the set of zeros
of an \lrs over a field of characteristic zero is ultimately periodic.

\paragraph{Organisation} 
The rest of the paper is organised as follows. In~\Cref{sec:desc}, we give the main definitions and discuss two classical results: the Skolem-Mahler-Lech theorem and Sem\"enov's theorem. We sketch the plan for the proof of the main theorem in~\Cref{subsec:sketch}. The central technical theorem is proved in \Cref{sec:ap}. In the last subsection, properties related to effectiveness of the objects defined in the proof are given. In~\Cref{sec:cexample} we show that the sign descriptions of general \lrs need not be almost periodic. This section is independent and can be read out of order. In~\Cref{sec:omegareg} we give the procedure and in~\Cref{sec:extensions} we show how the proof can be adapted to more complex predicates instead of sign descriptions. The pertinent notions of the first-order theory of real closed fields and related proofs are presented in~\Cref{sec:foth}. 

%%% Local Variables:
%%% mode: latex
%%% TeX-master: "main"
%%% End:

\section{Sign descriptions of linear recurrence sequences}
\label{sec:desc}
A \emph{linear recurrence sequence (\lrs)} is a sequence $\lu=\sq u$ of rational numbers that satisfies
a recurrence relation
\begin{align}
  u_n=a_1u_{n-1}+a_2u_{n-2}+\cdots+a_du_{n-d},\qquad n>d,
\label{eq:recur-relation}
\end{align}
where $a_1,\ldots,a_d$ are rational constants and $d\in\mathbb{N}$ is the order of recurrence.
Clearly such a sequence is determined by the recurrence and the initial values $u_1,\ldots,u_d$.

The \emph{characteristic polynomial} of the recurrence~\eqref{eq:recur-relation} is
\begin{align*}
  f(x)\defeq x^{d}-a_1x^{d-1}-\cdots-a_{d-1}x-a_d.
\end{align*}
We refer to the roots of $f$ as the \emph{characteristic roots} of the
recurrence.  It is well known that an \lrs $\lu$ admits a unique representation as
an \emph{exponential polynomial}
\[ u_n = \sum_{i=1}^m C_i(n) \Lambda_i^n,\] where $\Lambda_1,\ldots,\Lambda_m$ are the distinct characteristic roots and the $C_i$ are polynomials. Both the roots and the coefficients of the polynomials $C_i$ are in general complex algebraic numbers.

An \lrs satisfies a unique recurrence of minimum order.  We say that
the recurrence is \emph{simple} if the characteristic roots of this
recurrence are simple.  Equivalently $\lu$ is simple if the coefficients
$C_i$ in its representation as an exponential polynomial are
constant polynomials.

%When referring to the \emph{roots} of the sequence, we mean the roots of its characteristic polynomial. A \lrs that does not have repeated roots is called \emph{simple}. 

Let $\lu=\sq u$ be a linear recurrence sequence. Define $\zeta$, an infinite word over the alphabet~$\set{0,\pm}$, as:
\begin{align*}
  \zeta_n \defeq 0\qquad \Leftrightarrow\qquad u_n=0. 
\end{align*}
In other words, we abstract away the terms of the sequence and only
keep the information of whether or not they are equal to zero. The
celebrated Skolem-Mahler-Lech theorem says that the word $\zeta$ is
ultimately periodic.
\begin{theorem}[Skolem-Mahler-Lech, {\cite[Theorem 2.1]{recseq}}]
  \label{th:skolem}
  For any linear recurrence sequence $\lu$ the word $\zeta$ is of the form
  \begin{align*}
    \zeta = w_1w_2^\omega,
  \end{align*}
  for $w_1,w_2\in\set{0,\pm}^*$. 
\end{theorem}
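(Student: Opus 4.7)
My plan is to prove the theorem via the classical $p$-adic argument of Skolem, Mahler, and Lech. Passing to the exponential polynomial representation $u_n = \sum_{i=1}^m C_i(n)\Lambda_i^n$, I would first enlarge the base field to a number field $K$ containing all the characteristic roots $\Lambda_i$ and all coefficients of the $C_i$. I would then pick a rational prime $p$ of good reduction, \ie, one such that every $\Lambda_i$ is a $p$-adic unit at each prime $\mathfrak{p}$ of $K$ above $p$; since only finitely many rational primes divide the numerators or denominators of the relevant algebraic integers in a suitable order, such a $p$ exists.

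Next, I would choose an integer $N \ge 1$ depending on $p$ and the residue degrees of the primes of $K$ above $p$, such that for every characteristic root, $\Lambda_i^N - 1$ lies in the open disk of convergence of the $p$-adic logarithm (concretely in $p\,\mathcal{O}_{K_\mathfrak{p}}$ for $p$ odd, with the usual adjustment at $p=2$). For each residue $r \in \set{0,\ldots,N-1}$, substituting $n = r + Nk$ gives
\[
  F_r(k) \defeq u_{r+Nk} = \sum_{i=1}^m C_i(r+Nk)\, \Lambda_i^r \, \bigl(1 + (\Lambda_i^N - 1)\bigr)^k.
\]
Expanding $(1 + (\Lambda_i^N - 1))^k$ via the $p$-adic binomial series realises $F_r$ as the restriction to $\nat$ of a power series in $k$ that converges on all of $\intg_p$ and takes values in some finite extension of $\rat_p$.

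Now I would invoke Strassman's theorem: any nonzero $p$-adic power series that converges on $\intg_p$ has only finitely many zeros there. Applied to each $F_r$, this yields a dichotomy: either $F_r \equiv 0$, in which case $u_n = 0$ on the entire arithmetic progression $\set{r + Nk \st k \in \nat}$, or $F_r$ vanishes at only finitely many integers. Taking the union over the $N$ residue classes exhibits $\set{n \in \nat \st u_n = 0}$ as a finite set together with a finite union of full arithmetic progressions of common difference $N$. This is precisely the statement that $\zeta \in \set{0,\pm}^\omega$ is eventually periodic, and hence of the form $w_1 w_2^\omega$.

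The main obstacle is the $p$-adic setup: selecting the prime of good reduction, controlling ramification, and choosing $N$ so that the binomial series converges on the full disk $\intg_p$ rather than only a smaller one. Once $F_r$ is realised as a convergent $p$-adic analytic function, Strassman's theorem does the remaining work. I would emphasise that the argument is famously non-constructive: while it guarantees the form $w_1 w_2^\omega$, effectively computing $|w_1|$ would in particular decide whether $u_n = 0$ for some $n$, which is the long-open Skolem problem.
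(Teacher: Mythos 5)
Your proposal is correct: it is the classical $p$-adic argument (choice of a prime of good reduction, $p$-adic analytic interpolation of $u_{r+Nk}$ along arithmetic progressions, and Strassman's theorem), which is exactly the proof found in the reference the paper cites for this statement --- the paper itself states the Skolem--Mahler--Lech theorem as a known result and gives no proof of its own. Your closing observation on non-effectiveness also matches the paper's remark immediately after the theorem, namely that $w_2$ is computable but computing the prefix $w_1$ is a long-standing open problem (equivalent to the Skolem Problem).
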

The word $w_2$ can be computed \cite{berstel76_deux_des_suites} from
the description of $\lu$; it is however a longstanding open problem
whether the same is true for the prefix $w_1$.

In this paper we are interested in a slightly finer analysis:
\begin{definition}[Sign description]
The sign description of $\lu$ is the infinite word $\sigma$ over the alphabet~$\set{-,0,+}$ defined as:
\begin{align*}
  \sigma_n \defeq \sgn(u_n),
\end{align*}
where for $x\in\rel$,
\begin{align*}
  \sgn(x)\defeq\begin{cases}
    &+ \qquad \text{if }x>0,\\
    &- \qquad \text{if }x<0,\\
    &0 \qquad \text{otherwise}.
    \end{cases}    
\end{align*}
\end{definition}
Unlike $\zeta$, the word $\sigma$ is not ultimately periodic in general, as the following example shows.
\begin{example}
  Let $\lu$ be an \lrs given in the matrix form\footnote{This is an equivalent formulation for \lrs, inter-reducible in polynomial time with the definition that we gave in the beginning of this section; see {\cite[Section 1.1.12]{recseq}}.} as:
  \begin{align*}
    u_n=\begin{pmatrix} 0 & 1 \end{pmatrix}
        \begin{pmatrix}a & b\\ -b & a\end{pmatrix}^n
        \begin{pmatrix} 0 \\ 1 \end{pmatrix},\qquad a,b\neq 0, \text{ and }a^2+b^2=1. 
  \end{align*}
  Putting the square matrix above in Jordan normal form and using Euler's formula, we can deduce that $u_n=\cos(n\varphi)$, where $\varphi=\arg(a+\ii b)$. The set
  \begin{align*}
    \set{a+\ii b \st a,b\in\rat,\ a,b\ne 0\text{ and } a^2+b^2=1},
  \end{align*}
  consists of algebraic numbers of degree two. The only roots of unity of degree two are the third, fourth and sixth primitive roots of unity, which are either $\pm\ii$ or have irrational imaginary part. Consequently, none of the elements of the set above are a root of unity and therefore, $\varphi=\arg(a+\ii b)$ is not a rational multiple of $\pi$.

  If the sign description $\sigma$ were to be ultimately periodic, then there would be some $p\in\nat$ and $s\in\set{-,0,+}$ such that $\sigma_{np}=s$ for all $n\in\nat$. This is not the case, because $p\varphi$ is not a rational multiple of $\pi$, from which it easily follows (e.g., using Kronecker's theorem for inhomogeneous Diophantine approximation,~\Cref{th:kronecker}) that $\set{\cos\left(n(p\varphi)\right)\st n\in\nat}$ is dense in $[-1,1]$.
\end{example}

However, in the case of simple \lrs, the sign description is well-behaved. In the sequel we will prove that simple \lrs have {\em almost periodic} sign descriptions. 
\subsection{Almost periodic words}
We say that the pattern $w\in\Sigma^*$ occurs in a word $\alpha \in \Sigma^*\cup \Sigma^\omega$
if $w$ occurs as an infix of $\alpha$.  More specifically, we say that $w$ occurs 
in position $p$ of $\alpha$ if 
\begin{align*}
  w=\alpha_p\alpha_{p+1}\cdots\alpha_{p+|w|-1}. 
\end{align*}

\begin{definition}[Almost periodic]
  \label{def:ap}
  An infinite word $\alpha\in\Sigma^\omega$ is almost periodic if for every word $w\in\Sigma^*$, there exists $p\in\nat$ such that either:
  \begin{itemize}
  \item $w$ does not occur in $\alpha$ after the position $p$, or
  \item $w$ occurs in every factor of $\alpha$ of length $p$, \ie for every $n\in\nat$, $w$ occurs in
    \begin{align*}
      \alpha_{n}\alpha_{n+1}\cdots\alpha_{n+p}.
    \end{align*}
  \end{itemize}
\end{definition}
Intuitively, an almost periodic word is one with the property that any pattern that occurs infinitely often, does so in such a manner that
 the gaps between successive ocurrences of the pattern have bounded length.
A typical {\em non-example} of almost periodic words is:
\begin{align*}
  aba^2ba^3ba^4b\cdots.
\end{align*}
Here the letter $b$ occurs infinitely often, but the distances between consecutive occurrences are unbounded. 

Almost periodic words are sometimes referred to in the literature as
\emph{uniformly recurrent sequences}, or \emph{minimal sequences}. As
examples of almost periodic words we have: ultimately periodic words,
Sturmian words, and some morphic sequences such as the Thue-Morse
sequence. Almost-periodic words enjoy good closure properties, low
Kolmogorov complexity, \etc; \cite{muchnik03_almos_period_sequen} is
an extensive study on the combinatorics of these words.

An almost periodic word $\alpha \in \Sigma^\omega$
is said to be \emph{effectively
  almost periodic} if, given a pattern $w\in \Sigma^*$, we can decide whether or not 
$w$ occurs infinitely often in $\alpha$, and, if so, we can 
compute an upper bound $p$ between successive occurrences of $w$ in $\alpha$. If the pattern does not occur infinitely often on the other hand, we can compute an upper bound on the threshold after which the pattern does not occur. Equivalently, $\alpha$ is effectively almost periodic if there is a procedure that inputs a pattern $w\in\Sigma^*$ and outputs an upper bound on the number $p$ in \Cref{def:ap}.

A key property of effectively almost periodic
words is that they have a decidable monadic second-order theory.  More
specifically, a word $\alpha \in \Sigma^\omega$ determines a structure that expands
$(\nat,<)$ with a monadic predicate for every letter in $\Sigma$
that denotes the positions in $\alpha$ where the letter occurs.  Formulas of
\mso over this structure are formulas of predicate logic with
both first-order variables and monadic
second-order variables.  Then we have:
\begin{theorem}[{\cite[Theorem 1]{semenov84_logic_theor_one_place_funct}}]
  \label{thm:semenov}
  For any effectively almost periodic word $\alpha$, the \mso theory
  of $(\nat,<)$ expanded with unary predicates that define $\alpha$,
  is decidable.
\end{theorem}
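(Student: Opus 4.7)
The plan is to reduce the \mso decision problem to deciding $\omega$-regular acceptance on $\alpha$, and then to exploit effective almost periodicity together with the Ramseyan factorisation of $\omega$-regular languages. First I would invoke B\"uchi's theorem: given an \mso sentence over $(\nat,<)$ augmented with the unary predicates that define $\alpha$, effectively construct a nondeterministic B\"uchi automaton $\mathcal{A}=(Q,\Sigma,\delta,q_0,F)$ over the alphabet $\Sigma$ (whose letters record the values of the predicates at each position) such that the sentence holds iff $\alpha\in L(\mathcal{A})$. It therefore suffices to decide whether the fixed word $\alpha$ is accepted by $\mathcal{A}$.

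Next I would pass to the transition semigroup $S$ of $\mathcal{A}$, i.e.\ the finite semigroup whose element $T(w)\in\{0,1,2\}^{Q\times Q}$ records, for each pair $(p,q)$, whether $w$ drives $\mathcal{A}$ from $p$ to $q$ and, if so, whether an accepting state can be visited along the way. The map $T:\Sigma^+\to S$ is a semigroup morphism, and the standard Ramseyan characterisation of B\"uchi acceptance gives: $\alpha\in L(\mathcal{A})$ iff there exist $s\in S$ and an idempotent $e\in S$ such that (a) the product $s\cdot e$ realises, at some state $q$, a run from $q_0$ to $q$ together with an $F$-visiting $e$-loop at $q$, and (b) $\alpha$ admits a factorisation $\alpha=uv_1v_2\cdots$ with $T(u)=s$ and $T(v_i)=e$ for all $i\ge 1$. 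Since $S$ is finite, there are only finitely many candidate pairs $(s,e)$, so the problem reduces to deciding (b) for each such pair.

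The core step is to decide (b) using effective almost periodicity of $\alpha$. The letter-wise image $\gamma\defeq T(\alpha_0)T(\alpha_1)\cdots\in S^\omega$ inherits effective almost periodicity from $\alpha$, since each pattern in $\gamma$ is the $T$-image of a pattern of the same length in $\alpha$ and the computable gap/threshold data transfer accordingly. For each length $N$, effective almost periodicity lets us compute the finite set of length-$N$ factors of $\alpha$ that occur from a computable position onwards, and in particular the set of $T$-values attained by such factors. The main obstacle is to extract from this data a decision procedure for the existence of the desired $e$-block factorisation. Here I would argue that, since $S$ is finite, the factorisation behaviour of $\alpha$ stabilises once $N$ exceeds the almost-periodicity gap bounds for all patterns of length at most some function of $|S|$: beyond this length, the $T$-values of factors of the suffix of $\alpha$ form a computable sub-semigroup of $S$, and the existence of an $(s,e)$-factorisation can be checked by a finite semigroup computation over this sub-semigroup together with the enumerated factor sets. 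Making this uniform bound precise and aligning it with the definition of effective almost periodicity is where the bulk of the technical work will lie.
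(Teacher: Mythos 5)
First, a point of comparison: the paper offers no proof of this theorem at all; it is quoted from Semenov's 1984 paper and used as a black box (it is invoked exactly once, at the end of \Cref{sec:omegareg}, to obtain \Cref{th:secondmain}). So your attempt can only be judged on its own merits, and as it stands it has a genuine gap. Your first two steps are correct and standard: the translation of an \mso sentence into a nondeterministic B\"uchi automaton, the passage to the finite transition semigroup $S$, and the linked-pair (Ramseyan) characterisation of acceptance correctly reduce the problem to deciding, for each of the finitely many pairs $(s,e)$ with $e$ idempotent, whether $\alpha$ factors as $uv_1v_2\cdots$ with $T(u)=s$ and $T(v_i)=e$. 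But your third step, which is the entire algorithmic content of the theorem, is not an argument. The one concrete claim it rests on---that beyond a computable length ``the $T$-values of factors of the suffix of $\alpha$ form a computable sub-semigroup of $S$''---is false as stated: the set of $T$-values of factors of a word is not closed under multiplication, because the concatenation of two factors need not itself be a factor (in $(ab)^\omega$, the word $a$ is a factor but $aa$ is not). The remaining text explicitly defers ``the bulk of the technical work''. To gauge the size of what is being deferred: the analogous decision step, for the special case of prefix-independent properties only, is exactly what occupies all of \Cref{sec:omegareg} of the paper (the procedure $\inter$, increasing products, \Cref{lem:shortinc}, and the fixpoint argument); a proof of the present theorem needs machinery of that kind, not a stabilisation-by-pigeonhole remark.

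Second, there is an obstruction your plan cannot get around as formulated. Clause (b) of your reduction quantifies over prefixes $u$ of $\alpha$ (you must decide which $s\in S$ arise as $T(u)$), and full \mso expresses prefix-dependent facts such as ``the first letter is $+$''. Your procedure, however, consults only the occurrence oracle supplied by the paper's definition of effective almost periodicity (which patterns occur infinitely often, with what gap bounds and thresholds), and that oracle provably does not determine such facts: the word $x0^\omega$, where $x\in\{0,1\}$ is an undecidable bit, admits a single computable oracle of this kind (every pattern containing the letter $1$ gets threshold $1$; every pattern $0^k$ gets gap bound $k+1$), yet its \mso theory is undecidable. Semenov's theorem, as actually proved, takes effectiveness to include computability of the letters of $\alpha$ in addition to a computable regulator, and any correct proof must use that letter-level access to dispose of prefixes---for instance by computing an explicit prefix of $\alpha$, running the automaton on it, and only then invoking a tail argument. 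Your proposal never touches the letters of $\alpha$, so even with step 3 repaired it could at best decide prefix-independent properties, which is the content of the paper's \Cref{th:main}, not of \Cref{thm:semenov}.
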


%In this logic we are allowed to quantify over naturals, over sets of
%naturals, test for membership and order; moreover as part of the
%signature, we have a set for every letter of the alphabet, saying
%which positions in the word $\alpha$ are marked by that letter.

One of the main results of this paper is that the sign description of a given simple \lrs is an effectively almost periodic word.  This effectiveness, however, is non-uniform, in the sense that we do not have a single algorithm that takes an \lrs as input and witnesses the effectiveness of the corresponding sign description.  Indeed such a uniform effectiveness result would allow to decide Skolem's Problem (``Does an \lrs have a zero term?'') and the Positivity Problem (``Are all terms of an \lrs positive?''), both of which are open for simple \lrs.  This fact leads us to formulate and prove a variant of \Cref{thm:semenov} that assumes a weaker notion of effectiveness that talks only about the asymptotic properties of the word.  Specifically this notion asks to compute an upper bound on the gap between all but finitely many succcessive occurrences of an infinitely recurring factor.  Naturally, for such sequences we correspondingly weaken the conclusion of \Cref{thm:semenov}: we ask to decide any {\em prefix-independent} $\omega$-regular property of the sign descriptions.

% Issues of effectiveness will be discussed in
% \Cref{sec:omegareg}. First we prove that sign descriptions of simple
% \lrs are almost periodic.

\subsection{Proof Sketch}
\label{subsec:sketch}

The proof of the main theorem, \Cref{th:main}, can be conceptually divided as follows: (\textbf{a}) we observe that simple linear recurrence sequences admit almost periodic sign descriptions, (\textbf{b}) we prove that there is a procedure that given a pattern, outputs a bound on distances between consecutive occurrences (in the sign description); and finally exploiting the previous procedure we provide (\textbf{c})~an algorithm that inputs a prefix-independent $\omega$-regular language $\mathcal L$ (as a M\"uller automaton)  and a simple \lrs, and decides whether its sign description belongs to $\mathcal L$. 

%\begin{enumerate}
(\textbf{a}) To show almost periodicity, the general idea is to construct a much simpler dynamical system and  prove that its sign description coincides with that of the given sequence in all but finitely many positions. The ambient space of this dynamical system (described in \Cref{subsec:walks}) is a compact subset $X$ of $\torus^d$ (where $\torus$ is the unit circle on the complex plane). Its dynamics is given by a continuous function mapping $X$ to itself. This system is easier to analyse: for every sign pattern there exists an open subset $Y$ of $X$ such that when the system enters it, the next signs that it outputs form the pattern. Furthermore, using the compactness of $X$ we can prove that from everywhere in $X$, the system has to enter $Y$ in a bounded number of steps (provided $Y$ is non-empty). This bound will suffice for the distance between consecutive occurrences of the pattern.

  The reason why the sign sequences of the given system and the simpler one above coincide, in all but finitely many positions, is laid in \Cref{subsec:s-unit}. It amounts to proving that the asymptotic behavior of the sequence is determined by its dominant terms (those made from characteristic roots with maximal modulus). To lower bound these terms, we will apply a theorem from algebraic number theory.
  
(\textbf{b}) The procedure for calculating the distances between occurrences of patterns, manipulates formulas of  first order logic of the field of real numbers. We observe that for every pattern, the subsets $Y$ above, are semi-algebraic (\ie they are definable in the logic), and that furthermore the formulas can be effectively computed (\Cref{lem:fodef}). Using Tarski's procedure we can check whether $Y$ is non-empty, \ie whether the pattern occurs infinitely often in the sign description, and if so, calculate the bound between consecutive occurrences by querying whether the bound $b$ is sufficient, for successive $b\in\nat$~(\Cref{prop:properties}).

(\textbf{c}) We gather all the relevant properties of the sign description in~\Cref{prop:properties}, abstracting away linear recurrence sequences; so that the algorithm that is presented in \Cref{sec:omegareg} would work for any infinite word having the properties listed in~\Cref{prop:properties}.

Because it is simpler for the proofs, the algorithm will manipulate elements of a certain finite monoid which is equivalent to the given automaton. The sign description $\sigma$ has the property that one can choose finite words $w_1,\ldots, w_k$ such that
--- except for a finite prefix --- $\sigma$ is obtained by
intercalating the words $w_1,\ldots,w_k$.  As a consequence of the automaton being finite, for some well chosen and sufficiently long words $w_1,\ldots, w_k$, we can prove that it does not matter for the acceptance how they are arranged in the suffix of $\sigma$. The algorithm will construct these sufficiently long words and multiply the associated elements of the monoid to decide whether the set of states that is seen infinitely often is final. 
%\end{enumerate}

%%% Local Variables:
%%% mode: latex
%%% TeX-master: "main"
%%% End:

\section{Simple \lrs have almost periodic sign descriptions}
\label{sec:ap}
In this section we prove our first main result:
\begin{theorem}
  \label{thm:simple almost periodic}
  The sign description of a simple linear recurrence sequence is almost periodic.
\end{theorem}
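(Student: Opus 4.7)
The plan is to exploit the simple structure of $\lu$ as an exponential polynomial $u_n = \sum_{i=1}^m C_i \Lambda_i^n$ with constant coefficients. Let $\rho = \max_i |\Lambda_i|$ and partition the characteristic roots into dominant roots ($|\Lambda_j|=\rho$) and subdominant roots. Writing $\Lambda_j = \rho\, \gamma_j$ with $\gamma_j \in \torus$ for each dominant $\Lambda_j$, I factor
\[
u_n = \rho^n\Bigl(\underbrace{\sum_{j \text{ dom}} C_j \gamma_j^n}_{=:v_n} + \varepsilon_n\Bigr),
\]
where $\varepsilon_n \to 0$ exponentially fast. Since $\sgn(u_n)=\sgn(v_n+\varepsilon_n)$, the first step is to argue that asymptotically $\sgn(u_n)=\sgn(v_n)$ at every $n$ where $v_n \neq 0$, modulo a finite exceptional set.

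The dynamical system advertised in the sketch is then built on the compact subgroup $X \subseteq \torus^d$ defined as the closure of $\{(\gamma_1^n,\ldots,\gamma_d^n):n\in\nat\}$, where $d$ is the number of dominant roots; by Kronecker/Weyl this is a subtorus and the shift $T(z_1,\ldots,z_d)=(\gamma_1 z_1,\ldots,\gamma_d z_d)$ acts minimally on it. The function $F:X\to\rel$ sending $(z_1,\ldots,z_d)\mapsto \sum_j C_j z_j$ satisfies $v_n = F(T^n(1,\ldots,1))$, so $\sgn(v_n)$ is the image under $F$ of a minimal orbit on $X$. For any finite sign pattern $w$, the set $Y_w \subseteq X$ of points whose $F$-trajectory under the next $|w|$ iterates of $T$ produces $w$ is a finite intersection of preimages of the open half-lines under continuous maps, hence is open. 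By minimality and compactness, if $Y_w \neq \emptyset$ then the orbit $T^n(1,\ldots,1)$ visits $Y_w$ with uniformly bounded gaps, giving exactly the almost-periodicity condition for $\sgn(v_n)$.

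The crux, which I expect to be the principal obstacle, is closing the loop between $\sgn(v_n)$ and $\sgn(u_n)$: one must rule out that $v_n$ becomes so small that the exponentially decaying term $\varepsilon_n$ flips its sign infinitely often. The natural tool is a quantitative lower bound from transcendence theory (e.g.\ a Baker-style estimate, or an $S$-unit/linear-forms-in-logarithms bound as suggested by the paper's reference to Subsection~\ref{subsec:s-unit}), guaranteeing that whenever $v_n\neq 0$ one has $|v_n| \geq n^{-c}$ for some constant $c$; this polynomial lower bound comfortably beats the exponentially small $|\varepsilon_n|$, so $\sgn(u_n)=\sgn(v_n)$ for all sufficiently large $n$ outside the zero set of $v_n$. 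On the indices where $v_n=0$, one invokes Skolem--Mahler--Lech (\Cref{th:skolem}) applied to the LRS refinement of $v_n$ (which is itself a simple LRS after clearing $\rho^n$) to conclude that either these indices form a finite set or an entire arithmetic progression; in the latter case the constant-zero residue class is absorbed into the pattern structure of $\sigma$ and does not spoil almost periodicity.

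Finally, I would assemble these pieces: fix a pattern $w$, consider whether it arises from $\sgn\circ F$ on $X$. If $Y_w = \emptyset$ then $w$ occurs only finitely often in $\sigma$ (all occurrences lying in the finite exceptional prefix where $\sgn(u_n)\neq\sgn(v_n)$). If $Y_w \neq \emptyset$ then the compactness-plus-minimality argument on $X$ yields a uniform bound $p$ on the gap between consecutive visits of the orbit to $Y_w$, which after adjusting by the length of the exceptional prefix furnishes the constant demanded by \Cref{def:ap}. This verifies almost periodicity of $\sigma$ and completes the proof of~\Cref{thm:simple almost periodic}.
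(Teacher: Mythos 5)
Your overall architecture is the same as the paper's: normalize the dominant roots onto the unit circle, use an $S$-unit lower bound (\Cref{th:evertse}) to show that the sign of $u_n$ asymptotically equals that of the dominant part, then run a Kronecker-density-plus-compactness argument on a compact subgroup of the torus to get bounded gaps between pattern occurrences. However, there is a genuine gap: you never address \emph{degenerate} simple \lrs, and both halves of your argument break exactly there. First, \Cref{th:evertse} applies only when every sub-sum of the dominant terms is nonzero; the paper secures this via \Cref{prop:shapiro}, which requires non-degeneracy. If some ratio $\Lambda_i/\Lambda_j$ (equivalently $\gamma_i/\gamma_j$) is a root of unity, then a sub-sum of your $v_n$ --- or $v_n$ itself --- can vanish on an entire arithmetic progression, so no lower bound of the kind you invoke can hold on all of $\nat$. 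Second, your fallback for the zero set of $v$ is wrong: on a progression where $v_n\equiv 0$, the sign of $u_n$ is governed by the \emph{subdominant} roots, which your system $X$ (built only from the dominant roots) cannot see at all; those signs are not ``absorbed into the pattern structure'' --- they are simply not predicted by $\sgn\circ F$, and they require their own dominant-term analysis on that progression. (Relatedly, your claim that $Y_w$ is open fails for patterns $w$ containing the symbol $0$, since $F^{-1}(\{0\})$ is closed, not open; this is harmless only when zeros occur finitely often, i.e., precisely in the non-degenerate case.)

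The paper's fix is its very first step, which your proposal omits: decompose $\lu$ into the $P$ subsequences $\lu^{(\ell)}=\sq{u_{\ell+nP}}$, where $P$ is the least common multiple of the orders of the roots of unity among the ratios $\Lambda_i/\Lambda_j$, so that each subsequence is non-degenerate (each with its own, possibly different, set of dominant roots). Note also that the obvious patch --- prove almost periodicity of each $\lu^{(\ell)}$ separately and interleave --- does not work, because an interleaving of almost periodic words need not be almost periodic (see \cite[Theorem 22]{muchnik03_almos_period_sequen}); this is why the paper instead builds a single group $\torus_\lambda$ out of \emph{all} the normalized roots and a single map $f\st\torus_\lambda\to\rel^P$ emitting blocks of $P$ signs at a time (\Cref{subsec:proof of simple almost periodic}). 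A minor further point: your stated lower bound $|v_n|\ge n^{-c}$ is stronger than what is available; \Cref{th:evertse} yields only $|D(n)|\ge C\rho^{n(1-\epsilon)}$ with an ineffective constant, but this weaker bound still beats the exponentially decaying remainder, so that step of your argument survives once the degeneracy issue is repaired.
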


Fix a simple \lrs $\lu$.  We first give a brief informal overview of the
proof.\footnote{In fact the technical details, below, will depart
  slightly from this overview due to the need to handle the issue of
  degeneracy of \lrs.}  To set up the idea of the proof, recall that
$\lu$ admits a representation as an exponential polynomial
\begin{align}
  u_n=\sum_{i=1}^d c_i\ \Lambda_i^n,
\label{eq:exp-poly}
\end{align}
where $c_i,\Lambda_i$ are non-zero algebraic numbers, with $\Lambda_i$ being characteristic roots of the recurrence defining $\lu$.  Now for each $i\in\{1,\ldots,d\}$, we factor each $\Lambda_i$ as the product $\Lambda_i=\rho_i\lambda_i$ of a positive real number $\rho_i>0$ and a complex number $\lambda_i$ of absolute value $1$.  The first key idea is that for $n$ sufficiently large, the sign of $u_n$ is determined by $(\lambda_1^n,\ldots,\lambda_d^n)$, i.e., the absolute values of the characteristic roots can be ignored for large $n$.  The second key idea is that the set $\{(\lambda_1^n,\ldots,\lambda_d^n)\st n\in\mathbb{N}\}$ is the orbit of a point under a homeomorphism of a compact topological space, namely the $d$-fold product of the unit circle $\mathbb{T}$ in the complex plane.  This transports us to a classical situation in symbolic dynamics.

As a preliminary step, we first decompose $\lu$ as the interleaving 
of several so-called \emph{non-degenerate} subsequences.
Recall here that an \lrs is said to be non-degenerate if no quotient
of two distinct characteristic roots is a root of unity.  To decompose
$\lu$, as given in~\eqref{eq:exp-poly},
we take $P\in\nat$ to be the least common multiple
of the orders of all roots of unity among the quotients $\Lambda_i/\Lambda_j$ for $1\leq i<j\leq d$;
then for all $\ell\in\nat$, $0\le \ell < P$, the sequence
\begin{align*}
  \lu^{(\ell)} \defeq \sq{u_{\ell+nP}},
\end{align*}
is a non-degenerate \lrs with characteristic roots among $\{\Lambda_1^P,\ldots,\Lambda_d^P\}$.
We factor the characteristic roots as 
\begin{align}
  \label{eq:normalized roots}
  \rho_i\lambda_i\defeq\Lambda_i^P\qquad \rho_i\in\rel_+,\  |\lambda_i|=1,\ 1\le i\le d.
\end{align}

The rationale behind this decomposition is that non-degenerate
sequences have the following property:
\begin{proposition}[{\cite[Corollary 2.1]{shapiro59_theor_concer_expon_polyn}}]
  \label{prop:shapiro}
  A non-degenerate \lrs either has finitely many zeros, or it is identically zero.
\end{proposition}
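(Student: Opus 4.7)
The plan is to derive this as a direct consequence of the Skolem--Mahler--Lech theorem (\Cref{th:skolem}), already recorded earlier in the section. SML asserts that the characteristic word $\zeta$ of the zero set of $\lu$ is ultimately periodic: there exist $N\in\nat$ and a period $p\ge 1$ such that $u_{n+p}=0 \iff u_n=0$ for every $n\ge N$. Under this rewording, the proposition amounts to showing that, when $\lu$ is non-degenerate, the periodic portion of $\zeta$ cannot contain the symbol $0$ at all.

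I would argue by contradiction. Suppose $\lu$ has infinitely many zeros but is not identically zero. By the ultimate periodicity above, some residue class $a\pmod p$ must satisfy $u_n=0$ for every sufficiently large $n\equiv a\pmod p$. Consider the subsequence
\[
v_k \defeq u_{a+kp},
\]
itself an \lrs. Using the minimal-order exponential polynomial representation $u_n=\sum_{i=1}^d c_i\Lambda_i^n$, with the $\Lambda_i$ pairwise distinct and all $c_i$ nonzero, one obtains
\[
v_k = \sum_{i=1}^d \bigl(c_i\Lambda_i^a\bigr)\bigl(\Lambda_i^p\bigr)^k .
\]
The central observation is that non-degeneracy keeps these $p$-th powers distinct: if $\Lambda_i^p=\Lambda_j^p$ for some $i\ne j$, then $\Lambda_i/\Lambda_j$ would be a $p$-th root of unity, which is forbidden. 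Hence the displayed expression is an exponential polynomial representation of $v$ in $d$ pairwise distinct bases.

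Since $v_k=0$ for all sufficiently large $k$, picking any $d$ consecutive such indices yields a homogeneous linear system in the unknowns $c_i\Lambda_i^a$ whose matrix is Vandermonde in the distinct values $\Lambda_1^p,\dots,\Lambda_d^p$; invertibility forces every $c_i\Lambda_i^a=0$, and thus every $c_i=0$ (since $\Lambda_i\ne 0$), contradicting the assumption that $\lu$ is not identically zero. The only content beyond invoking SML is this Vandermonde step, and the non-degeneracy hypothesis is consumed precisely to preserve distinctness of the bases when raised to the $p$-th power, so I would not anticipate any significant technical obstacle in filling in the details.
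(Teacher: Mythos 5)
Your derivation from the Skolem--Mahler--Lech theorem (\Cref{th:skolem}) is a genuinely different route from the paper's, which offers no proof at all and simply cites Shapiro's result on exponential polynomials. The route itself is legitimate and non-circular within the paper's logic, since \Cref{th:skolem} is invoked as a black box: ultimate periodicity of $\zeta$ together with infinitely many zeros does produce a full arithmetic progression $n_0, n_0+p, n_0+2p,\dots$ of zeros; the passage to the subsequence $v_k=u_{n_0+kp}$ is sound; and non-degeneracy is exactly what keeps the bases $\Lambda_1^p,\dots,\Lambda_d^p$ pairwise distinct. (One could remark that this inverts the usual logical order --- standard proofs of Skolem--Mahler--Lech first reduce to the non-degenerate case and then prove precisely this finiteness statement --- but that is a historical observation, not a flaw in using a cited theorem.)

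There is, however, a gap in generality. The proposition is stated for \emph{arbitrary} non-degenerate \lrs, whereas your proof silently assumes simplicity: you write the minimal-order representation as $u_n=\sum_{i=1}^d c_i\Lambda_i^n$ with constant coefficients $c_i$, but for a general \lrs the coefficients are polynomials $C_i(n)$ (this is exactly the distinction the paper draws in \Cref{sec:desc}). With polynomial coefficients the subsequence becomes $v_k=\sum_i C_i(n_0+kp)\,\Lambda_i^{n_0}(\Lambda_i^p)^k$, and your plain $d\times d$ Vandermonde system no longer applies: one needs the linear independence of the functions $k\mapsto k^j\mu^k$ over distinct nonzero values $\mu$ (equivalently, invertibility of a confluent Vandermonde matrix) to conclude that each polynomial coefficient vanishes at all large $k$, hence identically, and therefore that every $C_i\equiv 0$. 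That strengthening is standard, so the gap is fixable; and since the paper only ever applies \Cref{prop:shapiro} to simple \lrs (sub-sums and subsequences of a simple \lrs), your version would in fact suffice for every use made of it in the paper --- but as written it does not prove the proposition in the generality stated.
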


Next we will demonstrate that the sign description of $\lu^{(\ell)}$
is asymptotically the same as that of a certain linear function on
$(\lambda_1^n,\ldots,\lambda_d^n)$, \ie, it does not depend on the
moduli $\rho_i$. We achieve this by applying the work of Evertse, van
der Poorten, and Schlickewei on bounds of sums of S-units.

\subsection{A lower bound on sums of $S$-units}
\label{subsec:s-unit}

We will prove the following lemma.

\begin{lemma}
  \label{lem:same sign}
  Let $\lv\defeq\lu^{(\ell)}$, for some $0\le\ell<P$. There exist
  $z_1,\ldots,z_d\in\com$ such that $\sum_{i=1}^d z_i \lambda_i^n$ is
  real for all $n\in\nat$, and furthermore exists $n_0\in\nat$ such
  that for all $n\ge n_0$,
  \begin{align*}
    \sgn(v_n)=\sgn\left(\ \sum_{i=1}^dz_i\lambda_i^n\ \right).
  \end{align*}
\end{lemma}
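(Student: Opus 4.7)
The plan is to define the $z_i$ by singling out the ``dominant modulus class'' of characteristic roots of $\lv$ and zeroing out the rest, and then to use the lower bound of Evertse, van der Poorten, and Schlickewei on sums of $S$-units to show that dropping the moduli $\rho_i$ is asymptotically harmless.

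First I would rewrite $v_n$ in a convenient form. Since $v_n = u_{\ell+nP}$ and every characteristic root of $\lv$ has the form $\Lambda_i^P = \rho_i\lambda_i$, we have
\[
  v_n \;=\; \sum_{i=1}^d (c_i\Lambda_i^{\ell})(\rho_i\lambda_i)^n \;=\; \sum_{i=1}^d w_i\,\rho_i^n\,\lambda_i^n,
  \qquad w_i \defeq c_i\Lambda_i^{\ell}.
\]
List the distinct moduli as $\rho^{(1)} > \cdots > \rho^{(K)}$, let $J_k \defeq \{i : \rho_i = \rho^{(k)}\}$, and set $S_n^{(k)} \defeq \sum_{i\in J_k} w_i\,\lambda_i^n$, so that $v_n = \sum_k(\rho^{(k)})^n S_n^{(k)}$. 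Let $k^*$ be the smallest index for which $S_n^{(k^*)}$ does not vanish identically (if no such $k^*$ exists then $\lv\equiv 0$ and the lemma is trivial with all $z_i=0$). Define $z_i \defeq w_i$ for $i\in J_{k^*}$ and $z_i \defeq 0$ otherwise, so that $\sum_i z_i\lambda_i^n = S_n^{(k^*)}$. Reality of this sum for every $n$ follows from conjugate-closure: because $\lu$ has rational entries, characteristic roots and coefficients come in complex-conjugate pairs, and this symmetry is preserved by the passage from $\lu$ to $\lv$ (conjugate roots have the same modulus and so lie in the same block $J_k$, and their $w_i$ are complex conjugates).

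Next I would isolate the dominant contribution:
\[
  v_n \;=\; (\rho^{(k^*)})^n\bigl(S_n^{(k^*)} + E_n\bigr),
  \qquad
  E_n \;\defeq\; \sum_{k>k^*}\bigl(\rho^{(k)}/\rho^{(k^*)}\bigr)^n S_n^{(k)}.
\]
Each $S_n^{(k)}$ is uniformly bounded in $n$ since $|\lambda_i|=1$, and the ratios $\rho^{(k)}/\rho^{(k^*)}$ for $k>k^*$ are strictly less than $1$, so there exist $C>0$ and $\theta_0<1$ with $|E_n| \le C\theta_0^n$ for all $n$ (and $E_n \equiv 0$ if $k^* = K$). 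The main technical step is the matching lower bound for $|S_n^{(k^*)}|$. Because $\lv$ is non-degenerate by the very choice of $P$, the ratios $\lambda_i/\lambda_j$ for distinct $i,j \in J_{k^*}$ are not roots of unity, so $S_n^{(k^*)}$ is a non-degenerate, non-identically-zero exponential sum in bases of modulus $1$. The Evertse--van der Poorten--Schlickewei bound (an application of the Subspace Theorem) then yields that for every $\varepsilon>0$ there exists $n_1$ with $|S_n^{(k^*)}| \ge (1-\varepsilon)^n$ for all $n \ge n_1$. Choosing $\varepsilon$ small enough that $1-\varepsilon > \theta_0$ forces $|S_n^{(k^*)}| > |E_n|$ for all sufficiently large $n$, whence $\sgn(v_n) = \sgn(S_n^{(k^*)}) = \sgn\bigl(\sum_i z_i\lambda_i^n\bigr)$ for all $n \ge n_0$, as required.

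The genuinely delicate step is the $S$-unit lower bound invoked at the end: it rests on a Subspace-Theorem argument and produces an ineffective threshold~$n_1$. I expect this ineffectivity to be the only real obstacle, and it is in fact the ultimate source of the non-uniformity flagged later in the paper around Theorem~\ref{th:main}; the remainder is bookkeeping about how non-degeneracy of $\lv$ is inherited from the construction of $P$ and how conjugate symmetry forces each $S_n^{(k)}$ to be real.
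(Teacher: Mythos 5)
Your proposal is correct and follows essentially the same route as the paper's proof: isolate the dominant-modulus block (the paper's $D(n)$ versus remainder $R(n)$), lower-bound it using the Evertse--van der Poorten--Schlickewei $S$-unit theorem, dominate the geometrically decaying remainder, and note that conjugate symmetry makes the dominant sum real. The only differences are cosmetic: you work with the $\rho^n$-normalised sums and cite the $S$-unit bound in packaged form for non-degenerate modulus-one sums, whereas the paper derives that bound from \Cref{th:evertse} (scaling to algebraic integers and checking via \Cref{prop:shapiro} that all sub-sums vanish only finitely often); also, strictly speaking, you should collect coefficients of indices with $\Lambda_i^P=\Lambda_j^P$ (which occur exactly when $\Lambda_i/\Lambda_j$ is a root of unity) before asserting that no ratio $\lambda_i/\lambda_j$ within the dominant block is a root of unity --- this is what the paper's index set $J$ with $b_j\neq 0$ accomplishes implicitly.
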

\begin{remark}
  \label{rem:n_0}
  There is no known effective means of determining the constant $n_0$
  above. For such a method we would need an effective version of
  Roth's theorem (consult Section 2.4 in \cite{recseq}). It is as a
  consequence of the ineffectiveness of this constant that we are
  forced to restrict to \emph{prefix-independent} $\omega$-regular
  properties in the main theorem. It is worth noting, however, that
%  under certain conditions
% non-degenerate and fewer than 
in the presence of at most three dominant roots, this constant is effective~\cite[Theorem 1]{84_distan_between_terms_algeb_recur_sequen}. 
\end{remark}

The principal ingredient in the proof of \Cref{lem:same sign} is the
aforementioned lower bound on sums of $S$-units. We introduce this theorem first. 

Let $K$ be the splitting field of the characteristic polynomial, that is the field extension of $\rat$ generated by the characteristic roots $\Lambda_1,\ldots,\Lambda_d$. The elements of $K$ that are roots of monic polynomials in $\intg[x]$ (i.e., with leading coefficient one) form a subring, known as the {\em algebraic integers} of $K$, denoted $\mathcal{O}_K$. Further, $\mathcal{O}_K$ is a Dedekind ring, so for every $x\in\mathcal{O}_K$, the principal ideal generated by $x$ can be written down as a product of a finite number of prime ideals. Let $S$ be a finite set of prime ideals. An $S$-unit is any $x\in\mathcal{O}_K$ such that the prime divisors of the principal ideal of $x$ are in $S$.

If $K$ has degree $r$ over $\mathbb{Q}$ then there are $r$
field embeddings from $K$ to $\com$, denoted $h_1,\ldots,h_r$.

\begin{theorem}[Evertse, van der Poorten and Schlickewei, see \eg {\cite[Theorem 2]{evertse}}]
  \label{th:evertse}
  Let $S$ be a finite set of prime ideals in $\mathcal{O}_K$, and $m\in\nat$. Then for all $\epsilon>0$ there exists
  $C>0$, depending on $\epsilon$ and $m$, such that for any set of $S$-units $x_1,\ldots,x_m\in\mathcal{O}_K$, with the property that $\sum_{i\in I}x_i\ne 0$, $I\subseteq\set{1,2,\ldots,m}$, we have
  \begin{align*}
    \left |\sum_{i=1}^mx_i\right |\ge CXY^{-\epsilon},
  \end{align*}
  where $X\defeq\max\set{|x_i|\st 1\le i\le m}$, and $Y\defeq\max\set{|h_j(x_i)|\st 1\le i\le m, 1\le j\le r}$. 
\end{theorem}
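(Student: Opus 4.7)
The plan is to derive the bound from the $p$-adic Schmidt Subspace Theorem, in Schlickewei's generalisation to number fields and arbitrary finite sets of places, together with an induction on the number of summands $m$.

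The role of the $S$-unit hypothesis is to control heights through the product formula. Since each $x_i$ is an $S$-unit in $\mathcal{O}_K$, one has $|x_i|_v = 1$ at every non-archimedean place $v \notin S$, so the absolute Weil height of the projective point $\mathbf{y} = (x_1 : \cdots : x_m) \in \mathbb{P}^{m-1}(K)$ is controlled by the archimedean embeddings $h_1, \ldots, h_r$ (bounded in terms of $Y$) and by the finitely many places in $S$. In particular $\log H(\mathbf{y})$ is comparable to $\log Y$ up to an additive term depending only on $S$, $m$, and $K$.

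With $\mathbf{y}$ in hand, I would apply the Subspace Theorem with the following system of linear forms. Fix the index $k$ at which the archimedean maximum $|x_k| = X$ is attained. At each archimedean place $v$, take the $m-1$ coordinate forms $Y_i$ for $i \neq k$ together with the sum $L(\mathbf{Y}) = Y_1 + \cdots + Y_m$; at each non-archimedean place in $S$, take the $m$ coordinate forms $Y_i$. These are linearly independent at every place. The double product of the normalised forms over all places of $\mathcal{S}$ (the archimedean places together with $S$), evaluated at $\mathbf{y}$, collapses by the product formula and the $S$-unit property so that the only non-trivial factor is essentially $|x_1 + \cdots + x_m|/X$ at one archimedean embedding. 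A hypothetical violation of the target inequality therefore forces this double product to be smaller than $H(\mathbf{y})^{-\epsilon'}$ for some $\epsilon' > 0$ (with $\epsilon'$ proportional to $\epsilon$ after converting $Y$ into $H(\mathbf{y})$), which is precisely the input hypothesis of the Subspace Theorem.

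The conclusion of the Subspace Theorem is that the exceptional points $\mathbf{y}$ lie in a finite union of proper linear subspaces of $\mathbb{P}^{m-1}$. Within any such subspace a non-trivial linear relation $\sum_i c_i x_i = 0$ holds, and grouping the summands according to the support of $(c_i)$ while using the hypothesis that no proper subsum of the $x_i$ vanishes lets one either derive a direct contradiction or rewrite $\sum_i x_i$ as a sum with strictly fewer than $m$ terms of $S$-units still satisfying the no-vanishing-subsum condition. This closes the argument by induction on $m$, the base case $m=1$ being trivial since then $|\sum x_i| = |x_1| = X \geq X Y^{-\epsilon}$ (noting $Y \geq X \geq 1$ may be arranged by scaling). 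The main obstacle is extracting a \emph{uniform} constant $C$ depending only on $\epsilon$ and $m$: this requires both the finiteness of the exceptional family produced by the Subspace Theorem and careful bookkeeping through the induction, to ensure that the constants from the smaller-$m$ cases combine with the contributions of the exceptional subspaces without accruing any dependence on the specific tuple $(x_1, \ldots, x_m)$.
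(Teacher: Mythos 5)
You should first note that the paper contains no proof of this statement at all: \Cref{th:evertse} is imported verbatim from the literature (Evertse, and van der Poorten--Schlickewei), and is used purely as a black box in \Cref{subsec:s-unit} to derive the lower bound~\eqref{eq:bound} and hence \Cref{lem:same sign}. So there is no in-paper argument to compare yours against; what you have written is a sketch of the proof of the cited theorem itself. That said, your route is the standard one, and essentially the one taken in the cited sources: Schlickewei's number-field, $p$-adic extension of the Schmidt Subspace Theorem, applied to a system consisting of coordinate forms together with the sum form $Y_1+\cdots+Y_m$, with the exceptional subspaces handled by induction on $m$. As a plan, it is the correct plan.

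Two caveats, one of which is a genuine (if repairable) error. First, your claim that $\log H(\mathbf{y})$ is comparable to $\log Y$ up to additive constants is false in the two-sided sense you assert: the projective height $H$ is invariant under scaling the tuple while $Y$ is not (take all $x_i$ divisible by a high power of a prime in $S$; then $H(\mathbf{y})$ is unchanged but $Y$ blows up). What is true is the one-sided bound $H(\mathbf{y})\le Y^{[K:\rat]}$, because the $x_i$ are algebraic integers and so every finite place contributes at most $1$ to the height; fortunately this is exactly the direction your argument needs to convert a violation of the target inequality into the smallness hypothesis of the Subspace Theorem (with, say, $\epsilon'=\epsilon/[K:\rat]$), so the step survives, but as written it rests on a false statement. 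Relatedly, the system of forms must be fixed \emph{before} the theorem is invoked, so you should split into $m$ cases according to which index attains $X$ rather than choosing $k$ after seeing the point. Second, the induction step is where nearly all of the work in the real proof lies, and your sketch passes over it: the linear relations supplied by the exceptional subspaces have arbitrary algebraic coefficients, not coefficients in $\{0,1\}$, so after substitution the new summands $c_ix_i$ need not be $S$-units for the original $S$ (one must enlarge $S$ by the primes dividing the finitely many coefficients, which is legitimate and uniform precisely because the exceptional family depends only on $\epsilon$, $m$, $S$, $K$), and the no-vanishing-subsum hypothesis must be re-established for the rewritten sum, which requires a genuine minimal-subsum partition argument rather than bookkeeping. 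Also, in the base case what you need is $Y\ge 1$, which is automatic for nonzero algebraic integers via the norm, not something ``arranged by scaling'' --- the statement is not scale-invariant. In short: right approach, but a proof plan rather than a proof of what is a deep theorem, which is presumably why the paper cites it rather than proving it.
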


We show how we can apply Theorem~\ref{th:evertse} to our setting.

Let $\ell\in\nat$, $0\le \ell < P$, and $\lv=\lu^{(\ell)}$. Assume that $\lv$ is not identically zero, then there exists $J\subseteq\set{1,\ldots,d}$, and $b_j\in \com$, $j\in J$ with $b_j\ne 0$ such that
\begin{align*}
  v_n= \sum_{j\in J}b_j(\rho_j\lambda_j)^n. 
\end{align*}
Let $J'\subseteq J$ be the dominant roots (with modulus $\rho$) among the roots in $J$ and write
\begin{align*}
  v_n= \underbrace{\sum_{j\in J'}b_j(\rho\lambda_j)^n}_{D(n)} + \underbrace{\sum_{j\in J\setminus J'}b_j(\rho_j\lambda_j)^n}_{R(n)}. 
\end{align*}
We will use Theorem~\ref{th:evertse} to show that the sign of $\lv$ asymptotically depends only on that of $D(n)$, by noticing that $D(n)$ is a sum of S-units. 

For \lrs over integers, the roots of the characteristic polynomial, as well as the analogue of the constants $c_i$ are algebraic integers in the respective splitting field. We have defined $\lu$ over rationals, however this can be sidestepped by observing that there exist natural numbers $a$ and~$b$ such that the entries of the sequence\footnote{This is a linear recurrence sequence because the point-wise product of two \lrs is again a \lrs.}  $\sq{ab^n u_n}$ are all integers, and furthermore it has the same sign description as $\lu$. Consequently we can assume that numbers $c_i$ and $\Lambda_i$ in the exponential polynomial description~\eqref{eq:exp-poly} are algebraic integers in $K$. Since $\mathcal{O}_K$ is a ring, it follows that the terms of the sum $D(n)$ above all belong to $\mathcal{O}_K$. 

% By replacing the sequence $\lu$ with $\sq{ab^n u_n}$ for certain
% positive integers $a,b$, which does not affect the sign, we can assume
% that numbers $c_i$ and $\Lambda_i$ in the exponential polynomial
% description~\eqref{eq:exp-poly} are algebraic integers.
%Observe first that the characteristic polynomial of $\lu$ is monic by
%construction.  We can moreover assume without loss of generality that
%it has integer coefficients by replacing the sequence $\lu$ by
%$\sq{L^nu_n}$, where $L\in\nat$ is the least common multiple of the
%denominators of the coefficients of the characteristic polynomial of
%$\lu$.  This sequence has the same sign description
%as~$\lu$. Therefore, almost periodicity of \lrs over rationals can be
%reduced to that of \lrs over integers. It follows that
%$\Lambda_i,c_i\in\mathcal{O}_K$ (the membership of $c_i$ can be shown
%by some simple linear algebra). From where we have that also
%$(\rho_j\lambda_j),b_j\in\mathcal{O}_K$, the latter because it is a
%sum of terms $c_i\Lambda_i^\ell$. 

Define $S$ to be the set of prime divisors of $(\rho_j\lambda_j)$ and
prime divisors of $b_j$. By definition of $S$ all
$(\rho_j\lambda_j),b_j$ are $S$-units and consequently $D(n)$ is a sum
of $S$-units. To apply Theorem~\ref{th:evertse}, we need now only
show that any sub-sum of $D(n)$ vanishes for only finitely many
$n$. To see this, observe that any sub-sum of $D(n)$ is itself a
non-degenerate \lrs, moreover we have assumed that it is not
identically zero (because $b_j\ne 0$); as a consequence of
\Cref{prop:shapiro}, it cannot vanish for infinitely many~$n$.

We now apply Theorem~\ref{th:evertse} to the sum of $S$-units $D(n)$.
In this situation, for all but finitely many $n$, we clearly have $X=|b|\rho^n$ for some $b=b_j, j\in J'$. Since for every root $(\rho_j\lambda_j)$ there is a field embedding among $h_1,\ldots,h_r$ that fixes it, for all but finitely many $n$, we have $Y \ge |b'|\rho^n$, for some constant $b'$. It follows that for every 
$\epsilon>0$ there exists $C>0$ such that for all but finitely many
$n$, we have
\begin{align}
  \label{eq:bound}
  \left |D(n)\right | \ge C\rho^{n(1-\epsilon)}.
\end{align}
%This is because $X=\rho^n$, and $Y \geq \rho^n$ : since $\lv$ is a
%\lrs over integers, $(\rho_j\lambda_j)$ are roots of the same monic
%polynomial in $\intg[x]$, field homomorphisms fix integers, hence
%$h_j$ send roots of this polynomial to other roots, since these
%homomorphisms are injective, they map to the roots with modulus
%$\rho$.

We are now ready to prove Lemma~\ref{lem:same sign}.
\begin{proof}[Proof of Lemma~\ref{lem:same sign}]
  If $\lv$ is identically zero the lemma clearly holds. Assume that $\lv$ is not identically zero. Since $\rho>\rho_j$ for $j\in J\setminus J'$, we have that there exists some $\epsilon_1>0$ such that for all but finitely many $n$,
\begin{align*}
  \left |R(n) \right |<\rho^{n(1-\epsilon_1)}.
\end{align*}
Since \eqref{eq:bound} holds for any $\epsilon>0$ it follows now that for all but finitely many $n$,
\begin{align*}
  \left | D(n) \right | > \left | R(n) \right |. 
\end{align*}
For all but fintely many $n$ we thus have 
\[ \sgn(v_n)=\sgn(D(n))=\sgn\left(\sum_{j\in J'}b_j\lambda_j^n\right)\, .\]
This completes the proof the lemma.
\end{proof}
\subsection{Orbits in $\torus^d$}
\label{subsec:walks}
Lemma~\ref{lem:same sign} tells us that the information about the sign
description $\sigma$ can be found in the set
$\set{(\lambda_1^n,\ldots,\lambda_d^n)\st n\in\nat}$.  We will recall a
classical result that says that the set above is a dense subset of the
set of points in the $d$-dimensional torus that have all the
multiplicative relations as $\lambda_1,\ldots,\lambda_d$. 

Consider the set of multiplicative relations of $\lambda=(\lambda_1,\ldots,\lambda_d)$:
\begin{align*}
  \mathcal{M}_\lambda\defeq\set{\mathbf{v}\in\intg^d\st \lambda_1^{v_1}\lambda_2^{v_2}\cdots \lambda_d^{v_d}=1}. 
\end{align*}
The one-dimensional torus is the unit circle $\torus\defeq\set{z\in\com\st |z|=1}$. Define the set of points in $\torus^d$ having all the multiplicative relations of $\lambda$ as follows:
\begin{align*}
  \torus_\lambda\defeq\set{\mathbf{z}\in\torus^d\st z_1^{v_1}z_2^{v_2}\cdots z_d^{v_d}=1\text{ for all }\mathbf{v}\in\mathcal{M}_\lambda}.
\end{align*}
Denote by $s\st \torus_\lambda\to\torus_\lambda$ the map
\begin{align*}
  (z_1,\ldots,z_d)\mapsto (z_1\lambda_1,\ldots,z_d\lambda_d). 
\end{align*}
With this new notation we are interested in the set
$\set{s^n(1,\ldots,1)\st n\in\nat}$. To prove that it is a dense
subset of $\torus_\lambda$ we will use Kronecker's theorem on
simultaneous Diophantine approximation.

\begin{theorem}[{Kronecker, see \eg\cite[Page 53]{cassels1957introduction}}]
  \label{th:kronecker}
  Let $\theta_1,\ldots,\theta_k$, $\varphi_1,\ldots,\varphi_k\in \rel$ such that for any integers $a_1,\ldots,a_k$,
  \begin{align*}
    \sum_{i=1}^ka_i\theta_i \in \intg \qquad \Rightarrow \qquad \sum_{i=1}^ka_i\varphi_i\in\intg.
  \end{align*}
  Then for every $\epsilon>0$, there exists $n\in\nat$ and integers $r_1,\ldots,r_k$ such that
  \begin{align*}
    \left | n\theta_i - r_i - \varphi_i\right |\le \epsilon,
  \end{align*}
  for all $i\in\set{1,\ldots,k}$.
\end{theorem}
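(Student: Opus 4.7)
The plan is to reformulate the conclusion as a density statement on the torus $\torus^k$ and then invoke the character-theoretic classification of closed subgroups of $\torus^k$. Write $\alpha := (\theta_1,\ldots,\theta_k) \bmod 1$ and $\beta := (\varphi_1,\ldots,\varphi_k) \bmod 1$, viewed as points of $\torus^k$. The existence, for every $\epsilon > 0$, of integers $n, r_1, \ldots, r_k$ with $|n\theta_i - r_i - \varphi_i| \le \epsilon$ for all $i$ is precisely the statement that $\beta$ lies in the closure $H := \overline{\{n\alpha : n \in \intg\}}$ in $\torus^k$.

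Now $H$ is a closed subgroup of the compact abelian group $\torus^k$. I would appeal to the classical fact (a case of Pontryagin duality) that any closed subgroup of $\torus^k$ is cut out by its annihilator in the character group $\intg^k$: writing $\chi_\mathbf{a}(z) := z_1^{a_1}\cdots z_k^{a_k}$ for $\mathbf{a} \in \intg^k$, one has
\[
H \;=\; \bigl\{ x \in \torus^k \st \chi_\mathbf{a}(x) = 1 \text{ for every $\mathbf{a}$ with } \chi_\mathbf{a}|_H \equiv 1 \bigr\}.
\]
Identifying the annihilator of $H$ is easy: $\chi_\mathbf{a}|_H \equiv 1$ iff $\chi_\mathbf{a} = 1$ on the dense subset $\{n\alpha\}$, iff $\chi_\mathbf{a}(\alpha) = 1$, iff $\sum_i a_i \theta_i \in \intg$. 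The hypothesis of the theorem reads: each such $\mathbf{a}$ also satisfies $\sum_i a_i \varphi_i \in \intg$, \ie $\chi_\mathbf{a}(\beta) = 1$. Substituting $x = \beta$ into the display yields $\beta \in H$, as required.

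The main obstacle is establishing the annihilator characterisation of closed subgroups of $\torus^k$ used above. A self-contained route argues by contrapositive. Assume $\beta \notin H$; then $H$ and the coset $\beta H$ are disjoint compact subsets of $\torus^k$, and Urysohn's lemma furnishes $f \in C(\torus^k)$ with $f|_H \equiv 0$ and $f|_{\beta H} \equiv 1$. By Stone--Weierstrass, trigonometric polynomials are uniformly dense in $C(\torus^k)$, so $f$ may be approximated to sup-norm error $1/3$ by some $p = \sum_\mathbf{a} c_\mathbf{a} \chi_\mathbf{a}$. Averaging $p$ against the Haar measure $\mu$ on $H$ via $\tilde{p}(x) := \int_H p(xh) \, d\mu(h)$ produces an $H$-invariant trigonometric polynomial $\tilde p = \sum_{\mathbf{a} \,:\, \chi_\mathbf{a}|_H \equiv 1} c_\mathbf{a} \chi_\mathbf{a}$, since characters that are not trivial on $H$ average to zero over $H$; the approximation bounds propagate, giving $|\tilde p(x)| < 1/3$ for $x \in H$ and $|\tilde p(\beta) - 1| < 1/3$. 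If every surviving $\chi_\mathbf{a}$ also satisfied $\chi_\mathbf{a}(\beta) = 1$, then $\tilde p$ would take the same value on $H$ as at $\beta$, contradicting these bounds. Hence some $\mathbf{a}$ with $\chi_\mathbf{a}|_H \equiv 1$ has $\chi_\mathbf{a}(\beta) \ne 1$, which--via the identification of the annihilator above--exhibits an $\mathbf{a}$ violating the theorem's hypothesis, completing the contrapositive.
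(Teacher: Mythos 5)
The paper gives no proof of this theorem at all: it is imported as a classical result with a citation to Cassels, so there is no internal argument to compare yours against. That said, the duality route you take is a standard modern proof of Kronecker's theorem, and its core is sound: the identification of the annihilator of $H$ with $\{\mathbf{a}\in\intg^k \st \sum_i a_i\theta_i\in\intg\}$, the averaging of a trigonometric polynomial over the Haar measure of $H$ (killing exactly the characters nontrivial on $H$), the propagation of the $1/3$-bounds, and the final contradiction are all correct.

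There is, however, one genuine gap, and it sits in your very first sentence. The theorem demands $n\in\nat$, but you translate the conclusion into ``$\beta$ lies in $H=\overline{\{n\alpha \st n\in\intg\}}$'' and assert that the two are \emph{precisely} the same statement. They are not: the conclusion is precisely that $\beta$ lies in the closure of the \emph{forward} orbit $\overline{\{n\alpha \st n\in\nat\}}$, whereas your argument, as written, only ever produces an integer $n$, possibly zero or negative. The distinction matters for this paper, since Kronecker's theorem is invoked (in \Cref{lem:dense}, and in the example of \Cref{sec:desc}) exactly to obtain density of forward orbits $\{s^n(\mathbf{z})\st n\in\nat\}$, which is what feeds the open-cover argument of \Cref{lem:bounded steps}. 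The repair is standard but is itself a compactness argument, not a formality: one must show that $\overline{\{n\alpha\st n\ge 1\}}$ is a subgroup of $\torus^k$ and hence equals $H$. Concretely, by pigeonhole/compactness, for every $\delta>0$ and every $m\ge 1$ there is $N>m$ with $N\alpha$ within $\delta$ of $0$ (find one such $N_0$ with $\|N_0\alpha\|<\delta/(m+1)$ and pass to a suitable multiple $jN_0$); then $(N-m)\alpha$ lies within $\delta$ of $-m\alpha$ with $N-m\ge 1$, so the closure of the forward orbit contains $0$ and every $-m\alpha$, hence contains the full cyclic group generated by $\alpha$ and therefore all of $H$. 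Equivalently, you may cite the fact that a closed sub-semigroup of a compact Hausdorff topological group is a subgroup. With that supplement, your proof is complete.
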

\begin{lemma}
  \label{lem:dense}
  For all $\mathbf{z}\in\torus_\lambda$, the set $O(\mathbf{z})\defeq\set{s^n(z_1,\ldots,z_d)\st n\in\nat}$ is dense in~$\torus_\lambda$. 
\end{lemma}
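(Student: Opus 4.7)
The plan is to reduce the density claim to Kronecker's theorem (\Cref{th:kronecker}) by passing to additive coordinates on the torus. Write $\lambda_j = e^{2\pi\ii\theta_j}$ for real numbers $\theta_j$, and for the given starting point $\mathbf{z}\in\torus_\lambda$ write $z_j = e^{2\pi\ii\psi_j}$. Fix an arbitrary target $\mathbf{w}\in\torus_\lambda$ with $w_j = e^{2\pi\ii\varphi_j}$ and some $\epsilon>0$. Since $s^n(\mathbf{z}) = (z_1\lambda_1^n,\ldots,z_d\lambda_d^n)$, approximating $\mathbf{w}$ by $s^n(\mathbf{z})$ within $\epsilon$ in $\torus^d$ reduces (after a harmless change of $\epsilon$) to finding $n\in\nat$ and integers $r_1,\ldots,r_d$ such that
\begin{align*}
  \bigl|\, n\theta_j - r_j - (\varphi_j - \psi_j)\,\bigr| \le \epsilon, \qquad j=1,\ldots,d.
\end{align*}

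This is precisely the conclusion of \Cref{th:kronecker} applied to the data $\theta_1,\ldots,\theta_d$ and $\varphi_1-\psi_1,\ldots,\varphi_d-\psi_d$. So the only thing to check is the hypothesis of that theorem: whenever $a_1,\ldots,a_d\in\intg$ satisfy $\sum_{j=1}^d a_j\theta_j \in \intg$, we must have $\sum_{j=1}^d a_j(\varphi_j-\psi_j)\in\intg$. But the premise is equivalent to $\prod_j \lambda_j^{a_j} = 1$, i.e.\ to $(a_1,\ldots,a_d)\in\mathcal{M}_\lambda$. Since both $\mathbf{w}$ and $\mathbf{z}$ lie in $\torus_\lambda$, the defining relation then gives $\prod_j w_j^{a_j}=1$ and $\prod_j z_j^{a_j}=1$, whence $\prod_j (w_j/z_j)^{a_j}=1$, which in additive coordinates reads $\sum_j a_j(\varphi_j-\psi_j)\in\intg$, as required.

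Applying \Cref{th:kronecker} with these data produces, for every $\epsilon>0$, some $n\in\nat$ with $s^n(\mathbf{z})$ within $\epsilon$ of $\mathbf{w}$ in $\torus^d$. Since $\mathbf{w}\in\torus_\lambda$ was arbitrary, $O(\mathbf{z})$ is dense in $\torus_\lambda$. The main conceptual point — which is not so much an obstacle as the whole content of the argument — is the observation that the definition of $\torus_\lambda$ was tailored exactly so as to make the Kronecker hypothesis automatic; beyond that, the proof is a direct translation between multiplicative and additive notation.
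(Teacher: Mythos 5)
Your proof is correct and follows essentially the same route as the paper: pass to additive coordinates, apply Kronecker's theorem (\Cref{th:kronecker}) with the shifts $\varphi_j-\psi_j$, and note that membership of both the starting point and the target in $\torus_\lambda$ is exactly what makes Kronecker's hypothesis hold. Your write-up is in fact slightly more explicit than the paper's in verifying that hypothesis, while the paper is more explicit in the final metric estimate $\left|e^{2\pi\ii x}-e^{2\pi\ii y}\right|\le 2\pi\left|x-y\right|$ that you compress into ``a harmless change of $\epsilon$''.
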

\begin{proof}
  Let $\mathbf{y}\in\torus_\lambda$. We have to prove that $O(\mathbf{z})$ intersects every $\epsilon$-ball around $\mathbf{y}$. Let us write
  \begin{align*}
    \lambda_i=e^{\theta_i\ 2\pi\ii},\qquad z_i=e^{\alpha_i\ 2\pi\ii},\qquad y_i=e^{\beta_i\ 2\pi\ii},
  \end{align*}
  and set $\varphi_i=\beta_i-\alpha_i$, for $i\in\set{1,\ldots,d}$. Because $\mathbf{y}$ and $\mathbf{z}$ belong to $\torus_\lambda$, and the multiplicative relations of $\lambda$ correspond to additive relations of $\theta$, the hypothesis of Theorem~\ref{th:kronecker} is fulfilled and the theorem can be applied. It tells us that there exists $n\in\nat$ and integers $r_1,\ldots,r_d$ such that for every $i\in\set{1,\ldots,d}$,
  \begin{align*}
    \left |z_i\lambda_i^n-y_i\right | = \left | e^{(\alpha_i+n\theta_i-r_i)\ 2\pi\ii}-e^{\beta_i\ 2\pi\ii}\right |\le 2\pi\ \left | \alpha_i+n\theta_i-r_i-\beta_i \right |\le 2\pi\epsilon. 
  \end{align*}
\end{proof}

Compactness of $\torus_\lambda$ together with Lemma~\ref{lem:dense}
entail that any open set in $\torus_\lambda$ can be reached in a
bounded number of steps from any other point.
\begin{lemma}
  \label{lem:bounded steps}
  Let $U\subseteq \torus_\lambda$ be an open set. There exists $B\in\nat$ such that for every $\mathbf{x}\in\torus_\lambda$, there exists $n\le B$ such that $s^n(\mathbf{x})\in U$. 
\end{lemma}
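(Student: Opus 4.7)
The plan is to combine the density of orbits (Lemma~\ref{lem:dense}) with a standard compactness argument, using the continuity of the map $s$.

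First I would fix an arbitrary point $\mathbf{x}\in\torus_\lambda$. By Lemma~\ref{lem:dense}, the orbit $O(\mathbf{x})=\{s^n(\mathbf{x}):n\in\nat\}$ is dense in $\torus_\lambda$, so since $U$ is open and non-empty (if $U$ is empty the statement is vacuous, taking e.g.\ $B=0$), there exists some $n_\mathbf{x}\in\nat$ with $s^{n_\mathbf{x}}(\mathbf{x})\in U$. Next I would use the fact that $s^{n_\mathbf{x}}$ is a continuous map from $\torus_\lambda$ to itself (being a composition of $n_\mathbf{x}$ copies of the continuous coordinate-wise multiplication by $\lambda$). Since $U$ is open, the preimage $V_\mathbf{x}\defeq (s^{n_\mathbf{x}})^{-1}(U)$ is an open neighbourhood of $\mathbf{x}$ in $\torus_\lambda$, and by construction $s^{n_\mathbf{x}}(\mathbf{y})\in U$ for every $\mathbf{y}\in V_\mathbf{x}$.

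Then I would invoke compactness. The sets $\{V_\mathbf{x}:\mathbf{x}\in\torus_\lambda\}$ form an open cover of $\torus_\lambda$; since $\torus_\lambda$ is a closed subset of the compact space $\torus^d$ (indeed it is defined by the vanishing of continuous functions $\mathbf{z}\mapsto z_1^{v_1}\cdots z_d^{v_d}-1$), it is compact, so we can extract a finite subcover $V_{\mathbf{x}_1},\ldots,V_{\mathbf{x}_k}$ with associated exponents $n_{\mathbf{x}_1},\ldots,n_{\mathbf{x}_k}$. Setting $B\defeq\max\{n_{\mathbf{x}_1},\ldots,n_{\mathbf{x}_k}\}$ does the job: any $\mathbf{x}\in\torus_\lambda$ lies in some $V_{\mathbf{x}_i}$, whence $s^{n_{\mathbf{x}_i}}(\mathbf{x})\in U$ and $n_{\mathbf{x}_i}\le B$.

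There is no real obstacle here; the only points to verify carefully are that $\torus_\lambda$ is indeed compact (which follows from it being a closed subset of the compact space $\torus^d$) and that the map $s$ is continuous, so that preimages of open sets under its iterates remain open. Both are essentially immediate from the definitions given in Section~\ref{subsec:walks}.
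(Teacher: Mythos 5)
Your proof is correct and takes essentially the same route as the paper: density of orbits (Lemma~\ref{lem:dense}), continuity of $s$ making preimages of $U$ open, and compactness of $\torus_\lambda$ to extract a finite subcover whose maximal index serves as $B$ — your neighbourhoods $V_{\mathbf{x}}$ are precisely the sets $s^{-n_{\mathbf{x}}}(U)$ that constitute the paper's open cover. One small quibble: your parenthetical about $U=\emptyset$ is backwards — in that case the conclusion is false, not vacuous (no $n$ can satisfy $s^n(\mathbf{x})\in U$) — so, exactly as in the paper's own proof, you implicitly need $U$ non-empty, which is how the lemma is always applied.
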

\begin{proof}
  Lemma~\ref{lem:dense} implies that for any $\mathbf{z}\in \torus_\lambda$ , there exists some $n\in\nat$ such that $s^n(\mathbf{z})\in U$. Whence by continuity of the successor function $s$, we have that
  \begin{align*}
    \set{s^{-n}(U)\st n\in\nat} = \torus_\lambda,
  \end{align*}
  is an open cover of $\torus_\lambda$. Since $\torus_\lambda$ is bounded and closed as a subset of $\torus^d$, it is compact. It follows that it admits a finite sub-cover, \ie there exists $B\in\nat$ such that
  \begin{align*}
    \set{s^{-n}(U)\st n\in\set{1,2,\ldots, B}}=\torus_\lambda.
  \end{align*}
\end{proof}

\subsection{The proof of \Cref{thm:simple almost periodic}}
\label{subsec:proof of simple almost periodic}

It is tempting to try to prove \Cref{thm:simple almost periodic} by
showing that the sign descriptions of every subsequence
$\lu^{(\ell)}$, where $0\le \ell < P$, is almost periodic and
combining the results. Unfortunately the proof cannot be modular in
this respect, for the simple fact that the product of two almost
periodic sequences need not be almost periodic itself; see \cite[Theorem 22]{muchnik03_almos_period_sequen}. We must directly prove almost periodicity for the whole sequence, which is done as follows. 

Let $\lu$ be a simple \lrs, $\sigma\in\{-,0,+\}^\omega$ its sign description, and $w\in\set{-,0,+}^*$, a pattern that occurs infinitely many times in $\sigma$. We have to prove that the distances between consecutive occurrences are bounded.

Since $w$ occurs infinitely many times in $\sigma$, there is some $m\in\nat$ such that for infinitely many $n$,
\begin{align}
  \label{eq:occur}
  w\text{ occurs in }\sigma_{nP}\sigma_{nP+1}\cdots \sigma_{(n+m)P-1},
\end{align}
where we recall that $P$ was defined as the least common multiple of
orders of roots of unity among the ratios of roots of $\lu$. Since
the right-hand side of~\eqref{eq:occur} is a word over a finite alphabet,
there exists a word 
$w'\in\set{-,0,+}^*$ that has $w$ as an infix such that for infinitely many $n$,
\begin{align*}
  w' = \sigma_{nP}\sigma_{nP+1}\cdots \sigma_{(n+m)P-1}.
\end{align*}
We prove that there is an upper bound for the distances among
successive such $n$, which clearly implies almost periodicity of
$\sigma$. 

Cut the word $w'$ into $m$ factors of length $P$ such that
\begin{align*}
  w'=w'(1)w'(2)\cdots w'(m). 
\end{align*}
Applying Lemma~\ref{lem:same sign} to each subsequence $\lu^{(\ell)}$,
and combining the resulting linear functions together, we obtain a linear
function $f\st \torus_\lambda\to\rel^P$ such that for all but finitely
many $n$, if $f\left(s^n(1,\ldots,1)\right)=(a_1,\ldots, a_P)$, then
\begin{align*}
  \sgn(a_1)\ \sgn(a_2)\cdots\sgn(a_P)=\sigma_{nP}\sigma_{nP+1}\cdots\sigma_{(n+1)P-1}. 
\end{align*}

While $f$ clearly maps $s^n(1,\ldots,1)$ to $\rel^P$, the reason why the same is true for other elements of $\torus_\lambda$ is as follows. The linear map in Lemma~\ref{lem:same sign} has reals as a co-domain because certain pairs among $\lambda_1,\ldots,\lambda_d$ are complex conjugates of one another, which allows for cancelling out of their imaginary parts. In every tuple in $\torus_\lambda$ the same pairs of numbers are complex conjugates of one another, since being a complex conjugate for elements of the unit circle is a multiplicative relation, and elements of $\torus_\lambda$, by definition, have all the multiplicative relations of $(\lambda_1,\ldots, \lambda_d)$. 

Denote by $g\st \torus_\lambda\to \set{-,0,+}^P$ the composition of $f$ and $\sgn$ applied component-wise.

Since $\lu^{(\ell)}$ are all non-degenerate, because of \Cref{prop:shapiro}, some coordinates of $f(s^n(1,\ldots,1))$ are identically zero, and some are zero only for finitely many $n$. Denote by $Z\subseteq\set{1,2,\ldots,P}$ the former. On components in $Z$, $f$ is a constant function mapping to zero. 

It follows from the continuity of $f$ that for elements of $v\in\set{-,0,+}^P$ that have zeros exactly in coordinates $Z$, $f^{-1}(v)$ is an open subset of $\torus_\lambda$. Since $w'(1),\ldots,w'(m)$ are words that occur infinitely often, they must have zeros exactly in positions in $Z$, hence the set
\begin{align*}
  U(w')\defeq\set{\mathbf{x}\in\torus_\lambda\st g(\mathbf{x})=w'(1), g\left(s(\mathbf{x})\right)=w'(2),\ldots, g\left(s^{m-1}(\mathbf{x})\right)=w'(m)},
\end{align*}
is open. By applying Lemma~\ref{lem:bounded steps} we know that there exists $B\in\nat$ such that for any $\mathbf{y}\in\set{s^n(1,\ldots,1)\st n\in\nat}$, there exists $k\le B$ with $s^k(\mathbf{y})\in U(w')$. So from any point, in fewer than $B$ steps, we enter the set $U(w')$ from where $g$ outputs $w'$ (in the next $m$ steps). This proves that the distances between consecutive $n$ for which \eqref{eq:occur} holds is at most $B\cdot P$. 
\subsection{Effectiveness}

We make a closer inspection of the proof of almost periodicity above, in order to gather three lemmas which pull out what can be effectively computed about the sign description.

\begin{lemma}
  \label{lem:nonemptyU}
  Let $w'=w'(1)w'(2)\cdots w'(m)$  be such that $w'(i)$ are factors of length $P$, then the following two statements are equivalent
  \begin{itemize}
  \item for infinitely many $n$,
  \begin{align*}
    \sigma_{nP}\sigma_{nP+1}\cdots \sigma_{(n+m)P-1}=w'
  \end{align*}
  \item $U(w')$ is non-empty.
  \end{itemize}
\end{lemma}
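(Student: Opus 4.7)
The plan is to leverage the relationship established in the proof of \Cref{thm:simple almost periodic}: namely that the map $g:\torus_\lambda\to\{-,0,+\}^P$ has the property that, for all but finitely many $n$,
\[
g(s^n(1,\ldots,1)) \;=\; \sigma_{nP}\sigma_{nP+1}\cdots\sigma_{(n+1)P-1}.
\]
Let $n_0\in\nat$ be a threshold beyond which this equality holds. By iterating, for every $n\ge n_0$ we have $g(s^{n+k}(1,\ldots,1))=\sigma_{(n+k)P}\cdots\sigma_{(n+k+1)P-1}$ for each $0\le k<m$, hence
\[
s^n(1,\ldots,1)\in U(w') \quad\Longleftrightarrow\quad
\sigma_{nP}\cdots\sigma_{(n+m)P-1}=w'.
\]
This pointwise correspondence is the engine of both implications.

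For the $(\Downarrow)$ direction, I would proceed as follows. Suppose $w'$ occurs at infinitely many block positions $nP$. Pick any such $n$ with $n\ge n_0$; the displayed equivalence then gives $s^n(1,\ldots,1)\in U(w')$, so $U(w')$ is non-empty. This step is essentially bookkeeping about the finite exception set from \Cref{lem:same sign}.

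For the $(\Uparrow)$ direction, which carries the real content, I would first recall from the discussion preceding this lemma that $U(w')$ is an \emph{open} subset of $\torus_\lambda$: the words $w'(i)$ have zeros precisely in the coordinates $Z$ that are forced to be zero, so the corresponding preimages under the componentwise-sign composition of $f$ are open. Given $U(w')\ne\emptyset$, I would invoke \Cref{lem:bounded steps} with $U=U(w')$: it yields $B\in\nat$ such that every orbit of $s$ enters $U(w')$ within $B$ steps. Applied to the point $s^{kB}(1,\ldots,1)$ for each $k\in\nat$, this produces infinitely many indices $n$ with $s^n(1,\ldots,1)\in U(w')$. Dropping the finitely many $n<n_0$ and applying the displayed equivalence again yields infinitely many $n$ with $\sigma_{nP}\cdots\sigma_{(n+m)P-1}=w'$.

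The only substantive point to verify is the use of \Cref{lem:bounded steps}, which requires $U(w')$ to be open in $\torus_\lambda$ and non-empty; the openness is the nontrivial prerequisite, and it has already been justified in the body of \Cref{subsec:proof of simple almost periodic}. The remainder is straightforward handling of the finite exception inherited from \Cref{lem:same sign}.
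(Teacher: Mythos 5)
Your proposal is correct and follows essentially the same route as the paper's own proof: the forward direction picks an occurrence of $w'$ beyond the finite exception threshold of \Cref{lem:same sign} and reads off a point of $U(w')$, while the reverse direction uses openness of $U(w')$ (valid because the factors $w'(i)$ have zeros exactly in the coordinates $Z$) together with \Cref{lem:bounded steps} to conclude the orbit enters $U(w')$ infinitely often. Your write-up is merely a bit more explicit than the paper's about the bookkeeping of the threshold $n_0$ and about extracting infinitely many visit times from the bound $B$.
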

\begin{proof}

  \noindent($\Rightarrow$) Let $k\in\nat$ be such that the equation in \Cref{lem:same sign} holds for all $\ell\in\set{0,1,\ldots,P-1}$. From the hypothesis there exists some $n>k$ such that $\sigma_{nP}\cdots \sigma_{(n+m)P-1}=w'$. Now the definition of the set $U(w')$ implies that it is not empty.

  \noindent($\Leftarrow$) Since $U(w')$ is non-empty and open, we can apply \Cref{lem:bounded steps}, which gives us a bound $B$ on how many steps we have to take in the walk in $\torus^d$ before we enter again the set $U(w')$. Therefore we enter the set $U(w')$ infinitely many times, and hence the word $w'$ occurs infinitely often in~$\sigma$. 
\end{proof}

The next lemma says that modulo a finite prefix, the word $\sigma$ is strongly recurrent, which means that if some word occurs in it, it does so infinitely often. This stems from the fact that after some threshold, the sign description only depends on the walk in $\torus^d$, which is repetitive.

\begin{lemma}
  \label{lem:occurinf}
  There exists a threshold $c\in\nat$ such that any word that occurs in the suffix $\sigma_c\sigma_{c+1}\cdots$, occurs infinitely often in $\sigma$. 
\end{lemma}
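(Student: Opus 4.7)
The plan is to exploit the fact that beyond some threshold, the sign description is entirely dictated by the walk $s^n(1,\ldots,1)$ in the compact set $\torus_\lambda$ via the function $g$, and then invoke Lemma~\ref{lem:nonemptyU} (or equivalently Lemma~\ref{lem:bounded steps}) to conclude that every factor visible past the threshold reappears infinitely often.

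First, let $n_\ell$ denote the threshold provided by Lemma~\ref{lem:same sign} applied to the non-degenerate subsequence $\lu^{(\ell)}$, for each $0 \le \ell < P$, and set $c := P\cdot \max_{0 \le \ell < P} n_\ell$. Now suppose a word $w \in \{-,0,+\}^*$ occurs in $\sigma$ at some position $p \ge c$. Align $p$ to a multiple of $P$: write $p = nP + r$ with $0 \le r < P$, so $n \ge \max_\ell n_\ell$, and pick $m := \lceil (r+|w|)/P \rceil$ so that $w$ is an infix of the block $w' := \sigma_{nP}\sigma_{nP+1}\cdots \sigma_{(n+m)P-1}$.

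Because $n$ exceeds every threshold $n_\ell$, Lemma~\ref{lem:same sign} applied coordinate-wise yields that the $P$-blocks of $w'$ are precisely $g(s^n(1,\ldots,1)), g(s^{n+1}(1,\ldots,1)),\ldots, g(s^{n+m-1}(1,\ldots,1))$. Hence $s^n(1,\ldots,1) \in U(w')$, so $U(w')$ is non-empty. By Lemma~\ref{lem:nonemptyU}, $w'$ occurs in $\sigma$ at positions of the form $n'P$ for infinitely many $n'$; since $w$ is an infix of $w'$, the pattern $w$ itself occurs infinitely often in $\sigma$.

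The only real delicacy lies in the alignment: a factor $w$ appearing at an arbitrary position need not align with the period-$P$ grid used to define $U(\cdot)$, which is why one pads $w$ to a $w'$ whose length is a multiple of $P$ and starts at a multiple of $P$; the factor $P$ in the definition of $c$ absorbs this shift. Note also that $c$ is not effectively computable, since it inherits the ineffective thresholds $n_\ell$ from Lemma~\ref{lem:same sign} (cf.~Remark~\ref{rem:n_0}); this is consistent with the non-uniform effectiveness discussed for Theorem~\ref{thm:simple almost periodic}.
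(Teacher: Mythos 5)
Your proposal follows the paper's general strategy (pad $w$ to a grid-aligned block $w'$, show $U(w')$ is non-empty, invoke recurrence of the walk), but it has a genuine gap: you never establish that $U(w')$ is \emph{open}, and your threshold $c$ is not strong enough to guarantee it. The set $U(w')$ is open only when the blocks $w'(i)$ have zeros exactly in the coordinates $Z$ corresponding to the identically-zero subsequences $\lu^{(\ell)}$: a sign constraint ``$+$'' or ``$-$'' pulls back to an open set under the continuous map $f$, but a constraint ``$0$'' in a coordinate outside $Z$ pulls back to a closed set with empty interior. Openness is precisely the hypothesis of \Cref{lem:bounded steps}, and it is also the (implicit) hypothesis under which the ($\Leftarrow$) direction of \Cref{lem:nonemptyU} is proved --- its proof begins ``Since $U(w')$ is non-empty \emph{and open}\ldots''. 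So citing \Cref{lem:nonemptyU} ``as a black box'' does not close the gap: in the regime you apply it, its conclusion can actually fail. Concretely, take $P=1$ and $\lu$ non-degenerate, not identically zero, with a zero at some index $n^*$ beyond the sign-agreement threshold of \Cref{lem:same sign} (the \emph{statement} of that lemma permits this, since both $u_{n^*}$ and the linear function may vanish simultaneously). Then the one-letter word $w=0$ occurs at position $n^*\ge c$ for your choice of $c$, and $U(w)$ is non-empty (it contains the walk point $s^{n^*}(1,\ldots,1)$), yet by \Cref{prop:shapiro} the letter $0$ occurs only finitely often in $\sigma$. So with your threshold the lemma's conclusion is false, not merely unproven.

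The missing ingredient is the paper's second threshold: choose $n_2$ such that every subsequence $\lu^{(\ell)}$, $0\le\ell<P$, is either identically zero or has no zeros after $n_2$ (well defined by \Cref{prop:shapiro}), and set $c$ to dominate \emph{both} $n_2$ and the sign-agreement thresholds. Then any $w$ occurring past $c$ yields a block word $w'$ whose factors $w'(i)$ have zeros exactly in the positions $Z$, which makes $U(w')$ open as well as non-empty, and \Cref{lem:bounded steps} then bounds the gaps between successive occurrences. (Alternatively, one can observe that the threshold constructed in the \emph{proof} of \Cref{lem:same sign} already excludes zeros outside $Z$, since beyond it $|D(n)|>|R(n)|$ and $D(n)\neq 0$; but that property is not part of the lemma's statement, so if you want to use it you must say so explicitly --- your proof as written does neither.)
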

\begin{proof}
  Let $n_1\in\nat$ be such that for all $n\ge n_1$ and $0\le \ell < P$, the equation in \Cref{lem:same sign} holds. Let $n_2\in\nat$ be such that for all $0\le\ell < P$, $\lu^{(\ell)}$ is either identically zero or has no zeros after $n_2$ (well defined thanks to \Cref{prop:shapiro}). Set $c=\max\set{n_1,n_2}$. Let $w$ be some word that occurs after the threshold $c$ in $\sigma$. Then there is some $n$ and $m$ such that $w$ occurs in
\begin{align*}
  \sigma_{nP}\sigma_{nP+1}\cdots \sigma_{(n+m)P-1}.
\end{align*}
Call this word $w'$ and let $w'(1),\ldots,w'(m)$ be its decomposition into factors of length $P$. Since $w'$ occurs after $n_2$ the factors $w'(1), \ldots, w'(m)$ have zeros exactly in the same positions $Z\subseteq\set{1,\ldots , P}$. This, together with the fact that $w'$ occurs after $n_1$ implies that $U(w')$ is open and non-empty. Now \Cref{lem:bounded steps} gives us a bound for the distances between consecutive occurrences of $w'$. 
\end{proof}

The last crucial property is that for all $w$ the set $U(w)$ is semi-algebraic and effective. We give here a sketch and relegate the full proof, as well as the relevant definitions, to \Cref{sec:foth}.

\begin{restatable}{lemma}{fodeflem}
  \label{lem:fodef}
  For all $w$, $U(w)$ is semi-algebraic, and we can compute the first-order formula that defines it. 
\end{restatable}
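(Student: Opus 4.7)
The plan is to exhibit $U(w)$ as a semi-algebraic subset of $\rel^{2d}$ under the identification $\com^d \cong \rel^{2d}$ via $z_j = a_j + \ii b_j$, building the defining formula as a conjunction of three blocks: membership in $\torus^d$, the multiplicative relations cutting out $\torus_\lambda$, and the sign pattern $w$ itself.

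The first block is simply the equations $a_j^2 + b_j^2 = 1$ for $1 \leq j \leq d$. For the second block, $\mathcal{M}_\lambda$ is a sublattice of $\intg^d$ and hence finitely generated, say by vectors $\mathbf{v}^{(1)}, \ldots, \mathbf{v}^{(N)} \in \intg^d$. Each generator $\mathbf{v}^{(t)}$ contributes the constraint $\prod_j z_j^{v^{(t)}_j} = 1$; using $z_j^{-1} = \overline{z_j}$ on the unit circle to eliminate negative exponents, this expands into a pair of polynomial equations in $a_j, b_j$. Computing such a $\intg$-basis of $\mathcal{M}_\lambda$ from the algebraic numbers $\lambda_1, \ldots, \lambda_d$ is a standard task in algorithmic number theory (Masser-type height bounds on generators combined with lattice reduction).

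For the third block, we unwind the definitions of $f$ and $g$ from the proof of \Cref{thm:simple almost periodic}. Each coordinate $f_\ell$, $0 \leq \ell < P$, is a $\com$-linear form $f_\ell(\mathbf{z}) = \sum_{i=1}^d z_i^{(\ell)} z_i$ whose coefficients $z_i^{(\ell)}$ are the algebraic numbers produced by \Cref{lem:same sign} applied to the subsequence $\lu^{(\ell)}$; they can be read off from the exponential-polynomial representation of $\lu$. Since $s$ acts on $\torus_\lambda$ by componentwise multiplication by $\lambda_i$, the composite satisfies $f_\ell(s^{k-1}(\mathbf{x})) = \sum_i \bigl(z_i^{(\ell)} \lambda_i^{k-1}\bigr) x_i$, again a $\com$-linear form with algebraic coefficients. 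As observed in the proof of \Cref{thm:simple almost periodic}, this form is real-valued on $\torus_\lambda$ because the complex-conjugation pairings among the $\lambda_i$ are preserved as multiplicative relations on $\torus_\lambda$. Expanding $z_i = a_i + \ii b_i$, it becomes a real-linear form in $a_j, b_j$ with algebraic coefficients, and the constraint $\sgn(f_\ell(s^{k-1}(\mathbf{x}))) = w(k)_\ell$ is then a single polynomial constraint of the form $>0$, $<0$, or $=0$. Conjoining these over $1 \leq k \leq m$ and $0 \leq \ell < P$ and combining with the first two blocks produces a quantifier-free formula with algebraic coefficients that defines $U(w)$. To obtain a formula with rational coefficients, each algebraic constant is replaced by a fresh variable that is existentially quantified and constrained by its integer minimal polynomial together with a separating rational interval.

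The main obstacle is not the logical encoding itself, which is routine, but rather the appeal to effectiveness of two auxiliary computations: a $\intg$-basis of $\mathcal{M}_\lambda$, and the coefficients $z_i^{(\ell)}$ given by \Cref{lem:same sign}. Both are classical: multiplicative dependencies among algebraic numbers can be computed effectively, and the exponential-polynomial form of an \lrs together with extraction of the dominant-root contribution is entirely algorithmic.
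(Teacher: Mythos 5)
Your proposal is correct and follows essentially the same route as the paper's own proof: identify $\com^d$ with $\rel^{2d}$, cut out $\torus_\lambda$ using a finite, Masser-effective basis of $\mathcal{M}_\lambda$, and encode the sign constraints via the computable algebraic coefficients of $f\circ s^k$ coming from \Cref{lem:same sign}. The only cosmetic differences are that the paper tests candidate basis vectors below Masser's bound directly with Tarski's algorithm rather than invoking lattice reduction, and it handles algebraic constants by their defining first-order formulas, which is the same device as your existentially quantified minimal-polynomial encoding.
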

\begin{proof}[Proof sketch]
  The set $U(w)$ is a subset of $\com^d$ while semi-algebraic sets are subsets of $\rel^n$. However there is a simple first-order interpretation of $\com$ in $\rel$, we take for every complex number two real variables, one for the real part and one for the imaginary part.

  The set of normalized roots $\lambda_1, \ldots, \lambda_d$ are algebraic numbers whose first-order formulas we can effectively construct given a \lrs. Their multiplicative relations, \ie the set $\mathcal{M}_\lambda$ has a finite basis, which can be computed using a result of Masser. Whence it follows that $\torus_\lambda$ is semi-algebraic and that we can effectively construct the first-order formula that defines it.

  The lemma now follows because the coefficients of the linear map $f\st \torus_\lambda\to\rel^P$ (in the definition of $U(w)$ and $g$), are algebraic and we can compute their first-order formula. For a full proof see~\Cref{sec:foth}.
\end{proof}
%%% Local Variables:
%%% mode: latex
%%% TeX-master: "main"
%%% End:

\section{A counter-example for general \lrs}
\label{sec:cexample}
We have proved that simple \lrs have almost periodic sign
descriptions. In this section we show that the same does not hold for
general \lrs.  The additional structure of simple \lrs is consonant
with what is known about decidability: e.g., it is decidable whether a
simple \lrs is ultimately positive, whereas the decidability of the
same question for general \lrs is open and a positive result would
imply computability of Lagrange constants of certain transcendental
numbers, see~\cite[Theorem
  5.1]{ouaknine13_posit_probl_low_order_linear_recur_sequen}.

Let $\lambda\in\torus$ be any algebraic number in the unit circle that is not a root of unity. Consider the generalized power sum
\begin{align*}
  u_n \defeq \frac n 2\ \lambda^n + \frac n 2\ \overline{\lambda}^n + (1-n)\ 1^n,
\end{align*}
where $\overline{\lambda}$ is the complex conjugate of $\lambda$. As discussed before, such sums are equivalent to linear recurrence sequences (see \cite[Section 1.1.6]{recseq}), so we can easily extract an \lrs of order six from the sum above.
This example has been designed in such a way as to have the following property. Set $\theta:=\arg(\lambda)$, and using Euler's formula deduce that $u_n=1-n+n\cos(n\theta)$. Consequently for all $n$, we have: 
\begin{align}
  \label{eq:cexample}
  u_n>0\qquad \Leftrightarrow\qquad \cos(n\theta)>1-\frac 1 n.
\end{align}
We will prove that this sequence has a sign description that is not almost periodic. More precisely we will prove that (a) the letter `$+$' occurs infinitely often in the sign description and (b) that the distances between consecutive occurrences can be arbitrarily large.

The intuition is as follows. Since $\lambda$ is not a root of unity,
$\theta$ is not a rational multiple of $\pi$ and hence
$\set{\cos(n\theta)\st n\in\nat}$ is a dense subset of $[0,1]$. Using
basic properties of the cosine function we can prove that the right-hand inequality
in~\eqref{eq:cexample}, \ie $\cos(n\theta)>1-1/n$, is true for
infinitely many $n$. However, since the interval~$(1-1/n,1]$ becomes
  arbitrarily tight as $n$ increases, we have to wait longer and longer
  until $\cos(n\theta)$ enters it.

We give now the proofs of the two claims (a) and (b) above.

\begin{proposition}
  For infinitely many $n\in\nat$, $u_n>0$.
\end{proposition}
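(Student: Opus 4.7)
The plan is to reduce the statement, via the equivalence (\ref{eq:cexample}), to showing that $\cos(n\theta) > 1 - 1/n$ holds for infinitely many $n$, and then to feed a quantitative Diophantine approximation bound on $\|n\theta/(2\pi)\|$ into the second-order Taylor expansion of cosine near zero. Since $\lambda$ is an algebraic number on the unit circle that is not a root of unity, $\theta=\arg(\lambda)$ is not a rational multiple of $\pi$, so $\alpha \defeq \theta/(2\pi)$ is irrational.

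By Dirichlet's theorem on Diophantine approximation applied to $\alpha$, there exist infinitely many positive integers $n$ such that
\[ \left\|\, n\alpha\, \right\| \,<\, \frac{1}{n}, \]
where $\|\cdot\|$ denotes distance to the nearest integer. For each such $n$ we pick an integer $k_n$ with $|n\theta - 2\pi k_n| < 2\pi/n$. Using the elementary estimate $\cos(x) \ge 1 - x^2/2$, and the $2\pi$-periodicity of cosine, this gives
\[ \cos(n\theta) \,\ge\, 1 - \frac{(2\pi)^2}{2 n^2} \,=\, 1 - \frac{2\pi^2}{n^2}. \]

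For every $n > 2\pi^2$ we have $2\pi^2/n^2 < 1/n$, so $\cos(n\theta) > 1 - 1/n$, and by (\ref{eq:cexample}) we conclude $u_n > 0$. Since Dirichlet's theorem produces infinitely many such $n$, and only finitely many of them can fail the threshold $n > 2\pi^2$, we obtain infinitely many indices with $u_n > 0$.

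I do not foresee any real obstacle in this argument: the only point that needs a little care is getting the right quantitative rate — the density consequence of Kronecker's theorem alone (as used in the introductory paragraph before the proposition) is insufficient, because one needs $n\theta$ to approach multiples of $2\pi$ at a rate faster than $1/\sqrt{n}$, and this is exactly what Dirichlet's theorem delivers.
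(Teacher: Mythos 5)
Your proof is correct and follows essentially the same route as the paper's: both apply Dirichlet's theorem to get $[n\theta]_{2\pi}<2\pi/n$ infinitely often and then feed this into the Taylor bound $\cos x\ge 1-x^2/2$ to conclude $\cos(n\theta)>1-1/n$, hence $u_n>0$ via the equivalence~\eqref{eq:cexample}. The only cosmetic difference is that the paper passes through the monotonicity of cosine and the intermediate bound $\cos(2\pi/n)\ge 1-1/n$, whereas you compare $1-2\pi^2/n^2$ with $1-1/n$ directly for $n>2\pi^2$; your closing remark that Kronecker density alone would not suffice is also accurate.
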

\begin{proof}
  For a real number $x\in\rel$ denote by $[x]$ its distance to the closest integer, and by $[x]_{2\pi}$ its distance to the closest integer multiple of $2\pi$, \ie
  \begin{align*}
    [x]\defeq \min_{k\in\intg}|x-k|,\qquad\qquad [x]_{2\pi}\defeq \min_{k\in\intg}|x-2k\pi|.
  \end{align*}
  We will first prove that for infinitely many $n\in\nat$,
  \begin{align*}
    [n\theta]_{2\pi}<\frac {2\pi} n.
  \end{align*}
  This is a corollarly of Dirichlet's Theorem~\cite[Chapter 2, Theorem
    1]{lang95_introd_dioph_approx}, which states that for every
  $x\in\rel$ there exist infinitely many $n\in\nat$ such that
  $[nx]<1/n$.  Indeed since for any $x\in\rel$,
  $[x]_{2\pi}=2\pi[x/(2\pi)]$, Dirichlet's Theorem implies that for
  infinitely many $n\in\nat$,
  \begin{align}
    \label{eq:cexample bound}
    [n\theta]_{2\pi}=2\pi\left[n\frac \theta {2\pi}\right]<\frac {2\pi} n.
  \end{align}
By the monotonicity of the cosine function on $[0,\pi]$ we have that
for all $n \geq 2$, $[n\theta]_{2\pi}<2\pi/n$ if and only if
$\cos(n\theta)>\cos(2\pi/n)$. As a consequence of \eqref{eq:cexample
  bound}, the inequality $\cos(n\theta)>\cos(2\pi/n)$ holds for
infinitely many $n$.

  Using a Taylor series expansion of cosine, we can prove that for $x$ sufficiently close to $0$,
  \begin{align*}
    \cos x\ge 1-\frac {x^2} 2\ge 1-\frac {|x|}{2\pi}. 
  \end{align*}
  Applying this bound to $x=2\pi/n$, we derive that for infinitely many $n\in\nat$,
\begin{align*}
  \cos(n\theta)>\cos\left(\frac {2\pi} n\right) \ge 1-\frac 1 n. 
\end{align*}
Therefore, from \eqref{eq:cexample} it follows that $\sq u$ is positive in infinitely many positions, or equivalently the letter `$+$' occurs infinitely often in its sign description.
\end{proof}

\begin{proposition}
  For every $p\in\nat$ we can find $p$ consecutive entries in $\sq u$ that are negative or zero. 
\end{proposition}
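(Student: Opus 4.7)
Recall from~\eqref{eq:cexample} that $u_n\le 0$ is equivalent to $\cos(n\theta)\le 1-1/n$. The strategy is to find arbitrarily large $N$ such that, for every $k\in\{0,1,\ldots,p-1\}$, the angle $(N+k)\theta \bmod 2\pi$ stays bounded away from $0$ by some fixed positive constant; this will force $\cos((N+k)\theta)$ to be bounded away from $1$, and then the fact that $1-1/n\to 1$ as $n\to\infty$ yields $u_{N+k}\le 0$ for all such $k$.

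Fix a constant $c>0$ with $c<\pi/p$, for instance $c\defeq \pi/(2p)$. Consider the forbidden set
\begin{align*}
  F \defeq \bigcup_{k=0}^{p-1}\bigl\{\varphi\in[0,2\pi)\st [\varphi+k\theta]_{2\pi}<c\bigr\},
\end{align*}
which is a union of $p$ open arcs on the circle, each of length $2c$, hence of total measure at most $2pc<2\pi$. Consequently its complement $O\defeq[0,2\pi)\setminus F$ is a non-empty open subset of the circle. Since $\theta$ is not a rational multiple of $\pi$, the set $\{N\theta\bmod 2\pi\st N\in\nat\}$ is dense in $[0,2\pi)$ (by Kronecker's theorem, \Cref{th:kronecker}, or equivalently Weyl equidistribution), so there exist arbitrarily large $N\in\nat$ with $N\theta \bmod 2\pi \in O$.

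For any such $N$ we have $[(N+k)\theta]_{2\pi}\ge c$ for every $k\in\{0,\ldots,p-1\}$, and so by monotonicity of cosine on $[0,\pi]$,
\begin{align*}
  \cos\bigl((N+k)\theta\bigr)\le \cos(c) < 1.
\end{align*}
Finally, pick $N$ large enough (which is possible since arbitrarily large good $N$ exist) that $1/N \le 1-\cos(c)$. Then for every $k\in\{0,\ldots,p-1\}$,
\begin{align*}
  \cos\bigl((N+k)\theta\bigr) \le \cos(c) \le 1-\tfrac{1}{N} \le 1-\tfrac{1}{N+k},
\end{align*}
which by~\eqref{eq:cexample} gives $u_{N+k}\le 0$. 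Hence $u_N,u_{N+1},\ldots,u_{N+p-1}$ are all nonpositive, as required. The only delicate point is the counting argument that $F$ has measure strictly less than $2\pi$, which is why we chose $c<\pi/p$; the rest is a routine combination of density of $\{N\theta\bmod 2\pi\}$ with continuity of cosine.
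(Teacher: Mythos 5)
Your proof is correct and takes essentially the same approach as the paper's: both carve out $p$ forbidden arcs on the circle (rotates by $-k\theta$ of a small neighbourhood of $1$), argue by a measure count that their union cannot cover the whole circle, and then use density of the orbit $\{n\theta \bmod 2\pi : n\in\nat\}$ under the irrational rotation to land in the complement, which forces $p$ consecutive non-positive terms once $n$ is large enough. One small correction: your set $O$ is closed, not open (it is the complement of a finite union of open arcs), but since it is a finite union of closed arcs of total length at least $\pi$ it has non-empty interior, which is all the density argument requires --- this is precisely how the paper itself phrases the corresponding step.
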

\begin{proof}
  Fix some $p\in\nat$ and let $N\in\nat$ be such that if $\cos x>1-1/N$ then $x\le 2\pi/(p+1)$, for all $x\in[-\pi,\pi]$. Define:
  \begin{align*}
    I_0\defeq\left\{z\in\torus\st \cos(\arg z)>1-\frac 1 N,\ -\pi\le \arg z \le\pi\right\}.
  \end{align*}
  Using \eqref{eq:cexample}, clearly for all $n\ge N$,
  \begin{align*}
    u_n>0\qquad\Rightarrow\qquad e^{n\theta\ii}=\lambda^n\in I_0.
  \end{align*}
  For all $k\in\nat$, we denote by $I_k$ the rotation of $I_0$ by $-k\theta$, \ie $I_k\defeq e^{-nk\theta\ii}I_0$. Similarly to above, for any $k\in\nat$ and $n\ge N$,
  \begin{align*}
    u_{n+k}>0\qquad \Rightarrow\qquad e^{n\theta\ii}=\lambda^n\in I_k.
  \end{align*}
  Define $I\defeq I_0\cup I_1\cup \cdots \cup I_{p-1}$, so that for all $n\ge N$,
  \begin{align}
    \label{eq:contrapos}
    u_{n+k}>0\text{ for some $k\in\set{0,\ldots,p-1}$}\qquad \Rightarrow\qquad e^{n\theta\ii}=\lambda^n\in I.
  \end{align}
  By construction of $N$ above, the set $\torus\setminus I$ is non-empty and it is a finite union of intervals, so in particular it has non-empty interior $U$. It follows from an analogue of \Cref{lem:dense} that for some $n\in\nat$, $\lambda^n\in U$. This in turn means, using the contrapositive of \eqref{eq:contrapos}, that starting from $u_n$, the next $p$ consecutive entries are either zero or negative. 
\end{proof}

%%% Local Variables:
%%% mode: latex
%%% TeX-master: "main"
%%% End:

\section{Deciding $\omega$-regular properties}
\label{sec:omegareg}
In this section we prove the main theorem, which we recall here.
\mainth*

We will first explain what prefix-independent languages are, and how does an automaton that accepts such a language look like. It will be simpler to work with a finite monoid (similar to the syntactic monoid) that has all the relevant information about the automaton, so we will define this afterwards. In the end of the section we will describe the algorithm. 

Prefix-independent languages are those that have the property that by modifying a word in finitely many places it is not possible to change its membership in the language. More precisely:

\begin{definition}[Prefix-independent language]
  A language $\mathcal L\subseteq \Sigma^\omega$ is prefix-independent if for all infinite words $\alpha,\alpha'$ such that we can get $\alpha'$ from $\alpha$ with finitely many insertions and deletions we have that
  \begin{align*}
    \alpha\in\mathcal L\qquad \Leftrightarrow\qquad \alpha'\in\mathcal L.
  \end{align*}
\end{definition}
These are languages that have trivial right-congruence \cite[Section 4]{angluin20_regul_om_with_infor_right_congr}. 

We will assume that the $\omega$-regular language $\mathcal L$ is given as a deterministic M\"uller automaton~$\aut$, which is a tuple:
\begin{align*}
  \langle\underbrace{Q}_{\text{states}}, \qquad \underbrace{q_0}_{\text{initial state}}, \qquad \underbrace{\Sigma}_{\text{alphabet}}, \qquad \underbrace{\delta\st Q\times \Sigma\to Q}_{\text{transition function}}, \qquad \underbrace{F\subseteq\pow(Q)}_{\text{accepting states}}\ \rangle.
\end{align*}
The automaton accepts a word $\alpha$ if the unique run is such that the set of states that appear infinitely often belongs to $F$. We also assume that any state can be reached from the initial state. 
\begin{lemma}
  \label{lem:allany}
  Let $\aut$ be a deterministic M\"uller automaton that recognizes a prefix-independent language, and $\alpha$ an infinite word. Then the following are equivalent:
  \begin{enumerate}
  \item $\aut$ accepts $\alpha$, 
  \item $\aut$ accepts some suffix of $\alpha$ starting from some state,
  \item $\aut$ accepts every suffix of $\alpha$ starting from any state. 
  \end{enumerate}
\end{lemma}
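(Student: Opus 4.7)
The plan is to show that each of the three statements is equivalent to the single condition $\alpha\in\mathcal L$; the lemma then follows. The implications $3\Rightarrow 1$ and $3\Rightarrow 2$ are trivial (take $q:=q_0$ and the suffix to be $\alpha$ itself).

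For the converse direction I would establish the following reduction: for every suffix $\beta$ of $\alpha$ and every state $q\in Q$, the automaton $\aut$ accepts $\beta$ starting from $q$ if and only if $\alpha\in\mathcal L$. Granted this, statement 1 is the special case $\beta=\alpha$, $q=q_0$; statement 3 is the universal version; and statement 2 is the existential version. All three therefore coincide with $\alpha\in\mathcal L$.

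To prove the reduction I would use the two standing assumptions: every state of $\aut$ is reachable from $q_0$, and $\mathcal L$ is prefix-independent. Given $q$, reachability yields a finite word $v$ with $\delta^\ast(q_0,v)=q$. By determinism, the run of $\aut$ on $v\beta$ from $q_0$ first reaches $q$ after reading $v$ and from then on coincides with the run of $\aut$ on $\beta$ from $q$; in particular these two runs visit exactly the same set of states infinitely often. Hence $v\beta\in\mathcal L$ iff $\aut$ accepts $\beta$ from $q$. Moreover $v\beta$ is obtained from $\alpha$ by finitely many insertions and deletions at the start (deleting a finite prefix to form $\beta$, then prepending $v$), so by prefix-independence $v\beta\in\mathcal L$ iff $\alpha\in\mathcal L$. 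Chaining these equivalences establishes the reduction.

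I do not foresee a substantial obstacle: the proof is essentially an unpacking of definitions, with the real content being the observation that prefix-independence of $\mathcal L$ together with reachability of every state from $q_0$ makes acceptance insensitive both to the starting state and to finite modifications of the input prefix.
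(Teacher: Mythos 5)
Your proposal is correct and follows essentially the same route as the paper: both arguments combine reachability of every state from $q_0$, determinism (so that the run on $v\beta$ from $q_0$ eventually coincides with the run on $\beta$ from $q$), and prefix-independence applied to words differing in a finite prefix. The only difference is packaging --- the paper proves $2\Rightarrow 3$ directly and notes the remaining implications are trivial, whereas you pivot every statement through the single condition $\alpha\in\mathcal L$ --- but the key step (prepend a word reaching the given state, then invoke prefix-independence) is identical.
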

\begin{proof}
  We prove $2\Rightarrow 3$, other directions are trivial. Let $\beta$ be a suffix of $\alpha$ that is accepted starting from some state. Since the latter can be reached from $q_0$, it follows that $w\alpha$ is accepted by the automaton (started in $q_0$) for some finite word $w$. For 3, take any suffix $\beta'$ of $\alpha$ and any state $q_1$; and let $w'$ be the word that takes the automaton from state $q_0$ to state $q_1$. The words $w\beta$ and $w'\beta'$ have a suffix in common, since the language is prefix-independent, $w'\beta'$ is accepted by the automaton. 
\end{proof}

As a consequence of this lemma, to show that the automaton accepts the sign description, we only have to prove that \emph{some} suffix of the sign description $\sigma$ is accepted by the given automaton $\aut$ started at \emph{some} state $q_1$.

Before we turn our attention to the monoid associated with the automaton, we gather here from the prequel some properties of the sign description of a given simple \lrs. These are the essential properties that will be used by the algorithm.

\begin{proposition}
  \label{prop:properties}
  Let $\lu$ be a \lrs and $\sigma$ its sign description, then the following hold:
  \begin{enumerate}
  \item $\sigma$ is almost-periodic,
  \item there is a threshold $c\in\nat$ such that any word that occurs in the suffix $\sigma_c\sigma_{c+1}\cdots$, occurs infinitely often in $\sigma$,
  \item there is a procedure that inputs a finite word $w$ and decides whether $w$ occurs infinitely often in~$\sigma$,
  \item there is a procedure that inputs a finite word $w$ that occurs infinitely often in $\sigma$ and outputs the bound on the distances between consecutive occurrences. 
  \end{enumerate}
\end{proposition}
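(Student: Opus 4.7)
The plan is to observe that all four points are immediate consequences of results already established in \Cref{sec:ap}; the proposition is essentially a re-packaging lemma that isolates the features of $\sigma$ actually used by the algorithm of \Cref{sec:omegareg}. Point 1 is precisely \Cref{thm:simple almost periodic}, and point 2 is precisely \Cref{lem:occurinf}.

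For point 3, given a pattern $w \in \set{-,0,+}^*$, I would follow the alignment argument in the proof of \Cref{thm:simple almost periodic}. Namely, $w$ occurs infinitely often in $\sigma$ if and only if there exist $m \in \nat$ and a word $w' = w'(1) w'(2) \cdots w'(m)$, with each $w'(i)$ a factor of length $P$, such that $w$ is an infix of $w'$ and $\sigma_{nP} \sigma_{nP+1} \cdots \sigma_{(n+m)P-1} = w'$ for infinitely many $n$. The relative offset of $w$ inside a $P$-aligned window lies in $\set{0, 1, \ldots, P-1}$, so it suffices to take any $m$ with $mP \ge |w| + P - 1$; this leaves only finitely many candidate words $w'$ to examine. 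By \Cref{lem:nonemptyU} each candidate $w'$ occurs infinitely often (in the aligned sense) if and only if $U(w')$ is non-empty, and by \Cref{lem:fodef} the set $U(w')$ is semi-algebraic with an effectively computable first-order definition. Non-emptiness is therefore decidable via Tarski's procedure, and testing the (finitely many) candidates yields the decision procedure.

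For point 4, I would search for the bound by enumeration. The proof of \Cref{thm:simple almost periodic} shows that if $w$ occurs infinitely often and $w'$ is chosen as above with $U(w') \ne \emptyset$, then consecutive occurrences of $w$ in $\sigma$ are separated by at most $BP + |w'|$, where $B$ is the covering constant of \Cref{lem:bounded steps} for the open set $U(w')$. The statement ``for every $\mathbf{x} \in \torus_\lambda$ there exists $n \in \set{0, 1, \ldots, B}$ with $s^n(\mathbf{x}) \in U(w')$'' is a first-order sentence over the reals; its atomic ingredients are effective thanks to \Cref{lem:fodef} together with the effective first-order definability of $\torus_\lambda$ and of the successor map $s$. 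I would therefore iterate $B = 1, 2, \ldots$, certifying each test with Tarski's procedure and halting at the first $B$ that succeeds; termination is guaranteed by \Cref{lem:bounded steps}, and the output $BP + |w'|$ is the desired bound.

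The main obstacle is not mathematical but one of bookkeeping: one must verify that enumerating candidate $w'$ actually captures every possible alignment of $w$ in $\sigma$ (handled by the choice of $m$ above), and that the first-order definitions of $\torus_\lambda$, of $s$, and of $U(w')$ compose into a sentence whose parameters can be printed and fed to the decision procedure. All of this effectivity was set up in \Cref{lem:fodef} and the accompanying discussion of \Cref{sec:foth}, so the work here consists essentially of citing those results in the correct order.
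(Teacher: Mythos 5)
Your items 1 and 2 are exactly the paper's citations, and the overall shape of your procedures for items 3 and 4 (reduce to $P$-aligned candidate words $w'$, decide emptiness of $U(w')$ via \Cref{lem:fodef} and Tarski, then enumerate bounds $B$ certified by a first-order sentence over the reals) matches the paper's proof. But there is one concrete gap: you apply \Cref{lem:nonemptyU} to \emph{every} candidate $w'$ containing $w$, whereas the direction ``$U(w')\neq\emptyset \Rightarrow w'$ occurs (aligned) infinitely often'' is only valid when $U(w')$ is \emph{open}, since it goes through \Cref{lem:bounded steps}. Openness of $U(w')$ holds precisely when each factor $w'(i)$ has zeros exactly at the positions $Z\subseteq\set{0,\ldots,P-1}$ of the identically-zero subsequences $\lu^{(\ell)}$; the paper's statement of \Cref{lem:nonemptyU} elides this hypothesis, but its proof invokes openness, and the paper's proof of the proposition therefore first computes $Z$ (decidable: $\lu^{(\ell)}$ satisfies a recurrence of order at most $d$, so it is identically zero iff its first $d$ terms vanish) and admits only candidates $w'$ whose zeros sit exactly at the positions in $Z$, returning \textbf{no} if no such candidate exists.

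Without this filtering your procedure gives false positives. Concretely, let $\lambda=(3+4\ii)/5$ and $u_n=\tfrac12\lambda^n+\tfrac12\overline{\lambda}^n$, a simple rational \lrs with $P=1$, $Z=\emptyset$, and $\torus_\lambda=\set{(z,\overline z)\st z\in\torus}$, on which $f(z,\overline z)=\mathrm{Re}(z)$. Since $\lambda$ is not a root of unity, $\lambda^n=\pm\ii$ never holds, so $u_n\neq 0$ for all $n\ge 1$ and the letter $0$ never occurs in $\sigma$; yet for the candidate $w'=0$ the set $U(w')=\set{(\ii,-\ii),(-\ii,\ii)}$ is non-empty (the orbit approaches it by \Cref{lem:dense} but never enters it), so your item-3 procedure would report that $0$ occurs infinitely often. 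The same omission hurts item 4: if the candidate $w'$ you fix merely satisfies $U(w')\neq\emptyset$ but violates the $Z$-condition, then $U(w')$ need not be open, \Cref{lem:bounded steps} does not apply, and your enumeration over $B$ may never terminate (in the example above, no $B$ ever works for $w'=0$). Restricting the candidate set by $Z$, as the paper does, repairs both items; the rest of your argument then goes through as written.
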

\begin{proof}
    Property 1 is \Cref{thm:simple almost periodic}, Property 2 is \Cref{lem:occurinf}. We prove 3.

    Recall that $P$ is the least common multiple of orders of roots of unity among the ratios of roots of the given \lrs. We can effectively determine it, and moreover for $\ell\in\set{0,1,\ldots,P-1}$ we can decide which subsequence $\lu^{(\ell)}$ is identically zero; which we denote by $Z\subseteq\set{0,1,\ldots, P-1}$. We construct a word $w'$ such that 
    \begin{align*}
      w'=w'(1)w'(2)\cdots w'(m),\text{ for some $m\in\nat$, }
    \end{align*}
    where the factors $w'(i)$ are of length $P$, $w$ occurs in $w'$, and the factors $w'(i)$ have zeros exactly in positions $Z$. If this is not possible, the procedure returns \textbf{no}. The word $w$ occurs infinitely often if and only if for infinitely many $n$, we have
    \begin{align*}
      \sigma_{nP}\sigma_{nP+1}\cdots\sigma_{(n+m)P-1}=w'. 
    \end{align*}
    The latter is true if and only if $U(w')$ is non-empty, according to \Cref{lem:nonemptyU}. Since $U(w')$ is semi-algebraic (~\Cref{lem:fodef}) and we can effectively construct its formula, to test whether $U(w')$ is empty we can use Tarski's algorithm, see~\Cref{th:tarski}.

    We now prove item 4. We continue as above and define $w'$, so that $U(w')$ is non-empty. Since the normalized roots $\lambda_1,\ldots, \lambda_d$ are algebraic with effective descriptions, for all $k\in\nat$, the set of points $\mathbf z\in\torus_\lambda$ such that $(z_1\lambda_1^k,\ldots,z_d\lambda_d^k)\in U(w')$ is semi-algebraic, and we can effectively construct its formula $\varphi_k$. The formula for the set of points that enter $U(w')$ in at most $j$ steps, $\varphi_{\le j}$, is just the disjunction of formulas $\varphi_1,\ldots,\varphi_j$. Since $U(w')$ is open and non-empty, \Cref{lem:bounded steps} implies that there exists some $B$ such that any point in $\torus_\lambda$ enters $U(w')$ in fewer than $B$ steps. In the language above this means that $\Phi(B)$ which says:
    \begin{align*}
      \text{every element of $\torus_\lambda$ satisfies $\varphi_{\le B}$}, 
    \end{align*}
    is true. Now to compute $B$, or a different (stronger) bound, we only need to find the first formula in $\langle\Phi(1),\Phi(2),\ldots\rangle$ that is true. We can do this using Tarski's algorithm.
  \end{proof}

  Fix a deterministic M\"uller automaton $\aut$ and a \lrs $\lu$ with sign description $\sigma$ for the rest of this section. We provide a ``wrapper'' for the procedures in properties 3 and 4 in the proposition above.  There is a procedure ``$\inter$'' that inputs a word $w$ and outputs a finite set of words, or \textbf{no}:
  \begin{align*}
    \red w\mapsto\begin{cases}
      \text{\textbf{no}}\qquad &\text{if $w$ does not occur infinitely often in $\sigma$,}\\
      \set{w_1,w_2,\ldots,w_k}\qquad &\text{otherwise},
      \end{cases}
  \end{align*}
  such that, in the case when $w$ occurs infinitely often in $\sigma$, 
  \begin{align*}
    \sigma=r\ \red w\ w_{i_1}\ \red w\ w_{i_2}\ \cdots,
  \end{align*}
  where $r$ is some finite prefix and $i_1,i_2,\ldots$ take values in $\set{1,2,\ldots, k}$.

Intuitively, from the almost periodicity of $\sigma$, when $w$ occurs infinitely often, the distance between the occurrences is bounded, hence there can be only finitely many words that appear between consecutive occurrences of $w$, using the procedure in Property 3 of \Cref{prop:properties}, we can find these words that appear between occurrences of $w$, and it is this set of words that the procedure ``$\inter$'' returns. 

\subsection{The finite monoid associated to $\aut$}

Denote by $\mon$ the following monoid. Its elements are directed and labeled graphs, where the set of vertices is $Q$ (the set of states of the automaton), and the edges are labeled by subsets of $Q$. The product of the element $\mx$ with the element $\my$ is defined as follows: for some $q_1,q_2\in Q$ and $S_1,S_2\subseteq Q$
\begin{align*}
  \underbrace{q_1\xrightarrow{S_1\cup S_2}q_2}_{\text{in $\mx\cdot\my$}},
\end{align*}
if and only if there exists some $q'\in Q$ such that:
\begin{align*}
  \underbrace{q_1\xrightarrow{S_1}q'}_{\text{in $\mx$}}\, \text{ and }\, \underbrace{q'\xrightarrow{S_2}q_2}_{\text{in $\my$}}
\end{align*}

The homomorphism $h$, is defined as follows. For any letter $a$ of the alphabet (which in our case is $\set{-,0,+}$), $h(a)$ is such that for all $q_1,q_2\in Q$
\begin{align*}
  \underbrace{q_1\xrightarrow{\set{q_1,q_2}} q_2}_{\text{ in $h(a)$}},
\end{align*}
if and only if there is a transition in the automaton $\aut$ from $q_1$ to $q_2$ with the letter~$a$. The monoid $\mon$ is the monoid that is generated by $\set{h(a)\st a\in \set{-,0,+}}$ as well as $h(\epsilon)$, where $\epsilon$ is the empty word, for the neutral element.

This monoid (a variant of the transition semigroup \cite[Chapter 3]{04_infin_words_autom_semig_logic_games}) gathers all the information needed from the automaton $\aut$, \eg if there is an edge from $s_1$ to $s_2$ in $h(w)$, labeled by $S$, it means that in the automaton, we can go from state $s_1$ to state $s_2$ with the word $w$ while visiting all the states in $S$.

In our case, where the automaton $\aut$ is deterministic, the elements of the monoid $\mon$ are particularly simple in that in every $\mx$, and every $q\in Q$, there is only one outgoing edge from $q$. Therefore it makes sense to talk about the {\em states that are seen from $q$} in $\mx$, \ie the set $S$ that is the label of the unique outgoing transition of from $q$ in $\mx$.

\begin{definition}[Increasing product] Let $\mx_1,\ldots,\mx_k\in\mon$. We say that the product 
\begin{align*}
  \mx=\mx_1\mx_2\cdots \mx_k
\end{align*}
is increasing, if there exists some $q\in Q$ such that the states that are seen from $q$ in $\mx_1$ is a strict subset of the states that are seen from $q$ in $\mx$. 
\end{definition}

In terms of the automaton $\aut$, this definition means that we visit strictly more states by reading a word associated to $\mx_2\cdots\mx_k$, than we do by reading a word associated to $\mx_1$.

\begin{example}
  Consider the following elements. 
  \begin{center}
    \includegraphics[width=0.8\textwidth]{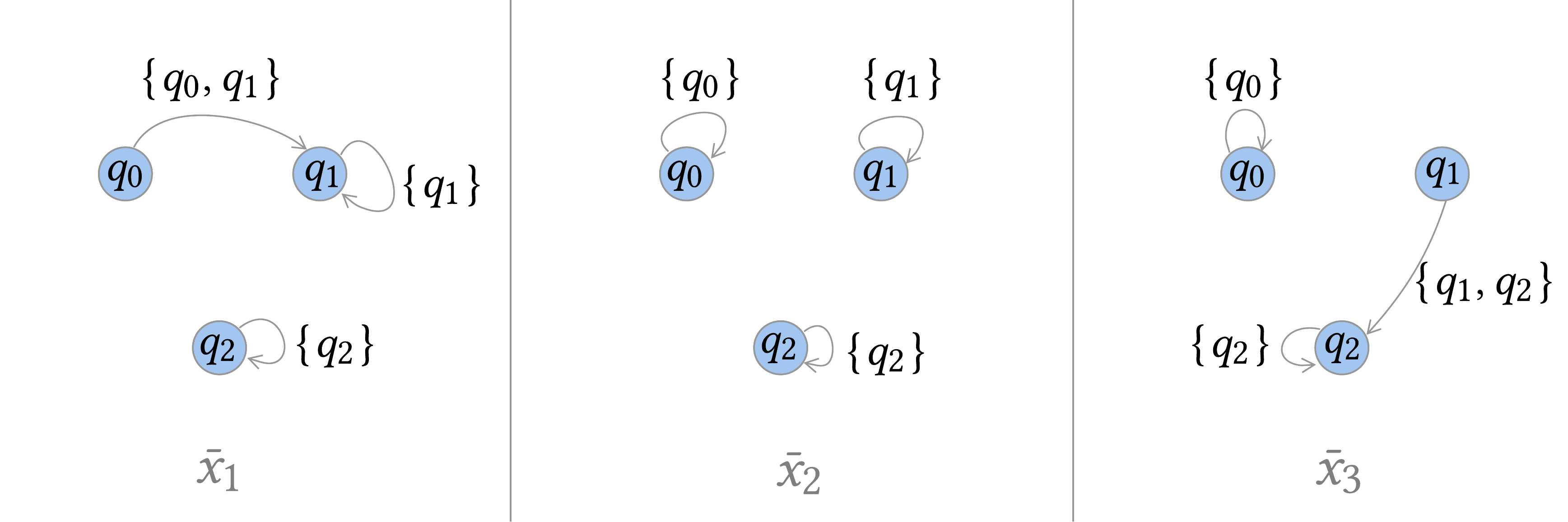}
  \end{center}
  The product $\mx_1\mx_2$  is not increasing, however the product $\mx_1\mx_2\mx_3$ is increasing, because of the path:
  \begin{align*}
    q_0\xrightarrow{\set{q_0,q_1}}q_1\xrightarrow{\set{q_1}}q_1\xrightarrow{\set{q_1,q_2}}q_2.
  \end{align*}
  So in the product $\mx_1\mx_2\mx_3$, the states that are seen from $q_0$ are $\set{q_0,q_1,q_2}$. But in $\mx_1$, the states that are seen from $q_0$ are $\set{q_0,q_1}$, a strict subset. Similarly, the product $\mx_2\mx_3$ is also increasing. 
\end{example}

We make the following observation about increasing products before we move on to the description of the algorithm.

\begin{lemma}
  \label{lem:shortinc}
  Let $\mx_1\mx_2\cdots \mx_k$ be an increasing product. Then there exists $i\in\set{1,\ldots,k-1}$ such that:
  \begin{align*}
    \mx_{i}\mx_{i+1}\text{ is increasing.}
  \end{align*}
  Moreover, for all $1\le r \le i$ and $i< r'\le k$ the product
  \begin{align*}
    \mx_{r}\mx_{r+1}\cdots\mx_{r'}\text{ is increasing.}
  \end{align*}
\end{lemma}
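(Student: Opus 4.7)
My plan is to track a single trajectory of states through the product $\mx_1\mx_2\cdots\mx_k$ and to pinpoint the first index along it at which genuinely new states are visited. As setup, I would fix a witness $q\in Q$ provided by the assumption that $\mx_1\mx_2\cdots\mx_k$ is increasing and, using that $\aut$ is deterministic, unroll the product to obtain a sequence $q_0\defeq q, q_1,\ldots,q_k$, where $q_j$ is the unique target of $q_{j-1}$ in $\mx_j$, together with sets $T_j\subseteq Q$, where $T_j$ is the label of the unique outgoing edge of $q_{j-1}$ in $\mx_j$. A direct unfolding of the monoid product shows by induction that for every $1\le r\le r'\le k$, the label of the unique outgoing edge of $q_{r-1}$ in the subproduct $\mx_r\mx_{r+1}\cdots\mx_{r'}$ is exactly $T_r\cup T_{r+1}\cup\cdots\cup T_{r'}$, so the problem reduces to a set-theoretic statement about the family $T_1,\ldots,T_k$.

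In this language the hypothesis reads $T_1\subsetneq T_1\cup T_2\cup\cdots\cup T_k$, so some $T_j$ with $j\ge 2$ must escape $T_1$. I would let $i+1$ be the smallest such $j$, so that $i\in\set{1,\ldots,k-1}$, every $T_m$ with $m\le i$ is contained in $T_1$, and $T_{i+1}\not\subseteq T_1$. Taking $q_{i-1}$ as witness, the label of $\mx_i$ at $q_{i-1}$ is $T_i\subseteq T_1$ while the label of $\mx_i\mx_{i+1}$ at $q_{i-1}$ is $T_i\cup T_{i+1}$, and this is strictly larger than $T_i$ since $T_{i+1}\not\subseteq T_1\supseteq T_i$. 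Hence $\mx_i\mx_{i+1}$ is increasing.

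For the moreover clause I would fix any $1\le r\le i<r'\le k$ and again use $q_{r-1}$ as witness. The label of $\mx_r$ at $q_{r-1}$ is $T_r\subseteq T_1$, while the label of $\mx_r\mx_{r+1}\cdots\mx_{r'}$ at $q_{r-1}$ is $T_r\cup T_{r+1}\cup\cdots\cup T_{r'}$, which contains $T_{i+1}$ because $r'\ge i+1$. Since $T_{i+1}\not\subseteq T_1\supseteq T_r$, the latter strictly contains the former, and the subproduct is increasing as required.

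The whole argument is essentially bookkeeping along a single deterministic run, so I do not foresee any genuine obstacle. The one conceptual point worth keeping in mind is that the definition of `increasing' compares the full product with its \emph{first} factor; the witness state must therefore be updated from $q_0$ in the original product to $q_{r-1}$ in any subproduct starting at $\mx_r$, after which the argument reduces to identifying the first index at which a new $T_j$ escapes $T_1$.
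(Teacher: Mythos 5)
Your proof is correct and follows essentially the same route as the paper's: both track the deterministic run from the witness state $q$, take $i+1$ to be the first index whose edge label escapes the label of the first factor, and conclude by comparing unions of labels along that run. The only difference is one of detail—the paper compresses the \emph{moreover} clause into a one-line remark about right-multiplication preserving increasing products, whereas you spell out the witness $q_{r-1}$ and the set-theoretic comparison explicitly.
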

\begin{proof}
  Since $\mx_1\mx_2\cdots \mx_k$ is increasing there exists a state $q$ and a path 
  \begin{align*}
    \underbrace{q\xrightarrow{S_1}q_1}_{\text{in $\mx_1$}}\, \underbrace{\xrightarrow{S_2}q_2}_{\text{in $\mx_2$}}\cdots\,\, \underbrace{\xrightarrow{S_j}q_j}_{\text{in $\mx_j$}},
  \end{align*}
  such that $S_j$ is the first set that is not a subset of $S_1$. Let $i\defeq j-1$. Now the lemma follows from the fact that multiplying an increasing product to the right with elements of the monoid gives us again an increasing product. 
\end{proof}
\subsection{Description of the algorithm}
The algorithm only uses the monoid $\mon$ and the procedure $\inter$. It starts with some word $w$ that occurs infinitely often in $\sigma$ (\eg a letter). Let $\inter(w)=\set{w_1,\ldots,w_k}$. Among words:
\begin{align*}
  \red w\ w_i\ \red w\ w_j,
\end{align*}
where $i,j\in\set{1,\ldots,k}$ that occur infinitely often in $\sigma$, it tries to find one such that
\begin{align*}
  h(\red w)\ h(w_i)\ h(\red w)\ h(w_j)\text{ is increasing.}
\end{align*}
If it manages to find such a word $ww_iww_j$, it calls $\inter(ww_iww_j)$ and repeats the steps above. Since for every state $q\in Q$, the states seen from $q$ in $h(w)$ is a subset of the states that are seen from $q$ in $h(ww_iww_j)$, it follows that in fewer than $|Q|^2$ iterations, a fix-point is reached: a word $u$, with $\inter(u)=\set{u_1,\ldots,u_c}$ such that among all words
\begin{align*}
  \red u\ u_i\ \red u\ u_j,
\end{align*}
that occur infinitely often in $\sigma$,
\begin{align}
  \label{eq:notinc}
  h(\red u)\ h(u_i)\ h(\red u)\ h(u_j)\text{ is {\em not }increasing.}
\end{align}
We give the crucial property of the word $u$ before proceeding with the last step of the algorithm.

Since $\inter(u)=\set{u_1,\ldots,u_c}$ there exists some suffix of $\sigma$ that is equal to
\begin{align*}
  \sigma'\defeq\red u\ u_{i_1}\ \red u\ u_{i_2}\cdots,
\end{align*}
where $i_n$ take values in $\set{1,\ldots, c}$, and moreover from Property 2 in \Cref{prop:properties}, we can assume that $\sigma'$ is such that any word that occurs in $\sigma'$, occurs infinitely often.

Observe that for all $n\in\nat$
\begin{align}
  \label{eq:notinc2}
  h(\red u)\ h(u_{i_1})\ h(\red u)\ h(u_{i_2})\cdots h(\red u)\ h(u_{i_n})\text{ is {\em not }increasing.}
\end{align}
Indeed, if it were an increasing product, we would be able to find a short one, because of \Cref{lem:shortinc}, in particular we would be able to find an increasing product of the form $h(u)h(u_i)h(u)h(u_j)$. By construction of $\sigma'$ the word $uu_iuu_j$ occurs infinitely often in $\sigma'$ (and therefore also in $\sigma$), which contradicts \eqref{eq:notinc}.

In the last step, the algorithm picks an edge
\begin{align*}
  q\xrightarrow{S}q'
\end{align*}
in $h(u)$, such that $S$ has minimal cardinality out of all the other labels of edges in~$h(u)$. It returns \textbf{yes} if and only if $S\in F$, where $F$ is the collection of accepting sets of states in the definition of the automaton $\aut$.

For the correctness of the algorithm we prove the following claim.
\begin{lemma}
  When the automaton $\aut$ reads the word $\sigma'$ starting from the state $q$, the set of states that is seen infinitely often is $S$.
\end{lemma}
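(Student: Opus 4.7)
The plan is to establish two inclusions: first, that every state visited by $\aut$ at any finite position along the run on $\sigma'$ from $q$ lies in $S$ (so the infinitely-often set is contained in $S$); and second, that every state of $S$ is visited during each of the infinitely many $u$-blocks of $\sigma'$ (giving the reverse inclusion).

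The first inclusion rests on a monotonicity property of $\mon$: for any product $\mx_1 \cdots \mx_k$ and any $q' \in Q$, the label of the unique outgoing edge from $q'$ in $\mx_1$ is contained in the label of the outgoing edge from $q'$ in $\mx_1\cdots\mx_k$, directly from the product rule, which unions labels along composed paths. Consequently, the ``not increasing'' statement~\eqref{eq:notinc2} upgrades from a non-strict-subset condition to genuine equality: the label out of $q$ in $h(u)\,h(u_{i_1})\cdots h(u)\,h(u_{i_n})$ equals the label out of $q$ in $h(u)$, namely $S$. Translating back into $\aut$, this says that reading any finite prefix of $\sigma'$ of the form $u\, u_{i_1}\cdots u\, u_{i_n}$ from $q$ traverses exactly the states in $S$; prefixes that cut through a $u_{i_j}$ traverse only a subset. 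Hence every state visited at any finite position lies in $S$, and in particular $q \in S$, since edge labels always contain their endpoints (from the base case of $h$ on letters together with the product rule).

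For the second inclusion, I would invoke the minimality of $|S|$. Consider any occurrence of a $u$-block in $\sigma'$, read starting from some state $q''$. By the first inclusion $q''\in S$, and the edge $q''\xrightarrow{S''}q'''$ in $h(u)$ gives $S''$ as precisely the set of states traversed during this block, so again $S''\subseteq S$. Minimality of $|S|$ among labels of edges in $h(u)$ forces $|S''|\geq |S|$, and hence $S''=S$. Since $\sigma'$ contains infinitely many $u$-blocks and each traverses exactly $S$, every state of $S$ is visited infinitely often. The only delicate bookkeeping is converting ``not increasing'' (a negated strict-subset statement) into equality, which works because monoid products only enlarge outgoing-edge labels; once this is noted, the rest assembles cleanly from the product rule and the minimality of $|S|$.
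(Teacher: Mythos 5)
Your proof is correct and takes essentially the same route as the paper's: the paper likewise reads off from~\eqref{eq:notinc2} that the run from $q$ never leaves $S$ (each intermediate segment visiting some $S_n\subseteq S$), and then uses the minimal-cardinality choice of $S$ to force every $u$-block's edge label in $h(u)$ to equal $S$ exactly, so that $S$ is visited infinitely often. The only difference is presentational: you make explicit the monotonicity of outgoing-edge labels under monoid products (which upgrades ``not increasing'' to equality), a step the paper leaves implicit.
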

This suffices because the automaton $\aut$ accepts the infinite word $\sigma$ if and only if it accepts the word $\sigma'$ starting from state $q$, thanks to \Cref{lem:allany}.

\begin{proof}
  Observation \eqref{eq:notinc2} implies that the run of the automaton $\aut$ starting from $q$ and reading the infinite word $\sigma'$ looks as follows:
  \begin{align*}
    q\ \xrightarrow{\text{visits }\red S} q_1\ \xrightarrow{\text{visits } S_1} q_2 \xrightarrow{\text{visits }\red S} q_3 \xrightarrow{\text{visits }S_2} q_4 \xrightarrow{\text{visits }\red S} \cdots,
  \end{align*}
  where $S_n\subseteq S$; and the reason why in every second edge the label is $S$ is because there is no label that is a subset of $S$ in $h(u)$ (except itself), because we have chosen $S$ to have minimal cardinality out of all other labels in $h(u)$.
\end{proof}

We have thus proved \Cref{th:main}. As for \Cref{th:secondmain}, the circumstances are simpler. When the \lrs $\lu$ is fixed, so is the bound $\tilde n_0$, defined as the maximum of the bounds from \Cref{lem:same sign} over all $0\le\ell<P$. The number $\tilde n_0$ together with Property 4 in \Cref{prop:properties} mean that for any word $w$ we can effectively determine the bound $p$ in~\Cref{def:ap}. As a consequence, we can apply~\Cref{thm:semenov}. 
%%% Local Variables:
%%% mode: latex
%%% TeX-master: "main"
%%% End:

\section{Extensions}
\label{sec:extensions}
\Cref{th:main} can be extended in a couple of ways: (a) it is possible to decide properties of the product of multiple sign descriptions, corresponding to different \lrs, and (b) instead of predicates that speak about the sign of the entries, we can have general polynomial inequalities. We have chosen to present the specialized~\Cref{th:main}, so as not to obscure the principal ideas.

\subsection{Multiple \lrs}
We are given $m$ simple linear recurrence sequences:
\begin{align}
  \label{eq:all lu}
  \lu^{(1)},\lu^{(2)},\ldots,\lu^{(m)},
\end{align}
of orders $d^{(1)}, d^{(2)},\ldots, d^{(m)}$ respectively. The product of their sign descriptions $\sigma^{(1)}\times \cdots \times \sigma^{(m)}$ is an infinite word $\tilde \sigma$ over the alphabet~$\set{-,0,+}^m$. We explain how the proof in the prequel can be adapted to prove a generalisation of~\Cref{th:main},  where $\lu$ is replaced by~\eqref{eq:all lu}, and instead of the sign description $\sigma$, we have $\tilde \sigma$, the product of sign descriptions.

Define:
\begin{align*}
  \tilde\lambda\defeq\left(\lambda_1^{(1)},\ldots,\lambda_{d^{(1)}}^{(1)},\lambda_1^{(2)},\ldots,\lambda_{d^{(2)}}^{(2)},\ \ \ldots\ \ , \lambda_1^{(m)},\ldots,\lambda_{d^{(m)}}^{(m)}\right), 
\end{align*}
where $\lambda_i^{(k)}$ are the normalized roots of $\lu^{(k)}$ as in~\eqref{eq:normalized roots}. The subgroup of the torus, $\torus_{\tilde \lambda}$, and the successor function $\tilde s\st \torus_{\tilde\lambda}\to\torus_{\tilde\lambda}$ are defined as expected\footnote{There is another option of defining $\torus_{\tilde\lambda}$ as the product of $\torus_{\lambda^{(i)}}$. However this is a different object, not suitable for our needs; in particular the hypothesis of \Cref{th:kronecker} is invalidated. }. \Cref{lem:dense} can be applied to $\torus_{\tilde\lambda}$, consequently the set $\set{\tilde s^n(1,\ldots,1)\st n\in\nat}$ is a dense subset of $\torus_{\tilde\lambda}$, and any open subset of the torus can be reached in bounded number of steps.

Let $k\in\set{1,\ldots,m}$, and denote by $P^{(k)}$ the least common multiple of the orders of all roots of unity among the ratios of the roots of $\lu^{(k)}$, as defined in the beginning of~\Cref{sec:ap}. Applying~\Cref{lem:same sign} to the subsequences of~ $\lu^{(k)}$ and combining the results yields the following. There exists a linear function $f$, from $\torus_{\lambda^{(k)}}$ to $\rel^{P^{(k)}}$ and a threshold $n_0\in\nat$  such that for all $n\ge n_0$, 
\begin{align*}
  \sgn\bigg(f\big(s^n(1,\ldots,1)\big)\bigg) = \sigma^{(k)}\big[nP^{(k)},\ (n+1)P^{(k)}\big],
\end{align*}
where $\sgn$ is applied component-wise and the word on the right hand side is the factor of $\sigma^{(k)}$ that starts in position $nP^{(k)}$ and ends in position $(n+1)P^{(k)}$. Let $\tilde P$ be the least common multiple of $P^{(1)}, \ldots, P^{(m)}$. We can glue together the linear functions above to build another linear function $g\st \torus_{\tilde\lambda}\to\rel^{\tilde P}$ such that the following holds. There exists a threshold, after which, for all $n$,
\begin{align*}
  \sgn\bigg(g\big(\tilde s^n(1,\ldots,1)\big)\bigg) = \tilde\sigma\big[n\tilde P,\ (n+1)\tilde P\big].
\end{align*}

After establishing this fact, the rest of the proof depends only on the objects $\torus_{\tilde \lambda}$, $\tilde s$ and the linear function $g$. As a consequence we have the following theorem.

\begin{theorem}
  \label{th:extension1}
  Given a prefix-independent $\omega$-regular language $\mathcal L$ and simple linear recurrence sequences $\lu^{(1)}, \ldots, \lu^{(m)}$, it is decidable whether the product of sign descriptions $\tilde \sigma$, belongs to $\mathcal L$.
\end{theorem}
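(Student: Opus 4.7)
The strategy is to establish the analogue of \Cref{prop:properties} for $\tilde\sigma$, after which the algorithm of \Cref{sec:omegareg} runs verbatim (its correctness depended only on the four properties listed in that proposition, not on any specific structure of a single \lrs). The groundwork has already been laid in the setup preceding the theorem: we have a compact group $\torus_{\tilde\lambda}$, a continuous self-homeomorphism $\tilde s$ with dense orbit $\set{\tilde s^n(1,\ldots,1)\st n\in\nat}$, and a linear function $g\st\torus_{\tilde\lambda}\to\rel^{\tilde P}$ whose component-wise sign on this orbit realises $\tilde\sigma$ in blocks of length $\tilde P$ beyond a finite prefix.

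First, for each candidate pattern $w$ occurring in $\tilde\sigma$ I would pad it to a word $w'=w'(1)\cdots w'(m)$ consisting of blocks of length $\tilde P$ whose zero positions coincide with the coordinates on which $g$ is identically zero on $\torus_{\tilde\lambda}$ (these are precisely the positions corresponding to identically-zero non-degenerate subsequences of some $\lu^{(k)}$). As in \Cref{sec:ap}, the set
\begin{align*}
U(w')\defeq\set{\mathbf{x}\in\torus_{\tilde\lambda}\st \sgn\circ g(\mathbf{x})=w'(1),\ldots,\sgn\circ g(\tilde s^{m-1}(\mathbf{x}))=w'(m)}
\end{align*}
is open (on the relevant coordinates $g$ is nonzero on an open set). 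Combining compactness of $\torus_{\tilde\lambda}$ with density of the orbit (a direct transcription of \Cref{lem:dense} and \Cref{lem:bounded steps} to the group $\torus_{\tilde\lambda}$) yields a uniform bound $B$ on the number of steps required from any point in $\torus_{\tilde\lambda}$ to enter $U(w')$. This establishes almost periodicity of $\tilde\sigma$ (Property~1). Property~2 follows by choosing the threshold $c$ to be the maximum of the threshold in the sign agreement above and the indices after which every non-degenerate $\lu^{(k),(\ell)}$ is either identically zero or nonzero.

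For Properties~3 and~4 I would reprove the analogue of \Cref{lem:fodef}: $U(w')$ is semi-algebraic with an effectively computable first-order formula over the reals. The two ingredients are that (i) the algebraic numbers $\lambda_i^{(k)}$ have effectively computable defining formulas, and (ii) a basis of the lattice $\mathcal{M}_{\tilde\lambda}$ of multiplicative relations among the components of $\tilde\lambda$ can be computed using Masser's theorem, exactly as for a single \lrs. With an effective description of $\torus_{\tilde\lambda}$, $\tilde s$ and $g$ in hand, Tarski's procedure decides whether $U(w')$ is empty (Property~3 via the analogue of \Cref{lem:nonemptyU}) and finds the smallest $B$ such that every point of $\torus_{\tilde\lambda}$ enters $U(w')$ within $B$ steps (Property~4).

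Having established the analogue of \Cref{prop:properties} for $\tilde\sigma$, the algorithm of \Cref{sec:omegareg}—which manipulates the finite monoid $\mon$ associated with the Müller automaton for $\mathcal L$ and the wrapper procedure $\inter$—decides membership of $\tilde\sigma$ in $\mathcal L$ without modification. The main subtlety, flagged in the footnote to the definition of $\torus_{\tilde\lambda}$, is that one must work with the single group $\torus_{\tilde\lambda}$ cut out by \emph{all} multiplicative relations among the combined tuple $\tilde\lambda$, rather than the product $\prod_k\torus_{\lambda^{(k)}}$; otherwise additive relations between the arguments of roots of different $\lu^{(k)}$ would be ignored, and the hypothesis of Kronecker's theorem (\Cref{th:kronecker}) would fail for the combined orbit. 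Verifying that all the effectivity and semi-algebraicity arguments remain valid when passing to this finer group—in particular that Masser's theorem still gives a computable basis of $\mathcal{M}_{\tilde\lambda}$—is the place that requires the most care, but it is a routine extension of what was already done for a single \lrs.
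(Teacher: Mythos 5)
Your proposal is correct and takes essentially the same route as the paper: work in the single combined group $\torus_{\tilde\lambda}$ cut out by all multiplicative relations of the concatenated tuple $\tilde\lambda$ (not the product of the individual tori), glue the per-sequence linear maps from \Cref{lem:same sign} into one map $g\st\torus_{\tilde\lambda}\to\rel^{\tilde P}$ on blocks of length $\tilde P$, and then note that almost periodicity, the semi-algebraic effectiveness arguments (Masser plus Tarski), and the monoid algorithm depend only on $\torus_{\tilde\lambda}$, $\tilde s$, and $g$. The paper compresses this transfer into a single remark, whereas you explicitly re-verify the analogue of \Cref{prop:properties}; the mathematical content is the same.
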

\subsection{Semi-algebraic predicates}
\label{subsec:sa predicates}
\Cref{th:extension1} already allows us to decide properties such as: ``only finitely many times, does the sequence pass from a value smaller than 3, to a value larger than 10, without having the value 5 in between''.

\begin{example}
More precisely, there are only finitely many $n\in\nat$ such that 
\begin{align}
  \label{eq:prop example}
  \exists m>n, u_n<3 \text{ and }u_m>10,\text{ moreover for every }r, n\le r\le m, u_r\ne 5. 
\end{align}
This can be done as follows. Define:
\begin{align*}
  u^{(1)}\defeq \sq{u_n-3},\qquad u^{(2)}\defeq \sq{u_n-5},\qquad u^{(3)}\defeq \sq{u_n-10}.
\end{align*}
These sequences are linear recurrence sequences, because \lrs are closed under addition and product of sequences. Consider the product $\tilde \sigma$, of sign descriptions of the \lrs above. This is an infinite word over the alphabet $\set{-,0,+}^3$, so a letter looks like: $(-,+,-)$. The property~\eqref{eq:prop example}, in terms of $\tilde \sigma$, can be expressed as: ``only finitely many times does a letter of the type $(-,*,*)$ appear followed by a letter of the type $(*,*,+)$ without having a letter of type $(*,0,*)$ in between''. This is a prefix-independent $\omega$-regular property.
\end{example}

In the example above we have shifted the sequence by constants $3,5$, and $10$, to produce new \lrs. More generally, \lrs are closed under sequence addition and product, as it can readily be seen when the sequences are given as exponential polynomials.
\begin{proposition}
  \label{prop:lrs closure}
  Let $\lu,\lv$ be two linear recurrence sequences. Both $\sq{u_n+v_n}$ and $\sq{u_nv_n}$ are linear recurrence sequences as well. Furthermore if both $\lu$ and $\lv$ are simple, then so is the point-wise sum and product. 
\end{proposition}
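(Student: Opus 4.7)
The plan is to invoke the exponential polynomial representation of linear recurrence sequences and show that the class of exponential polynomials (with constant coefficients, in the simple case) is closed under both sum and product.

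First, I would recall that $\lu$ and $\lv$ admit representations
\[
u_n=\sum_{i=1}^p C_i(n)\Lambda_i^n,\qquad v_n=\sum_{j=1}^q D_j(n)M_j^n,
\]
where $\Lambda_i$ and $M_j$ are the (distinct) characteristic roots of the two sequences and $C_i, D_j$ are polynomials with algebraic coefficients. Conversely, any sequence given by an expression of the form $\sum_k E_k(n)\,N_k^n$ with $E_k$ a polynomial and $N_k$ algebraic is an \lrs: its characteristic polynomial is $\prod_k (x-N_k)^{1+\deg E_k}$, and expanding that product yields an explicit recurrence.

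For the sum, I would simply write
\[
u_n+v_n=\sum_{i=1}^p C_i(n)\Lambda_i^n+\sum_{j=1}^q D_j(n)M_j^n,
\]
collect terms whose roots coincide (so that if $\Lambda_i=M_j$, the corresponding coefficient becomes $C_i(n)+D_j(n)$), and observe that the result is again an exponential polynomial, hence an \lrs by the remark above. For the product, I would distribute:
\[
u_n v_n=\sum_{i=1}^p\sum_{j=1}^q C_i(n)D_j(n)\,(\Lambda_i M_j)^n,
\]
collect terms whose roots $\Lambda_i M_j$ coincide, and note that products and sums of polynomials in $n$ remain polynomials in $n$. Thus $\sq{u_n v_n}$ is also an exponential polynomial and hence an \lrs.

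For the additional claim about simplicity, recall that an \lrs is simple precisely when the coefficients in its (unique) exponential polynomial representation are constants. If $C_i$ and $D_j$ are all constants, then the coefficient $C_i+D_j$ obtained after collecting equal roots in the sum is still a constant, and so is the coefficient of $(\Lambda_i M_j)^n$ in the product, namely $\sum_{(i',j'):\Lambda_{i'}M_{j'}=\Lambda_iM_j} C_{i'}D_{j'}$. Hence $\sq{u_n+v_n}$ and $\sq{u_n v_n}$ have exponential polynomial representations with constant coefficients, which means their minimal recurrences have only simple characteristic roots. The only subtle point — and the one I would take care to state explicitly — is that after combining like terms some of the new coefficients could vanish, but this only reduces the set of roots in the minimal representation; it cannot introduce higher multiplicity. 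No genuine obstacle is expected; the proof is essentially a bookkeeping exercise once the exponential polynomial viewpoint is in hand.
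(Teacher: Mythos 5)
Your proof is correct and takes essentially the same approach as the paper: the paper gives no detailed argument, remarking only that the claim ``can readily be seen when the sequences are given as exponential polynomials,'' and your write-up is precisely that verification, using the same characterisation of simplicity (constancy of the coefficients $C_i$ in the unique exponential polynomial representation) that the paper states in Section~2. Your explicit note that cancellation after collecting like roots can only remove roots, never create multiplicity, is a worthwhile addition rather than a deviation.
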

As a consequence of this proposition, given \lrs $\lu^{(1)},\ldots,\lu^{(m)}$, and a polynomial $F\in\intg[x_1,\ldots,x_m]$, the sequence $\lu(F):=\sq{F(u_n^{(1)},\ldots,u_n^{(m)})}$ is a \lrs. It follows that we can have predicates that are polynomial inequalities. In other words, for polynomials $F_1,\ldots, F_k$, we can construct the sequences $\lu(F_1),\ldots,\lu(F_k)$ and apply \Cref{th:extension1}. 

We can go one step further, and define predicates which are equal to membership in a semi-algebraic set (\Cref{sec:foth}). We explain this more precisely.
Let 
\begin{align*}
  S_1,\ldots, S_k\subseteq \rel^{m},
\end{align*}
be semi-algebraic sets. Define $\mathbf{S_i}$, $i\in\set{1,\ldots, k}$, a predicate on naturals, to be true for $n\in\nat$ if and only if
\begin{align*}
  \left(u_n^{(1)}, \ldots, u_n^{(m)}\right) \in S_i. 
\end{align*}
The \emph{zone description} of $\lu^{(1)},\ldots,\lu^{(m)}$ with respect to $\mathbf{S_1},\ldots, \mathbf{S_k}$ is the infinite word $\tau$ over the alphabet~$\pow(\set{\mathbf{S_1},\ldots,\mathbf{S_k}})$, defined for all $n\in\nat$ as:
\begin{align*}
  \text{$\tau_n$ is the subset of predicates $\set{\mathbf{S_1},\ldots,\mathbf{S_k}}$ that are true in $n$.}
\end{align*}

Since quantifiers can be eliminated in first order logic of real closed fields, membership in a semi-algebraic set reduces to fulfilling a finite set of polynomial inequalities. As we have already shown how we are able to apply \Cref{th:extension1} on predicates that are polynomial inequalities we have the following theorem.

\begin{theorem}
  \label{th:extension2}
  Given semi-algebraic sets $S_1,\ldots,S_k$, a prefix-independent $\omega$-regular language $\mathcal L$, and simple linear recurrence sequences $\lu^{(1)},\ldots,\lu^{(m)}$, it is decidable whether the zone description of the sequences with respect to $\mathbf{S_1},\ldots,\mathbf{S_k}$ belongs to $\mathcal L$. 
\end{theorem}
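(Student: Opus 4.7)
The plan is to reduce the zone-description problem to the product-of-sign-descriptions setting of Theorem~\ref{th:extension1}. First, by quantifier elimination in the first-order theory of real closed fields (cf.~\Cref{sec:foth}), each semi-algebraic set $S_i \subseteq \rel^m$ admits an effectively computable description as a quantifier-free Boolean combination of polynomial inequalities in $\intg[x_1,\ldots,x_m]$ of the form $F_{i,j}(x_1,\ldots,x_m) \;\square\; 0$ with $\square \in \{<,>,=\}$. Let $F_1, \ldots, F_N$ be the list of all polynomials appearing across all the $S_i$.

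Next, for each $r \in \{1,\ldots,N\}$, form the sequence
\[
  \lu(F_r) \;\defeq\; \sq{F_r\!\left(u_n^{(1)},\ldots,u_n^{(m)}\right)}.
\]
By~\Cref{prop:lrs closure}, the class of simple \lrs is closed under pointwise sum and product, and it contains all constant sequences, so each $\lu(F_r)$ is again a simple \lrs, and its description can be effectively computed from the descriptions of $\lu^{(1)},\ldots,\lu^{(m)}$ and $F_r$. Let $\sigma^{(r)} \in \{-,0,+\}^\omega$ be its sign description, and let $\tilde\sigma \in (\{-,0,+\}^N)^\omega$ be the product of these sign descriptions.

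Now observe that for every $n$, knowing the letter $\tilde\sigma_n$ is enough to determine, for each $i \in \{1,\ldots,k\}$, whether $(u_n^{(1)},\ldots,u_n^{(m)}) \in S_i$: one simply evaluates the quantifier-free Boolean combination describing $S_i$ using the signs $\sgn(F_{i,j}(u_n^{(1)},\ldots,u_n^{(m)}))$, which are coordinates of $\tilde\sigma_n$. Consequently, there is a computable letter-to-letter map
\[
  \Phi \st \{-,0,+\}^N \longrightarrow \pow(\{\mathbf{S_1},\ldots,\mathbf{S_k}\})
\]
such that the zone description $\tau$ equals the pointwise image $\Phi(\tilde\sigma)$. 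Since $\omega$-regular languages are closed under inverse letter substitution (and this closure preserves prefix-independence), the preimage $\mathcal{L}' \defeq \Phi^{-1}(\mathcal{L})$ is an effectively computable prefix-independent $\omega$-regular language over the alphabet $\{-,0,+\}^N$. Hence $\tau \in \mathcal{L}$ if and only if $\tilde\sigma \in \mathcal{L}'$, and the latter is decidable by~\Cref{th:extension1} applied to the simple \lrs $\lu(F_1),\ldots,\lu(F_N)$.

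The main obstacle, such as it is, is the closure step: one must verify that all the derived sequences $\lu(F_r)$ are indeed \emph{simple} \lrs so that~\Cref{th:extension1} applies. This is guaranteed by~\Cref{prop:lrs closure}; once that closure is in hand, the rest of the argument is a clean reduction using quantifier elimination and the standard letter-substitution closure of $\omega$-regular languages, with prefix-independence preserved because $\Phi$ acts letterwise.
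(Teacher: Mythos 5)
Your proposal is correct and follows essentially the same route as the paper: quantifier elimination to reduce each $S_i$ to a Boolean combination of polynomial sign conditions, Proposition~\ref{prop:lrs closure} to form the simple \lrs $\lu(F_r)$, and an application of Theorem~\ref{th:extension1} to the resulting product of sign descriptions. The only difference is that you make explicit the letter-to-letter map $\Phi$ and the inverse-substitution closure of prefix-independent $\omega$-regular languages, steps the paper leaves implicit; this is a faithful elaboration rather than a different argument.
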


\section*{Acknowledgements}
   Shaull Almagor has received funding from the European Union's
   Horizon 2020 research and innovation programme under the Marie
   Sk{\l}odowska-Curie grant agreement No.~837327.
Jo\"el Ouaknine is supported by ERC grant AVS-ISS (648701) and DFG grant 389792660 as part of
TRR 248 (see \texttt{https://perspicuous-computing.science}).
James Worrell is supported by EPSRC Fellowship EP/N008197/1.
%   This material is based upon work supported by the
%   \grantsponsor{GS100000001}{National Science
%     Foundation}{http://dx.doi.org/10.13039/100000001} under Grant
%   No.~\grantnum{GS100000001}{nnnnnnn} and Grant
%   No.~\grantnum{GS100000001}{mmmmmmm}.  Any opinions, findings, and
%   conclusions or recommendations expressed in this material are those
%   of the author and do not necessarily reflect the views of the
%   National Science Foundation.
%\end{acknowledgments}

%% Bibliography
\bibliographystyle{apalike}
\bibliography{bibliography}

%% Appendix
\appendix
\section{First-order theory of real closed fields}
\label{sec:foth}
In the first-order logic of real closed fields the atomic formulas are of the form:
\begin{align*}
  p(x_1,\ldots,x_k)\sim 0,
\end{align*}
where $p$ is a polynomial in $\intg[x_1,\ldots,x_k]$ and $\sim\in\set{>,=}$. In this logic  we are allowed to quantify over real numbers, and use the Boolean connectives. Subsets of $\rel^k$ that satisfy a formula $\Phi(x_1,\ldots,x_k)$ of this logic are called \emph{semi-algebraic sets.} The first-order logic of real closed fields admits effective quantifier elimination, a fact known as the Tarski's theorem, which we state as follows.
\begin{theorem}[{\cite[Theorem 37]{tarski1951decision}}]
  \label{th:tarski}
  There is an algorithm that inputs a sentence $\Phi$ (a formula without free variables) from the language above, and returns yes if and only if $\Phi$ is true over the real numbers. 
\end{theorem}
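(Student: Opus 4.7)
The plan is to establish Tarski's theorem by proving the stronger statement of effective \emph{quantifier elimination}: there is an algorithm that inputs any formula $\Phi(x_1,\ldots,x_k)$ in the language of ordered fields with integer coefficients and outputs a quantifier-free formula $\Phi'(x_1,\ldots,x_k)$ with the same free variables such that $\Phi$ and $\Phi'$ define the same subset of $\rel^k$. Once this is in hand, the theorem follows immediately: a sentence (i.e.\ $k=0$) has a quantifier-free equivalent that is a Boolean combination of atomic formulas $c \sim 0$ with $c \in \intg$, whose truth is evaluated by inspection. Since $\neg \exists$ and $\forall$ are interdefinable, and since arbitrary formulas can be put in prenex normal form, it suffices to eliminate one existential quantifier at a time from a quantifier-free body.

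First I would reduce the problem to the following parametric form. Given a quantifier-free formula $\psi(x, y_1, \ldots, y_k)$, put $\psi$ in disjunctive normal form and push $\exists x$ through the disjunction, so that we may assume $\psi$ is a conjunction of atomic formulas $p_i(x) \sim_i 0$ with $\sim_i \in \{>, =, \neq\}$, where each $p_i \in \intg[y_1, \ldots, y_k][x]$ is viewed as a polynomial in $x$ with parameters $y_1, \ldots, y_k$. The task is then to produce a quantifier-free formula $\chi(y_1, \ldots, y_k)$ in the parameters that is equivalent to $\exists x.\, \psi(x, y_1, \ldots, y_k)$.

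The central algebraic tool I would use is the \emph{Sturm–Tarski theorem}: for polynomials $p, q \in \rel[x]$ with $p$ nonzero, the quantity $\sum_{\alpha:\, p(\alpha)=0} \sgn(q(\alpha))$ can be computed as the difference $V(-\infty) - V(+\infty)$ of sign variations in the generalised Sturm sequence associated to $(p, p'q)$, and this sequence is constructed by iterated polynomial remaindering. Using this, one shows the following parametric realisability lemma: for any finite family of polynomials $p_1, \ldots, p_n \in \intg[\mathbf{y}][x]$ and any sign condition $\boldsymbol{\sigma} \in \{-,0,+\}^n$, there is a quantifier-free formula $R_{\boldsymbol{\sigma}}(\mathbf{y})$ in the parameters that holds iff some real $x$ realises $\sgn(p_i(x, \mathbf{y})) = \sigma_i$ for all $i$. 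The formula $R_{\boldsymbol{\sigma}}$ is obtained by encoding the Sturm sequence computations via an explicit symbolic pseudo-remainder procedure; the inequalities $\sim_i$ in $\psi$ then pick out the relevant sign conditions, and $\chi(\mathbf{y})$ is the disjunction of the corresponding $R_{\boldsymbol{\sigma}}$'s.

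The main obstacle, and the reason the proof is more delicate than the above sketch suggests, is that all the Sturm sequence constructions and polynomial-degree arguments depend on which coefficients of the $p_i(x, \mathbf{y})$ vanish, and this in turn depends on $\mathbf{y}$. I would handle this by performing a parametric case split: for each subset $J$ of the coefficient set, adjoin the quantifier-free conditions specifying exactly which coefficients vanish, and on each such cell carry out the Sturm-based elimination in constant degree. Assembling the disjunction over all sign patterns of coefficients yields a quantifier-free $\chi(\mathbf{y})$ equivalent to $\exists x.\, \psi$, and iterating block-by-block along the prenex quantifier prefix yields the effective quantifier-elimination procedure, and thus the decision procedure claimed in the statement.
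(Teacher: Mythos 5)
The paper does not actually prove this statement: it is imported as a black box, cited to Theorem 37 of Tarski's 1951 monograph, so there is no internal proof to compare against. Your sketch is a correct outline of the classical effective quantifier-elimination argument (prenex reduction, elimination of a single existential via Sturm--Tarski queries, and parametric case splits on vanishing coefficients), which is essentially the method of the cited source itself; the only point you gloss over is that realisability of a purely strict sign condition, which holds only on open intervals rather than at roots, requires sampling those intervals (e.g.\ via roots of the derivative of the product polynomial together with the signs at $\pm\infty$ read off from leading coefficients), a standard step that fits within your framework.
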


There is a natural first-order interpretation of the field of complex numbers into the field of real numbers, where every complex variable $z=x+\ii y$ is replaced by two real variables $x$ and $y$.

\begin{example}
  Consider the polynomial $p(z):=z^3+5z$. Its roots are $\lambda_1:=0$, $\lambda_2:=\ii\sqrt 5$, and $\lambda_3:=-\ii\sqrt 5$. The algebraic number $\lambda_2$ can be identified with the formula (to be interpreted over $\com$) $\phi_2(z)$ which says $p(z)=0$ and $\mathrm{Im}(z)>2$. Using the interpretation alluded above, we replace $\phi_2(z)$ with $\phi_2'(x,y)$ (where $x$ and $y$ range over reals) which says $p_1(x,y)=p_2(x,y)=0$ and $y>2$ where
  \begin{align*}
    p_1(x,y)\defeq x^3-3xy^2+5x,\qquad p_2(x,y)\defeq 3x^2y-y^3+5y.
  \end{align*}
  Clearly
  \begin{align*}
    \set{z\in\com\st p(z)=0\text{ and }\mathrm{Im}(z)>2}=\set{x+\ii y\in\com\st p_1(x,y)=p_2(x,y)=0\text{ and }y>2}. 
  \end{align*}
\end{example}

The intervals where the normalized roots of the characteristic polynomial associated to a \lrs lay, can be computed, therefore we assume that $\lambda=(\lambda_1,\ldots,\lambda_d)$ are given by formulas as in the example~above. One can take products and sums of such numbers, \ie compute a different formula which defines the product or sum. We will prove that the set $\torus_\lambda$ (defined in \Cref{subsec:walks}) is semi-algebraic. After this, we will give a full proof of \Cref{lem:fodef}.

\begin{lemma}
  \label{lem:torus semialgebraic}
  The set
  \begin{align*}
    \left\{(x_1,y_1,\ldots, x_d,y_d)\in\rel^{2d}\st (x_1+\ii y_1,\ldots,x_d+\ii y_d)\in\torus_\lambda \right\}
  \end{align*}
  is semi-algebraic.
\end{lemma}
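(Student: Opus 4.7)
The plan is to exhibit the set as the zero set of finitely many polynomial equations in the real variables $x_1, y_1, \ldots, x_d, y_d$. The first step is to replace the defining condition ``$z_1^{v_1} \cdots z_d^{v_d} = 1$ for all $\mathbf{v} \in \mathcal{M}_\lambda$'' — which a priori involves infinitely many constraints — by a finite conjunction. This is possible because $\mathcal{M}_\lambda$ is a subgroup of $\intg^d$, and hence a finitely generated free abelian group. Fixing a $\intg$-basis $\mathbf{v}^{(1)}, \ldots, \mathbf{v}^{(r)}$ of $\mathcal{M}_\lambda$, any $\mathbf{z} \in \torus^d$ satisfying $\prod_i z_i^{v^{(j)}_i} = 1$ for each $j = 1, \ldots, r$ automatically satisfies the corresponding equation for every integer linear combination of the $\mathbf{v}^{(j)}$, i.e.\ for every element of $\mathcal{M}_\lambda$.

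Next I would rewrite each of the $r$ surviving relations as an equation between polynomials. Writing $\mathbf{v}^{(j)} = \mathbf{v}^{(j)+} - \mathbf{v}^{(j)-}$ with $\mathbf{v}^{(j)+}, \mathbf{v}^{(j)-} \in \nat^d$ having disjoint supports, the relation $\prod_i z_i^{v^{(j)}_i} = 1$ is equivalent (since we are on $\torus^d$, where each $z_i$ is invertible) to
\[
  \prod_{i=1}^d z_i^{v^{(j)+}_i} \;=\; \prod_{i=1}^d z_i^{v^{(j)-}_i}.
\]
Substituting $z_i = x_i + \ii\, y_i$, expanding the products, and equating real and imaginary parts yields two polynomial equations in $x_1, y_1, \ldots, x_d, y_d$ with integer coefficients for each $j$.

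Finally, the constraint that $\mathbf{z}$ lie on the torus $\torus^d$ itself is semi-algebraic: it is the conjunction $\bigwedge_{i=1}^{d} (x_i^2 + y_i^2 - 1 = 0)$. Combining this with the $2r$ polynomial equations obtained in the previous step gives a quantifier-free first-order formula whose satisfaction set is exactly $\torus_\lambda$ (identified with its image in $\rel^{2d}$), establishing that the set is semi-algebraic. The only real content of the argument is the reduction to finitely many relations via finite generation of $\mathcal{M}_\lambda$; the rest is routine algebraic bookkeeping. I note that \emph{effectively} producing such a basis — which will be needed downstream in the proof of \Cref{lem:fodef} — requires Masser's theorem, but no effectiveness is claimed in the present lemma, so existence of the basis is enough.
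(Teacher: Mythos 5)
Your proof is correct and follows essentially the same route as the paper's: both reduce the infinitely many multiplicative relations to a finite basis of the subgroup $\mathcal{M}_\lambda$ of $\intg^d$, and then translate each basis relation into real polynomial equations via the substitution $z_i = x_i + \ii y_i$. If anything, your write-up is slightly more careful on two points the paper glosses over --- clearing negative exponents using invertibility on $\torus^d$, and stating the constraints $x_i^2 + y_i^2 = 1$ explicitly --- and you correctly observe that Masser's bound, which the paper invokes in this proof, is needed only for the \emph{computability} of the basis used downstream in \Cref{lem:fodef}, not for semi-algebraicity itself.
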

\begin{proof}
  The set of multiplicative relations of $\lambda=(\lambda_1,\ldots,\lambda_d)$ is:
  \begin{align*}
    \mathcal{M}_\lambda\defeq\set{\mathbf{v}\in\intg^d\st\lambda_1^{v_1}\lambda_2^{v_2}\cdots \lambda_d^{v_d}=1}. 
  \end{align*}
  This is an Abelian subgroup of $(\intg^d,+)$, hence it has a finite basis $B\subset \intg^d$. Masser gave an explicit upper bound on the components of this basis in \cite[Section 4]{mas88}. As a consequence, $B$ is computable: to test whether some $(b_1,\ldots,b_d)$ is in $B$, or equivalently whether $\lambda_1^{b_1}\cdots \lambda_d^{b_d}=1$, use \Cref{th:tarski}. We recall the definition of $\torus_\lambda$:
\begin{align*}
  \torus_\lambda\defeq\set{\mathbf{z}\in\torus^d\st z_1^{v_1}z_2^{v_2}\cdots z_d^{v_d}=1\text{ for all }\mathbf{v}\in\mathcal{M}_\lambda}.
\end{align*}
Since $B$ is a basis of $\mathcal{M}_\lambda$, we can replace $\mathcal{M}_\lambda$ by $B$ in the definition above. So $\torus_\lambda$ is the set of all $(x_1,y_1,\ldots,x_d,y_d)$ such that for every $(b_1,\ldots,b_d)\in B$ we have:
\begin{align*}
  (x_1+\ii y_1)^{b_1}(x_2+\ii y_2)^{b_2}\cdots (x_d+\ii y_d)^{b_d}-1=0.
\end{align*}
Since this is a finite set of equations, the lemma follows. 
\end{proof}

We give a full proof of \Cref{lem:fodef}.

\fodeflem*
\begin{proof}
  Let $P,m\in\nat$ and $w=w(1)w(2)\cdots w(m)\in\set{-,0,+}^*$ be such that $w(i)$ are factors of length $P$.
  We recall the definitions from \Cref{subsec:proof of simple almost periodic}.
  \begin{align*}
  U(w)\defeq\set{\mathbf{x}\in\torus_\lambda\st g(\mathbf{x})=w(1), g\left(s(\mathbf{x})\right)=w(2),\ldots, g\left(s^{m-1}(\mathbf{x})\right)=w(m)},
\end{align*}
where $g\st \torus_\lambda\to \set{-,0,+}^P$ is the composition of $f$ and $\sgn$ applied component-wise, $f$ is the linear map $f\st \torus_\lambda\to\rel^P$ that we get after applying \Cref{lem:same sign} to every subsequence $\lu_\ell$, and finally $s$ maps $(x_1,\ldots,x_d)$ to $(\lambda_1x_1,\ldots,\lambda_dx_d)$. Since we have formulas for $\lambda_i$, we can compute formulas for $s^k(\mathbf x)$, for any $k\in\nat$. Inspecting the proof of \Cref{lem:same sign}, reveals that the constants $z_1,\ldots, z_d$ have computable formulas, hence the same holds for $f(s^k(\mathbf x))$, \ie we can compute $\varphi_k(\mathbf x, \mathbf y)$, such that 
\begin{align*}
  \varphi_k(x_1,\ldots,x_d,y_1,\ldots,y_P)\Leftrightarrow \mathbf y = f(s^k(\mathbf x)). 
\end{align*}
To require that $g(s^k(\mathbf x))=v$ for some $v\in\set{-,0,+}^P$, we use the formula $\varphi_k$ and ask that $y_i\sim 0$ where $\sim\in\set{<,=,>}$ depending on whether $v_i$ is ``$-$'', ``$0$'', or ``$+$''. Since the set of $\mathbf x\in\torus_\lambda$ is semi-algebraic thanks to \Cref{lem:torus semialgebraic}, and we have constructed formulas that require $g(s^k(\mathbf x))=v$, the lemma is proved. 
\end{proof}
%%% Local Variables:
%%% mode: latex
%%% TeX-master: "main"
%%% End:

\end{document}